\documentclass[11pt]{article}

\usepackage{arxiv}

% \usepackage{darkmode}
% \enabledarkmode

\usepackage{graphicx} % Required for inserting images
\usepackage{parskip} %paragraphs no-indent with line break between

% \usepackage{lineno}
% \linenumbers

% \usepackage{titlesec} 
% \usepackage{ulem}
% \usepackage{bbm}

\usepackage[font=small]{caption}
% scriptsize
% footnotesize
% small
% normalsize
% large
% Large

\usepackage{algorithm}
\usepackage{algorithmic}
\usepackage{amssymb} 
\usepackage{amsmath}
\usepackage{booktabs}
\usepackage[pdfencoding=auto, psdextra]{hyperref}
\usepackage{amsthm}
\usepackage{thm-restate}
\newtheorem{theorem}{Theorem}

\newtheorem{definition}{Definition}
\newtheorem{Lemma}{Lemma}

\usepackage{tikz}
\usetikzlibrary{decorations.pathmorphing,patterns}
\usetikzlibrary{shadows}
\usetikzlibrary{shapes}
\usetikzlibrary{decorations}
\usetikzlibrary{arrows,decorations.markings} 
\usepackage{csvsimple}
\usepackage{color, colortbl}
\usepackage{array}

\usepackage{verbatim}
\usepackage{parskip}% http://ctan.org/pkg/parskip

\usepackage{float} % Required for the 'H' float option
\usepackage{multirow}
\usepackage{caption}
\usepackage{subcaption}
\usepackage{graphicx}
% \usepackage[]{geometry}

% \geometry{margin=1in}

%tikz stuffs
\usepackage{mathtools}
\usepackage{tikz}
\usetikzlibrary{shapes.geometric}
\usepackage{tikz-3dplot}
\usetikzlibrary{arrows,calc,backgrounds}
\usetikzlibrary{decorations.markings}
\usetikzlibrary{backgrounds}
\usepackage{tikz-3dplot}
\usepackage{bbm}
\usetikzlibrary{decorations.pathmorphing,patterns}
\usetikzlibrary{shadows}
\usetikzlibrary{shapes}
\usetikzlibrary{decorations}
\usetikzlibrary{arrows,decorations.markings} 
\usepackage{csvsimple}

\newcount\Comments  % 0 suppresses notes to selves in text
\Comments = 1
\newcommand{\kibitz}[2]{\ifnum\Comments=1{\color{#1}{#2}}\fi}

 %%%this is a custom function I like to use for comment

\title{Characterizing Nash Equilibria in Zero-Sum Games: A Physics-Inspired, Parallelizable Approach with a Linear Number of Gradient Queries}

\author{
 Taemin Kim \\
  Industrial and Systems Engineering\\
  Rensselaer Polytechnic Institute\\
  \texttt{kimt12@rpi.edu}\AND James P. Bailey \\
  Industrial and Systems Engineering\\
  Rensselaer Polytechnic Institute\\
  \texttt{bailej6@rpi.edu} 
  }

\date{}
\begin{document}

\maketitle

\begin{abstract}
We study online optimization methods for zero-sum games, a fundamental problem in adversarial learning in machine learning, economics, and many other domains.
Traditional methods approximate Nash equilibria (NE) using either regret-based methods (time-average convergence) or contraction-map-based methods (last-iterate convergence).
We propose a new method based on Hamiltonian dynamics in physics and prove that it can characterize the set of NE in a finite (linear) number of iterations of alternating gradient descent in the unbounded setting, modulo degeneracy, a first in online optimization.
Unlike standard methods for computing NE, our proposed approach can be parallelized and works with arbitrary learning rates, both firsts in algorithmic game theory. 
Experimentally, we support our results by showing our approach drastically outperforms standard methods.  
\end{abstract}

% keywords can be removed
\keywords{Zero-sum Games \and Efficient Algorithm for Nash Equilibria \and Online Optimization}

% 90C47, 91A05, 91A26, 68Q32
% 90C47
% Minimax problems in mathematical programming
% 91A05
% 2-person games
% 91A26
% Rationality and learning in game theory
% 68Q32
% Computational learning theory 

\section{Introduction}\label{sec:intro}
Zero-sum games are a fundamental concept in game theory, where the gain of one agent corresponds to the loss of another, resulting in a total net payoff of zero. 
They are typically modeled using matrices that represent agents' payoffs for different strategies. 
Zero-sum games are fundamentally based on the minimax theorem, where agent 1 aims to maximize its gain, while agent 2 tries to minimize agent 1's gain.
\begin{equation}
        \max_{x \in \mathbb{R}^{k}}\min_{y \in \mathbb{R}^{k}} \left\langle x,Ay\right\rangle 
    \tag{Minimax Optimization}\label{Minimax}
\end{equation}
Von Neumann's proof of the \ref{MinimaxThm} became the foundation of game theory \cite{vonNeumann28}. 
It shows that in a two-person zero-sum game, the optimal utility of \ref{Minimax} is the same for both agents, given that they both play optimally \cite{vonNeumann28, roughgarden2016twenty}\footnote{Von Neumann's proof was for compact strategy spaces, but the results naturally extend to the unbounded setting if $\max$ and $\min$ are well-defined, i.e., if an optimal solution exists.}.
\begin{equation}
        \max_{x \in \mathbb{R}^{k}}\min_{y \in \mathbb{R}^{k}} \left\langle x,Ay\right\rangle = \min_{y \in \mathbb{R}^{k}}\max_{x \in \mathbb{R}^{k}} \left\langle x,Ay\right\rangle 
    \tag{Minimax Theorem}\label{MinimaxThm}
\end{equation}
Zero-sum games have provided insight across various fields, including economics, computer science, machine learning, and finance.
In economics, they are used in bargaining and auction theory to analyze the dynamics of entities competing for limited resources in competitive markets \cite{pu2020online}. 
Zero-sum games also give a foundational framework for modeling artificial intelligence systems that adopt adversarial learning, such as generative adversarial networks (GANs) \cite{goodfellow2014generative,daskalakis2017training}. 
They have been applied in various fields such as image processing and computer vision to improve resolution or generate images that look similar to those of the real world \cite{li2016precomputed, zhou2020image_resolution, brock2018large}. 
In healthcare, they are used to generate images that support disease diagnosis, for example, by segmenting brain tumor images \cite{cirillo2020brain_tumor}, improving the accuracy of synthesized MRI images \cite{zhao2020bayesian}, and aligning different 3D images into the same coordinate frame \cite{zhang2020deform}.

Although it was proved in \cite{adler2013equivalence} that a \ref{Minimax} problem can be solved via linear programming, iterative first-order methods are often preferred. 
For example, \cite{nesterov2005excessive} used first-order methods for solving non-smooth convex problems,
\cite{nemirovski2004prox} implemented prox-type methods, and, more recently, \cite{mokhtari2020convergence} applied optimistic gradient descent (OGDA) and extragradient (EG) methods for saddle point problems, achieving the convergence rate $O(1/T)$. 
The algorithmic ideas of nonlinear programming used in \cite{nemirovski2004prox} have been reinterpreted in online optimization frameworks, particularly no-regret learning \cite{daskalakis2011near}, which are commonly used due to their simplicity of application and robustness of the solution \cite{daskalakis2019last}.
In these approaches, the goal is to find Nash equilibria (NE) by playing games repeatedly while iteratively updating strategies based on past decisions and the opponent's choices. 
Although most of the works assume fixed payoff matrices, the analysis has been extended to time-varying payoff matrices as well \cite{duvocelle2023multiagent}.
Two common methods to evaluate convergence to NE are time-average convergence and last-iterate convergence.
These methods assess how strategies converge to NE over time in specific algorithms.

\textbf{Time-average convergence} refers to the process of showing that the time-average of the strategies, $(x^t,y^t)_{t=1}^\infty$, converges to the set of NE. 
Informally, time-average convergence implies $\lim_{T\to \infty} \frac{1}{T} \sum_{t=0}^{T} (x^t, y^t) \rightarrow (x^*, y^*)$ where $(x^*, y^*)$ is a NE. 
The standard method of proving time-average convergence requires showing the \textit{regret} of the strategies grows sublinearly (see, e.g. \cite{cesa2006prediction,roughgarden2016twenty}). 
Regret measures the difference between the cumulative loss of the chosen actions and the cumulative loss of the best action.
The two most well-known algorithms that achieve time-average convergence are the multiplicative weights update algorithm (MWU) and the (simultaneous) gradient descent update algorithm (GD). 
Both of these algorithms achieve $O(1/\sqrt{T})$ time-average convergence when the strategy space is compact and learning rates decay at rate $O(1/\sqrt{T})$ \cite{cesa2006prediction}.
More recently, \cite{Bailey20Regret} shows that GD in 2-agent, 2-strategy games achieves $O(1/\sqrt{T})$ time-average convergence to the set of NE with arbitrary learning rates. 

Perhaps surprisingly, it has recently been shown that MWU, GD, and the more general class of follow-the-regularized-leader (FTRL) algorithms result in strategies that diverge from the set of NE \cite{Bailey18Divergence}. 
In contrast, methods like optimistic and alternating gradient descent either converge to or cycle around the set of Nash equilibria, respectively. 
Both of these algorithms have been shown to have smoothness properties that result in faster ($O(1/T)$) time-average convergence to the set of NE using sufficiently small fixed learning rates \cite{bailey2021left,mokhtari2020convergence}.

\textbf{Last-iterate convergence} means the most recent strategies approach NE, informally implying $\lim_{T\to \infty} (x^t, y^t) \rightarrow (x^*, y^*)$ where $(x^*, y^*)$ is a NE. 
Standard proof techniques rely on the establishment of a contraction map.
Contraction maps rely on showing that the update rules contract to a single attractor -- specifically, the NE. 
The two most well-known algorithms that guarantee last-iterate convergence are optimistic gradient descent (OGD) and extragradient descent (EG). 
It is shown in \cite{golowich2020iterate} that last-iterate strategies of EG converge to NE at a rate $O(1/\sqrt{T})$, which is slower than the time-average convergence rate of $O(1/T)$.
This gap in convergence rate drives the need for new algorithms with faster last-iterate convergence.
This is achieved in \cite{yoon2021accelerated} by introducing accelerated EG in unconstrained settings. 
Further improvement is made in \cite{pmlr-v235-cai24f}, which extends accelerated EG to constrained spaces. 
Another algorithm is Hamiltonian GD in \cite{abernethy2021last}, which achieves linear last-iterate convergence while traditional GD does not guarantee time-average convergence. 
In general, last-iterate algorithms are more suitable in real-time online learning \cite{daskalakis2019last}. 
For example, \cite{daskalakis2018limit} showed that OGD has a larger set of stable limit points than GD, while avoiding unstable limit points and ensuring a more consistent convergence to NE.

\subsection{Our Contributions}
We propose a third method for approximating NE in zero-sum and coordination games. 
Our method is based on a recent connection between Hamiltonian dynamics in physics and online optimization in games \cite{Bailey19Hamiltonian}.
Specifically, we leverage the physics of online optimization to show that each iteration of alternating gradient descent can be used to yield a linear equation describing the set of NE.
This means that, modulo degeneracy, we can precisely \underline{characterize} the set of NE in a finite number of iterations of alternating gradient descent, corresponding to a first in online optimization in games.
We present three different models that rely on different types of information from each agent to extend the possible applications of our approach. 
Each model yields a linear system in which the NE is an unknown variable. 
Unlike existing methods that require alternating gradient updates until time-average or last-iterate strategies converge to NE, our approach requires a finite number of updates to generate the required number of linear equations.
We remark that we do not claim to solve for the NE in a finite number of iterations --- rather, we generate a system of equations $A'(x^*,y^*)=b'$ that can be used to solve for the NE using feedback from alternating gradient descent. 

We also introduce a parallelization technique that allows the use of arbitrary learning rates, challenging the previously held belief that small learning rates are necessary for learning NE. 
Experimentally, we show each method drastically outperforms existing methods in three categories: number of training points needed, the empirical run time (including the additional time to solve $A'(x^*,y^*)=b'$), and accuracy.
We remark that our results extend to polymatrix zero-sum games through standard transformations, e.g., \cite{cai2016zero,bailey2021left}.

\subsection{Outline of Paper}
In Section \ref{sec:preliminaries}, we introduce the basics of online optimization in games, including how learning dynamics in zero-sum games are related to Hamiltonian systems in physics. 
In Section \ref{sec:InvariantFunctions}, we show that the time-invariant function can be derived when strategies are updated using discrete-time alternating gradient descent in both zero-sum and coordination games.
In Section \ref{sec:ValidEquations}, we show that each iteration of alternating gradient descent can be used to generate a linear equation characterizing the set of NE --- this section includes three different models that rely on different types of information from each agent. 
In Section \ref{sec:Base_Experiments}, we construct linear systems from the derived linear equations to solve directly for the set of NE. 
We evaluate the performance of the models through experiments, measuring relative errors to assess how accurately the set of NE solutions can be obtained. 
In Section \ref{sec:Improved_Experiments_Parallelization}, we propose a parallelization method that addresses the numerical instability issues observed in Section \ref{sec:Base_Experiments}.

\section{Preliminaries} \label{sec:preliminaries}
In this paper, two-agent games are studied in which agent 1 and agent 2 try to maximize their utilities while interacting with each other. 
Both agents choose their strategies $x$ and $y$ from a strategy space $\mathbb{R}^{k}$ receiving utilities of $\left\langle x, Ay - b_1 \right\rangle$ and $\left\langle y, Bx - b_2 \right\rangle$, respectively. 
The terms $b_1$ and $b_2$ are the linear costs associated with the playing strategies $x$ and $y$, respectively. 
This results in the following game:
\begin{equation}
    \begin{split}
        \max_{x\in \mathbb{R}^{k}} \left\langle x, Ay - b_1 \right\rangle& \\
        \max_{y\in \mathbb{R}^{k}} \left\langle y, Bx - b_2 \right\rangle&.\hspace{.6in}
    \end{split}
    \tag{Two-Agent Games with Linear Costs}\label{eqn:Game}
\end{equation}

A point $(x^*, y^*)$ is a Nash equilibrium (NE) of a \ref{eqn:Game} where each agent cannot unilaterally improve its utility by deviating from it. 
NE satisfies the following conditions.
\begin{equation}
    \begin{split}
        \left\langle x^*, Ay^* - b_1 \right\rangle \geq \left\langle x, Ay^* - b_1 \right\rangle \quad \forall x \in \mathbb{R}^{k} \\
        \left\langle y^*, Bx^* - b_2 \right\rangle \geq \left\langle y, Bx^* - b_2 \right\rangle \quad \forall y \in \mathbb{R}^{k}
    \end{split}
    \tag{NE }\label{eqn:NE}
\end{equation}

The gradient of each agent's utility must be $0$ for $x^*$ and $y^*$ to be NE since the strategy space is unbounded, i.e., $Ay^* - b_1= 0$,  $Bx^* - b_2=0$. 
Otherwise, if these gradients are not zero, each agent can move their strategies in a beneficial direction to increase their utility, given that the other agent's strategies remain fixed.
Thus, an equivalent definition of NE is 
\begin{equation}
    \begin{split}
        Ay^* - b_1 = 0\phantom{.} \\
        Bx^* - b_2 = 0.
    \end{split}
    \tag{NE Conditions in Unbounded Setting}\label{eqn:NEconditions}
\end{equation}

Payoffs due to agents' interactions are captured by the payoff matrices $A$ and $B$. 
We say that a game is \textbf{zero-sum} if the net sum of payoffs due to interaction is zero, i.e., $\langle x, Ay\rangle + \langle y, Bx\rangle=0$ for all $x$ and $y$, i.e., if $B=-A^\intercal$. 
A game is a \textbf{coordination} game if $B=A^\intercal$, i.e., if both agents receive the same payoff due to interaction. 
We remark that learning algorithms for \textbf{general-sum} games (no restrictions on $A$,$B$), e.g., \cite{daskalakis2021nearoptimal}, are not studied in this paper.

\subsection{Unbounded Strategy Space vs. Probability Vectors}
\label{sec:UnboundedMotivation}

We remark that while many applications of games restrict agent strategies to probability vectors, i.e., $x\in \Delta^k:=\left\{x\in \mathbb{R}^k_{\geq 0}: \sum_{i=1}^k x_i=1\right\}$, the unbounded setting with linear costs is a powerful tool for motivating learning algorithms. 
In the setting with probability vectors, games are commonly studied without linear costs, i.e., 
\begin{equation}
    \begin{split}
        \max_{x\in \Delta^k} \left\langle x, Ay \right\rangle& \\
        \max_{y\in \Delta^k} \left\langle y, Bx  \right\rangle&.
    \end{split}
    \tag{Games with Probability Vectors}\label{eqn:ProbGame}
\end{equation}

In this section, we briefly discuss how understanding learning in \ref{eqn:Game} provides insight into \ref{eqn:ProbGame}. 

First, observe that the linear costs are recovered from \ref{eqn:ProbGame} after performing a variable substitution to remove the constraint $\sum_{i=1}^k x_i = 1$. 
Let $\tilde{x}^\intercal = (x_1, x_2, \dots, x_{k-1})^\intercal \in \mathbb{R}^{k-1}$. 
We can then express $x\in \Delta^k$ as $x^\intercal=(\tilde x, x_k = 1 - \left\langle \mathbbm{1}, \tilde{x}\right\rangle)^\intercal$ where $\mathbbm{1}$ is a $(k-1)$-dimensional column vector of 1's. 
Performing this variable substitution results in the new strategy space $\tilde \Delta^{k-1} =\{\tilde x\in \mathbb{R}^{k-1}_{\geq 0}: \left\langle \mathbbm{1}, \tilde{x}\right\rangle \leq 1\}$. 
Similarly, for agent 2, $y\in \Delta^k$ is expressed as $y^\intercal=(\tilde y, 1-\left\langle \mathbbm{1}, \tilde{y}\right\rangle)^\intercal$.
Applying both substitutions to agent 1's objective function in \ref{eqn:ProbGame} yields the equivalent objective
\begin{align*}
    \max_{x\in \Delta^k} \left\langle x, Ay \right\rangle& = \max_{\tilde x \in \tilde \Delta^{k-1}} \left\langle \begin{bmatrix}
    \tilde{x} \\
    1 - \mathbbm{1}^{\intercal} \tilde{x}
    \end{bmatrix},  A \begin{bmatrix}
    \tilde{y} \\
    1 - \mathbbm{1}^{\intercal} \tilde{y}
    \end{bmatrix}\right\rangle\\
     &=\max_{\tilde x\in \tilde \Delta^{k-1}} \left\langle\tilde{x}, (A_{[k-1],[k-1] }  - A_{[k-1],k}\mathbbm{1}^\intercal - \mathbbm{1} A_{k,[k-1]} + \mathbbm{1} A_{kk} \mathbbm{1}^\intercal)\tilde y\right\rangle  \\
     &\phantom{= \max_{\tilde x\in \tilde \Delta^{k-1}}} - \left\langle\tilde x, \mathbbm{1} A_{kk} - A_{[k-1],k}\right\rangle\\
     &\phantom{= \max_{\tilde x\in \tilde \Delta^{k-1}}} + \left\langle A_{k,[k-1]}^\intercal - \mathbbm{1}A_{kk} , \tilde y\right\rangle + A_{kk}
\end{align*}
where $A_{[k-1],[k-1]}$ is the first $k-1$ rows and columns of $A$, 
where $A_{[k-1],k}$ is the first $k-1$ rows and last column of $A$,  $A_{k, [k-1]}$ is the last  row and first $k-1$  columns of $A$, and $A_{kk}$ is the last row and last column of $A$. 
Observe that the last two terms (the last line) are independent of agent $1$'s decision, and therefore it suffices for agent $1$ to solve
\begin{align*}
    \max_{\tilde x \in \tilde \Delta^{k-1}} \langle \tilde x,  \tilde A\tilde y - \tilde b_1\rangle
\end{align*}
where $\tilde A= A_{[k-1],[k-1] }  - A_{[k-1],k}\mathbbm{1}^\intercal - \mathbbm{1} A_{k,[k-1]} + \mathbbm{1} A_{kk} \mathbbm{1}^\intercal$ and $\tilde b_1 =\mathbbm{1} A_{kk} - A_{[k-1],k}$. 
Performing the same process on agent 2's objective function yields 
\begin{align*}
    \max_{\tilde y \in \tilde \Delta^{k-1}} \langle \tilde y,  \tilde B\tilde x - \tilde b_2\rangle
\end{align*}
where $\tilde B$ and $\tilde b_2$ are defined analogously. 
We remark that it is straightforward to show that if $B=(\pm) A^\intercal $ then $\tilde B = (\pm)  \tilde A^\intercal $ (zero-sum and coordination games are preserved through this substitution). 
This recovers the linear costs given in \ref{eqn:Game}.

Finally, observe that the affine-hull of $\tilde \Delta^{k-1}$ is $\mathbb{R}^{k-1}$, i.e.,
\ref{eqn:Game} is simply \ref{eqn:ProbGame} where the domains of the strategy spaces are extended to their affine hulls.
In the event the optimizer of \ref{eqn:ProbGame} is in the relative interior, i.e., none of the constraints are binding, then it suffices to solve \ref{eqn:ProbGame} over the affine-hull, i.e., $(x^*,y^*)\in rel.int(\Delta^{k})\times rel.int(\Delta^{k})$ is an optimizer to \ref{eqn:ProbGame} if and only if it is also an optimizer to \ref{eqn:Game}. 
Thus, if there is an interior solution (also known as a fully-mixed NE) to \ref{eqn:ProbGame}, then it suffices to solve \ref{eqn:Game}. 
As a result, learning dynamics in the unconstrained space are a powerful tool for solving games with probability vectors, and we focus our analysis on unconstrained strategy spaces.

\subsection{Learning in Games}\label{Hamiltonian_and_zerosum}
In zero-sum games, when $A, B=-A^\intercal, b_1$, and $b_2$ are known, the set of NE can be computed by linear programming.
However, in most applications, we do not have full information \cite{cesa2006prediction}. 
Instead, in most settings, agents learn NE by playing a repeated game where they only observe their respective payoffs and corresponding gradients,  i.e., agent 1 observes $\langle x^t, Ay^t- b_1\rangle$ and $Ay^t- b_1$ where $x^t$ and $y^t$ correspond to agent 1's strategy and agent 2's strategy respectively at iteration $t$.
The agents then iteratively update their strategies by moving in a beneficial direction. 
As discussed in Section \ref{sec:intro}, if agents' update rules satisfy certain properties, agents' strategies may converge to the set of NE (last-iterate convergence), or the time-average of the strategies may converge to the set of NE (time-average convergence).

In this paper, we study the alternating gradient descent update rule, which is given by: 
\begin{equation}
    \begin{split}
        x^{t+1} &= x^{t} + \eta_1 (Ay^{t}-b_1) \\
        y^{t+1} &= y^{t} + \eta_2 (Bx^{t+1}-b_2)
    \end{split}
    \tag{Alternating GD} \label{eqn:AltGD}
\end{equation}
where $\eta_1$ and $\eta_2$ are the learning rates for each agent, respectively. 
\cite{bailey2021left} shows that \ref{eqn:AltGD} achieves $O(1/T)$ time-average convergence to the set of NE in zero-sum games when $\sqrt {\eta_1\cdot \eta_2} \leq 2/ ||A||$. 

In this paper, we propose a new method for computing NE in zero-sum games. 
Rather than relying on time-average or last-iterate convergence, we will utilize knowledge of the update rule to directly characterize the set of NE after a finite set of iterations. 
Our approach is motivated by a recently developed connection between \ref{eqn:AltGD} in zero-sum games and Hamiltonian dynamics in physics \cite{Bailey19Hamiltonian}.

\subsection{Motivation via Hamiltonian Dynamics}\label{sec:Hamiltonian}

To understand the connection between a discrete algorithm and continuous-time dynamics from physics, we first introduce a continuous version of \ref{eqn:AltGD} known as \ref{eqn:CGD}. 
\begin{equation}
        \begin{aligned}
           x(T) &= x(0) + \eta_1 \int_0^T (Ay(t)-b_1) dt \\
            y(T) &= y(0) + \eta_2 \int_0^T (-A^\intercal x(t)-b_2) dt \\
        \end{aligned}\tag{Continuous-time GD}\label{eqn:CGD}
\end{equation}

As depicted in Figure \ref{fig:spring}, \cite{Bailey19Hamiltonian} shows that \ref{eqn:CGD} is a special case of Hamiltonian dynamics. 

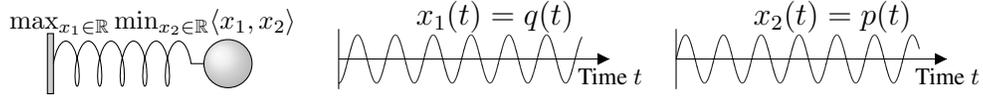
\begin{figure}[!ht]
	\centering
    \def\scaler{.8}
	\begin{tabular}{m{1.6in} m{1.6in} m{1.6in}}
		\begin{tikzpicture}[scale=\scaler]
		\node at (1.65,.7) {$\max_{x_1\in \mathbb{R}} \min_{x_2\in \mathbb{R}} \langle x_1, x_2\rangle$};		
		\draw[decoration={aspect=0.3, segment length=3mm, amplitude=3mm,coil},decorate] (0,0) -- (2.5,0); 
		\fill[white] (2.9,0) circle(.4);
		\shade[ball color = gray!40, opacity = 0.4] (2.9,0) circle (.4);
		\draw (2.9,0) circle (.4);
		\fill[gray, opacity=.4] (0,-.5)--(0,.5)--(-.1,.5)--(-.1,-.5)--cycle;
		\draw (0,-.5)--(0,.5)--(-.1,.5)--(-.1,-.5)--cycle;
		\node at (0,1.05) {};
		\end{tikzpicture}	
		&
		\begin{tikzpicture}[scale=\scaler]	
		\node at  (2.6,1.2) {\large $x_1(t)=q(t)$};
		\draw (0,1)--(0,0);
		\draw (0,.5)--(4.5,.5);
		\fill (4.5,.5)--(4.3,.4)--(4.3,.6);
		\node[below left] at (5.2,.5) {\small Time $t$};
		\csvreader[ head to column names,%
		late after head=\xdef\aold{\a}\xdef\bold{\b},%
		after line=\xdef\aold{\a}\xdef\bold{\b}]%
		{Images/Position.dat}{}{%
			\draw (\aold, \bold) -- (\a,\b);
		}	
		\end{tikzpicture}
		&
		\begin{tikzpicture}[scale=\scaler]	
		\node at  (2.6,1.2) {\large $x_2(t)=p(t)$};
		\draw (0,1)--(0,0);
		\draw (0,.5)--(4.5,.5);
		\fill (4.5,.5)--(4.3,.4)--(4.3,.6);
		\node[below left] at (5.2,.5) {\small Time $t$};
		\csvreader[ head to column names,%
		late after head=\xdef\aold{\a}\xdef\bold{\b},%
		after line=\xdef\aold{\a}\xdef\bold{\b}]%
		{Images/velocity.dat}{}{%
			\draw (\aold, \bold) -- (\a,\b);
		}
		\end{tikzpicture}
	\end{tabular}
	\caption{Let $(q(t),p(t))$ represent the position and momentum of a unit mass on a frictionless spring with spring constant $k=1$, and let $(x_1(t),x_2(t))$ represent agent strategies in the zero-sum game $\max_{x_1\in \mathbb{R}} \min_{x_2\in \mathbb{R}} \langle x_1, x_2 \rangle$ where agents update strategies with \ref{eqn:CGD} with learning rate $\eta=1$. \cite{Bailey19Hamiltonian} provides the framework to show $(q(t),p(t))=(x_1(t),x_2(t)) \ \forall t$ if $(q(0),p(0))=(x_1(0),x_2(0))$, i.e., \ref{eqn:CGD} is equivalent to a mass on a frictionless spring. } \label{fig:spring}
\end{figure}

A \textbf{Hamiltonian system} is defined by a set of coordinates $(p,q)$ and the Hamiltonian function $H(p,t)=H(p(t),q(t))$ where $p(t)$ is momentum, $q(t)$ is position at time $t$. 
$H(p,q)$ represents the total energy of the system that satisfies $\frac{dq}{\partial t} = \frac{\partial H}{\partial p}$ and $\frac{dp}{\partial t} = -\frac{\partial H}{\partial q}$ which are called Hamiltonian equations. 
Since $\frac{dH}{dt} = \frac{\partial H}{\partial p}\frac{dp}{dt} + \frac{\partial H}{dq} \frac{dq}{dt} = -\frac{\partial H}{\partial p}\frac{\partial H}{\partial q} + \frac{\partial H}{dq} \frac{\partial H}{\partial p} = 0$, the energy of the system is time-invariant. 
In the transformation given in \cite{Bailey19Hamiltonian}, the energy generated by \ref{eqn:CGD} corresponds to the distance to the set of NE \cite{Mertikopoulos2018CyclesAdverserial,Bailey19Hamiltonian}, i.e., the dynamics of \ref{eqn:CGD} cycle around the set of NE as depicted in Figure \ref{fig:HS_discretization}!

\begin{figure}[!ht]
    \centering
    \includegraphics[width=0.5\linewidth]{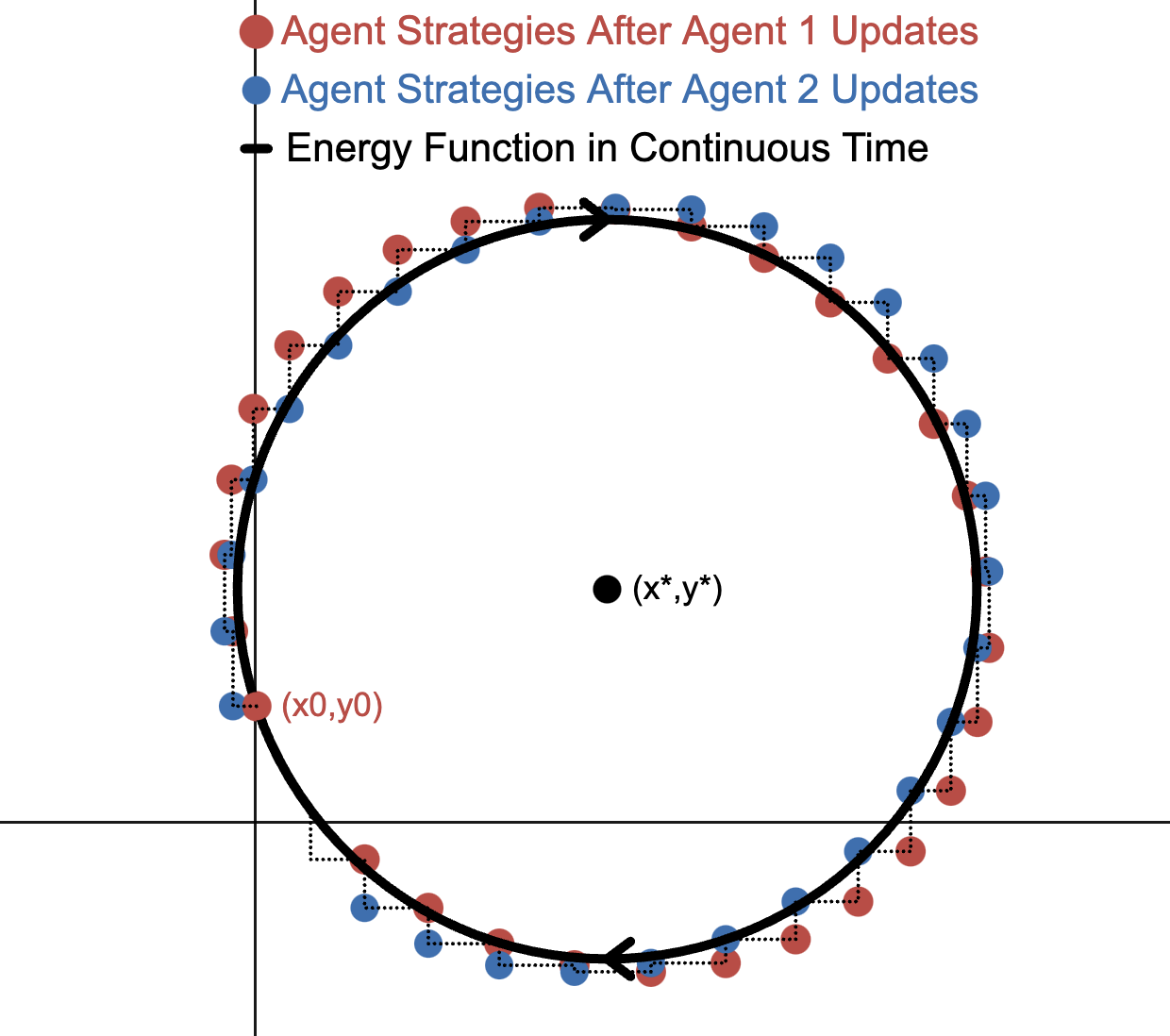}
    \caption{
    The \ref{eqn:CGD} dynamics cycle around the set of NE on a closed orbit. 
    We show that \ref{eqn:AltGD} closely approximates these dynamics as depicted by the discrete updates given in red and blue. We then show that the center of the trajectory can still be computed with the perturbed discrete updates.
    } 
    \label{fig:HS_discretization} 
\end{figure}

This observation directly motivates our analysis. 
While it is true that the average position of \ref{eqn:CGD} will converge to the set of NE, we can compute the set of NE much more simply by observing that the dynamics lie on a hypersphere centered at a NE. 
Specifically, in 2 dimensions, we only require 3 points from the circle to directly compute its center, i.e., we only need to observe \ref{eqn:CGD} at three time-steps to compute the set of NE. 

While we cannot directly implement \ref{eqn:CGD}, we can closely approximate the system using a discrete-time algorithm via integration approaches. 
In physics and mathematical dynamics, there exists a class of integrators known as symplectic integrators that closely approximate Hamiltonian dynamics. 
\cite{Bailey19Hamiltonian,Bailey20Regret} show that the symplectic integrator, St\"{o}rmer-Verlet integration \cite{Hairer2006EnergyConserve}, yields \ref{eqn:AltGD} when applied to \ref{eqn:CGD}. 
This discretization is depicted in Figure \ref{fig:HS_discretization}. 

By \cite{Hairer2006EnergyConserve}, this discretization will approximately preserve the invariant energy function for \ref{eqn:CGD}. 
We provide an exact invariant energy function for \ref{eqn:AltGD} that is a small perturbation of the distance to the set of NE and prove that the invariant function can be used to directly characterize the set of NE after a finite number of iterations.

\section{Invariant Functions}\label{sec:InvariantFunctions}

In this section, we show that the approximated energy function corresponds to an invariant, perturbed energy function in discrete time, both in zero-sum and coordination games. 
We remark that the result for zero-sum games was first shown in \cite{Bailey20Regret}, and the result for coordination games was first shown in \cite{bailey2021left}. 
Both results were shown without linear cost vectors $b_1,b_2$, but the results from \cite{Bailey20Regret,bailey2021left} extend to this setting via a variable substitution.  
To keep the paper self-contained, we include the proofs here. 
Moreover, unlike previous results in the literature, we will use this invariant energy to quickly characterize the set of Nash equilibria (NE) in Section \ref{sec:ValidEquations}. 

\subsection{Zero-sum Games}

We first examine \ref{eqn:AltGD} in zero-sum games and show that the following perturbed energy is invariant:

\begin{definition}\label{def:Zero_Energy}
    Given that each agent uses \ref{eqn:AltGD} to obtain strategies $\{x^t,y^t\}_{t=0}^\infty$ in a zero-sum game, we denote the perturbed energy as
        \begin{equation*}
            h_-^{t}= \frac{\| x^{t} - x^{*} \|^{2}}{{\eta_1}} + \frac{\| y^{t} - y^{*} \|^{2}}{{\eta_2}} + \left\langle x^{t}, \ Ay^{t}-b_1 \right\rangle + \left\langle y^{t}, \ b_2 \right\rangle\label{eqn:ZeroEnergy}.
        \end{equation*}
\end{definition}

\begin{restatable}{theorem}{ZeroInvariant} \label{thm:zero_invariant}
    The energy function $h_-^{t}$ is time-invariant when agents update their strategies using \ref{eqn:AltGD} in a zero-sum game, i.e., $h_-^{t}=h_-^{0}, \forall t\in \mathbb{Z}_{\geq 0}$.
\end{restatable}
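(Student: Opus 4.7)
The plan is to reduce the problem to a cleaner form via a change of variables and then compute $h_-^{t+1}-h_-^t$ directly. Define $u^t := x^t-x^*$ and $v^t := y^t-y^*$. Using the NE conditions $Ay^*=b_1$ and $-A^\intercal x^*=b_2$ (the latter since $B=-A^\intercal$ in a zero-sum game), the update rules from \ref{eqn:AltGD} become
\begin{align*}
u^{t+1} &= u^t + \eta_1 A v^t,\\
v^{t+1} &= v^t - \eta_2 A^\intercal u^{t+1}.
\end{align*}
Moreover, the cross terms $\langle x^t,Ay^t-b_1\rangle+\langle y^t,b_2\rangle$ simplify: expanding $x^t=u^t+x^*$ and $y^t=v^t+y^*$ and substituting $A^\intercal x^*=-b_2$, the terms involving $v^t$ cancel, leaving $\langle u^t,Av^t\rangle$ plus the constant $\langle y^*,b_2\rangle$. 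So up to this constant, $h_-^t = \|u^t\|^2/\eta_1 + \|v^t\|^2/\eta_2 + \langle u^t,Av^t\rangle$.

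Next I would compute the three differences separately. Introduce shorthand $a:=Av^t$ and $b:=A^\intercal u^{t+1}$, so $u^{t+1}=u^t+\eta_1 a$ and $v^{t+1}=v^t-\eta_2 b$. A direct expansion gives
\begin{align*}
\tfrac{1}{\eta_1}(\|u^{t+1}\|^2-\|u^t\|^2) &= 2\langle u^t,a\rangle + \eta_1\|a\|^2,\\
\tfrac{1}{\eta_2}(\|v^{t+1}\|^2-\|v^t\|^2) &= -2\langle v^t,b\rangle + \eta_2\|b\|^2 = -2\langle a,u^{t+1}\rangle + \eta_2\|b\|^2,
\end{align*}
where the second line uses $\langle v^t,A^\intercal u^{t+1}\rangle=\langle Av^t,u^{t+1}\rangle=\langle a,u^{t+1}\rangle$. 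Substituting $u^{t+1}=u^t+\eta_1 a$ and adding yields $-\eta_1\|a\|^2+\eta_2\|b\|^2$. Finally, for the interaction term, I would expand
\begin{equation*}
\langle u^{t+1},Av^{t+1}\rangle = \langle u^{t+1},a\rangle - \eta_2\langle u^{t+1},AA^\intercal u^{t+1}\rangle = \langle u^t,a\rangle + \eta_1\|a\|^2 - \eta_2\|b\|^2,
\end{equation*}
so its increment is exactly $\eta_1\|a\|^2-\eta_2\|b\|^2$, which cancels the contribution of the squared-norm terms.

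The main obstacle, and the reason the identity works at all, is the asymmetric use of $u^{t+1}$ (already updated) in the definition of $b$. This is a direct consequence of \ref{eqn:AltGD}'s alternating structure, which is what makes it a symplectic (St\"{o}rmer--Verlet) discretization as hinted at in Section \ref{sec:Hamiltonian}; a simultaneous update would replace $u^{t+1}$ by $u^t$ in $b$, and the cross terms would no longer balance. Hence the bookkeeping must carefully respect the order of updates, but once $b=A^\intercal u^{t+1}$ is used consistently the cancellation is exact and the constant $\langle y^*,b_2\rangle$ (which does not depend on $t$) is absorbed into the equality $h_-^{t+1}=h_-^t$, giving the desired invariance for all $t\in\mathbb{Z}_{\geq 0}$.
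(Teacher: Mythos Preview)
Your proof is correct. Both your argument and the paper's are direct verifications that $h_-^{t+1}-h_-^t=0$, but they are organized differently. The paper first isolates Lemma~\ref{lem:zero_partial}, which rewrites $\frac{1}{\eta_1}(\|x^{t+1}-x^*\|^2-\|x^t-x^*\|^2)$ and its $y$-counterpart as inner products, and then expands and adds these while carrying $x^*,y^*,b_1,b_2$ throughout. Your change of variables $u^t=x^t-x^*$, $v^t=y^t-y^*$ absorbs the \ref{eqn:NEconditions} upfront and reduces the energy (up to the constant $\langle y^*,b_2\rangle$) to $\|u^t\|^2/\eta_1+\|v^t\|^2/\eta_2+\langle u^t,Av^t\rangle$, after which the cancellation is a short symmetric computation in the shorthand $a=Av^t$, $b=A^\intercal u^{t+1}$. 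This is the same calculation with cleaner bookkeeping; the substitution buys brevity and makes the role of the alternating structure (that $b$ depends on $u^{t+1}$, not $u^t$) transparent, but neither route uses an idea the other lacks. One minor phrasing issue: when you say ``the terms involving $v^t$ cancel,'' you mean specifically the $\langle x^*,Av^t\rangle$ and $\langle v^t,b_2\rangle$ pair, since $\langle u^t,Av^t\rangle$ also involves $v^t$ and is precisely what survives.
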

In Section \ref{sec:Hamiltonian}, we observed that the energy generated by \ref{eqn:CGD} corresponds to the distance to NE and the dynamics form a circle cycling around the set of NE.
Since \ref{eqn:AltGD} approximates this energy, there is inevitably a small perturbation in the distance to the set of NE.
Therefore, we first establish the following lemma, which analyzes the change in distance to the set of NE and provides a key property for Theorem \ref{thm:zero_invariant}. 
Then, in Section \ref{sec:ValidEquations}, we use this invariant function to describe valid linear equations to capture the set of NE followed by Sections \ref{sec:Base_Experiments} and \ref{sec:Improved_Experiments_Parallelization}, where we demonstrate that these equations can be used to solve for the set of NE. 

\begin{Lemma}\label{lem:zero_partial} With \ref{eqn:AltGD} updates in a zero-sum game, the change in the distance to NE for each agent is
        \begin{align*}
            \frac{\| x^{t+1} - x^{*} \|^{2} - \| x^{t} - x^{*} \|^{2}}{{\eta_1}}&= \left\langle x^{t+1} + x^{t} - 2x^{*}, \ A(y^{t}-y^{*}) \right\rangle \\
            \frac{\| y^{t+1} - y^{*} \|^{2} - \| y^{t} - y^{*} \|^{2}}{{\eta_2}}&= \left\langle y^{t+1} + y^{t} - 2y^{*}, \ -A^{\intercal}(x^{t+1}-x^{*}) \right\rangle.
        \end{align*}
\end{Lemma}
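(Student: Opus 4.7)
The plan is to prove each identity separately by applying the standard difference-of-squares identity $\|u\|^2 - \|v\|^2 = \langle u+v,\ u-v\rangle$, then substituting the update rule and using the \ref{eqn:NEconditions} to eliminate $b_1$ and $b_2$.

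For the first identity, I would set $u = x^{t+1} - x^*$ and $v = x^t - x^*$, so that $u+v = x^{t+1} + x^t - 2x^*$ and $u-v = x^{t+1} - x^t$. The $x$-update rule gives $x^{t+1} - x^t = \eta_1(Ay^t - b_1)$. Since $(x^*,y^*)$ is a NE in the unbounded setting, \ref{eqn:NEconditions} gives $Ay^* - b_1 = 0$, hence $b_1 = Ay^*$ and therefore $Ay^t - b_1 = A(y^t - y^*)$. Substituting and dividing by $\eta_1$ yields the claimed expression.

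For the second identity, I would apply the same algebraic identity with $u = y^{t+1} - y^*$ and $v = y^t - y^*$. The $y$-update in a zero-sum game (where $B = -A^\intercal$) reads $y^{t+1} - y^t = \eta_2(-A^\intercal x^{t+1} - b_2)$. The second NE condition $Bx^* - b_2 = -A^\intercal x^* - b_2 = 0$ gives $b_2 = -A^\intercal x^*$, so $-A^\intercal x^{t+1} - b_2 = -A^\intercal(x^{t+1} - x^*)$. Substituting and dividing by $\eta_2$ yields the second identity.

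There is no real obstacle here — the proof is a direct two-line computation for each agent, and the only subtlety is remembering that the second update uses $x^{t+1}$ (not $x^t$) because \ref{eqn:AltGD} is alternating rather than simultaneous, which is exactly why the right-hand side of the $y$-identity features $x^{t+1} - x^*$. I would make sure to state the difference-of-squares identity explicitly and to invoke \ref{eqn:NEconditions} by name so the reader sees where the $y^t - y^*$ and $x^{t+1} - x^*$ factors come from.
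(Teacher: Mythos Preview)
Your proposal is correct and uses essentially the same ingredients as the paper's proof: the difference-of-squares identity, the \ref{eqn:AltGD} update rule, and the \ref{eqn:NEconditions}. The only cosmetic difference is that the paper starts from the right-hand side and works toward the left, whereas you start from the left-hand side; the content is identical.
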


\begin{proof}
Expanding the right-hand side of the lemma yields
\begin{align*}
    \left\langle x^{t+1}+x^{t}-2x^{*}, \ A(y^{t}-y^{*}) \right\rangle
    =
    &\left\langle x^{t+1}+x^{t}-2x^{*}, \ Ay^{t}-Ay^{*} \right\rangle \\
    =
    &\left\langle x^{t+1}+x^{t}-2x^{*}, \ Ay^{t}-b_1 \right\rangle
\end{align*}
since $Ay^{*} = b_1$ by \ref{eqn:NEconditions}.
Then, by \ref{eqn:AltGD} and using the fact that the payoff matrices satisfy $B=-A^{\intercal}$ in a zero-sum game, we obtain
\begin{align*}
    \left\langle x^{t+1}+x^{t}-2x^{*}, \ A(y^{t}-y^{*}) \right\rangle
    =
    &\left\langle x^{t+1}+x^{t}-2x^{*}, \ Ay^{t}-b_1 \right\rangle\\
    =
    &\left\langle x^{t+1}+x^{t}-2x^{*}, \ \frac{x^{t+1}-x^{t}}{{\eta_1}} \right\rangle \\
    =
    &\frac{\left\langle x^{t+1}+x^{t}-2x^{*}, \ x^{t+1}-x^{t} \right\rangle}{{\eta_1}} \\
    = 
    &\frac{\| x^{t+1} \|^{2} -  \| x^{t} \|^{2} -2 \left\langle x^{*}, \ x^{t+1}-x^{t} \right\rangle}{{\eta_1}} \\
    = 
    &\frac{\| x^{t+1} \|^{2} - 2 \left\langle x^{*}, \ x^{t+1} \right\rangle + \| x^{*} \|^{2} - (\| x^{t} \|^{2} - 2 \left\langle x^{*}, \ x^{t} \right\rangle + \| x^{*} \|^{2})}{{\eta_1}} \\
    = 
    &\frac{\| x^{t+1} - x^{*}\|^{2} - \| x^{t} - x^{*} \|^{2}}{{\eta_1}}
\end{align*}
By following the same steps symmetrically for agent 2, we obtain 
\begin{align*}
\left\langle y^{t+1}+y^{t}-2y^{*}, \ -A^{\intercal}(x^{t+1}-x^{*}) \right\rangle
= 
\frac{\| y^{t+1} - y^{*}\|^{2} - \| y^{t} - y^{*} \|^{2}}{{\eta_2}}
\end{align*}
since $A^{\intercal}x^{*} = -b_2$ and $-A^{\intercal}x^{t+1}-b_2 = \frac{y^{t+1} - y^{t}}{{\eta_2}}$ in a zero-sum game, thereby yielding the statement of the lemma.
\end{proof}

Using Lemma \ref{lem:zero_partial}, we now proceed to prove the existence of a time-invariant function $h_-^{t}$ in discrete time, as stated in Theorem \ref{thm:zero_invariant}.

\begin{proof}[Proof of Theorem \ref{thm:zero_invariant}.]
Adding both equations in Lemma \ref{lem:zero_partial} and organizing the $t^{th}$ terms on the left and the $(t+1)^{th}$ terms on the right yields\\
    \begin{align*}
         h_-^t&=\frac{\| x^{t}-x^* \|^{2}}{\eta_1} + \frac{\| y^{t}-y^* \|^{2}}{\eta_2} + \left\langle x^{t}, \ Ay^{t}-b_1 \right\rangle + \left\langle y^{t}, \ b_2 \right\rangle \\ 
         &= \frac{\| x^{t+1}-x^* \|^{2}}{\eta_1} + \frac{\| y^{t+1}-y^* \|^{2}}{\eta_2} + \left\langle x^{t+1}, \ Ay^{t+1}-b_1 \right\rangle + \left\langle y^{t+1}, \ b_2 \right\rangle=h_{-}^{t+1}
    \end{align*}
Therefore, $h_-^{t} = h_-^{0}$, $\forall t\in \mathbb{Z}_{\geq 0}$.
\end{proof}

A more detailed proof of Theorem \ref{thm:zero_invariant} is given in Appendix \ref{sec:zero_invariant}. 

\subsection{Coordination Games}
We also establish a perturbed energy function for a coordination game.
To characterize this concept, we introduce the following definition of the discrete-time perturbed energy function.

\begin{definition}\label{def:Coord_Energy}
    Given that each agent uses \ref{eqn:AltGD} to obtain strategies $\{x^t,y^t\}_{t=0}^\infty$ in a coordination game, we denote the perturbed energy as
    \begin{align*}
        h_+^{t}= \frac{\| x^{t} - x^{*} \|^{2}}{{\eta_1}} - \frac{\| y^{t} - y^{*} \|^{2}}{{\eta_2}} + \left\langle x^{t}, \ Ay^{t}-b_1 \right\rangle - \left\langle y^{t}, \ b_2 \right\rangle.\label{eqn:Coord_Energy}
    \end{align*}
\end{definition}

\begin{restatable}{theorem}{CoordInvariant}\label{thm:coord_invariant}
    The energy function $h_+^t$ is time-invariant when agents update their strategies using \ref{eqn:AltGD} in a coordination game, i.e., $h_+^{t}=h_+^{0}, \forall t\in \mathbb{Z}_{\geq 0}$. 
\end{restatable}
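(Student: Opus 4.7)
The plan is to mirror the proof of Theorem \ref{thm:zero_invariant} exactly, tracking how each sign flips when we replace the zero-sum identity $B=-A^\intercal$ with the coordination identity $B=A^\intercal$. The key observation is that $h_+^t$ is obtained from $h_-^t$ by flipping the signs of the two terms that involve agent 2 ($\|y^t-y^*\|^2/\eta_2$ and $\langle y^t, b_2\rangle$), which is exactly the asymmetry that the coordination sign change introduces in the update rule.

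First I would prove a coordination-game analog of Lemma \ref{lem:zero_partial}. The derivation for agent 1 is unchanged since agent 1's update does not involve $B$, giving
\begin{equation*}
\frac{\|x^{t+1}-x^*\|^2-\|x^t-x^*\|^2}{\eta_1}=\langle x^{t+1}+x^t-2x^*,\ A(y^t-y^*)\rangle.
\end{equation*}
For agent 2, the update in a coordination game becomes $y^{t+1}=y^t+\eta_2(A^\intercal x^{t+1}-b_2)$, and the NE condition now reads $A^\intercal x^*=b_2$. Repeating the algebra in the proof of Lemma \ref{lem:zero_partial} with these replacements yields
\begin{equation*}
\frac{\|y^{t+1}-y^*\|^2-\|y^t-y^*\|^2}{\eta_2}=\langle y^{t+1}+y^t-2y^*,\ A^\intercal(x^{t+1}-x^*)\rangle,
\end{equation*}
the only change from the zero-sum case being the disappearance of the leading minus sign in front of $A^\intercal$.

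Next, instead of adding the two identities as in the proof of Theorem \ref{thm:zero_invariant}, I would \emph{subtract} the agent-2 identity from the agent-1 identity, because the definition of $h_+^t$ has a minus sign between the two squared-norm contributions. The left-hand side then matches the quadratic part of $h_+^{t+1}-h_+^t$. On the right-hand side, I would expand the inner products, apply the adjoint identity $\langle x^{t+1},Ay^t\rangle=\langle y^t,A^\intercal x^{t+1}\rangle$ to cancel the mixed $(x^{t+1},y^t)$ cross terms, and use the NE conditions $Ay^*=b_1$ and $A^\intercal x^*=b_2$ to rewrite the remaining terms that carry $x^*$ or $y^*$ in terms of $b_1$ and $b_2$. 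After collecting, the right-hand side should reduce exactly to $\langle x^{t+1},Ay^{t+1}-b_1\rangle-\langle y^{t+1},b_2\rangle-\langle x^t,Ay^t-b_1\rangle+\langle y^t,b_2\rangle$, which is precisely the non-quadratic part of $h_+^{t+1}-h_+^t$. Moving all time-$(t+1)$ terms to one side and all time-$t$ terms to the other gives $h_+^{t+1}=h_+^t$, and induction yields $h_+^t=h_+^0$ for all $t\in\mathbb{Z}_{\geq 0}$.

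The main obstacle is purely bookkeeping: signs flip in just the right places, and one needs to be careful that the bilinear term $\langle x^{t+1},Ay^t\rangle$ (which in the zero-sum case canceled after adding) now must cancel after subtracting, which works because the sign change from $B=-A^\intercal$ to $B=A^\intercal$ replaces one cancellation mechanism with the other. Nothing deep happens beyond this symmetry; the whole argument is a careful rerun of the zero-sum proof with the two sign flips propagated consistently, and a fully detailed version would likely be relegated to an appendix analogous to Appendix \ref{sec:zero_invariant}.
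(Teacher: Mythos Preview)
Your proposal is correct and matches the paper's own proof essentially step for step: the paper (Appendix \ref{sec:coord_invariant}) likewise establishes the coordination analogue of Lemma \ref{lem:zero_partial} (this is Lemma \ref{lem:coord_partial}), expands both right-hand sides using $Ay^*=b_1$ and $A^\intercal x^*=b_2$, subtracts rather than adds so that the mixed $\langle x^{t+1},Ay^t\rangle$ terms cancel, and then reorganizes into $h_+^{t+1}=h_+^t$. Your diagnosis of why subtraction replaces addition and how the sign flips propagate is exactly the mechanism the paper uses.
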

The proof of Theorem \ref{thm:coord_invariant} follows identically to Theorem \ref{thm:zero_invariant}. 
The full proof can be found in Appendix \ref{sec:coord_invariant}.
\begin{restatable}{Lemma}{CoordPartial}\label{lem:coord_partial}
    With \ref{eqn:AltGD} in a coordination game, the change in the distance to NE for each agent is
    \begin{equation}
        \begin{aligned}
            \frac{\| x^{t+1} - x^{*} \|^{2} - \| x^{t} - x^{*} \|^{2}}{{\eta_1}} &= \left\langle x^{t+1} + x^{t} - 2x^{*}, \ A(y^{t}-y^{*}) \right\rangle \\
            \frac{\| y^{t+1} - y^{*} \|^{2} - \| y^{t} - x^{*} \|^{2}}{{\eta_2}} &= \left\langle y^{t+1} + y^{t} - 2y^{*}, \ A^{\intercal}(x^{t+1}-x^{*}) \right\rangle.
        \end{aligned}
    \end{equation}
\end{restatable}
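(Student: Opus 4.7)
The plan is to mirror the proof of Lemma \ref{lem:zero_partial} almost verbatim, with the only substantive change being that agent 2's interaction matrix is now $B = A^\intercal$ rather than $B = -A^\intercal$, and correspondingly the NE condition for agent 2 reads $A^\intercal x^* = b_2$ instead of $-A^\intercal x^* = b_2$. Since agent 1's update rule in \ref{eqn:AltGD} only involves $A$ and $b_1$, the first identity is literally the same as in the zero-sum case; so the real content is reworking agent 2's identity with the corrected signs.

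First I would handle agent 1. Starting from the right-hand side $\langle x^{t+1}+x^{t}-2x^{*},\,A(y^{t}-y^{*})\rangle$, use $Ay^{*}=b_1$ from \ref{eqn:NEconditions} to rewrite it as $\langle x^{t+1}+x^{t}-2x^{*},\,Ay^{t}-b_1\rangle$, then substitute $Ay^{t}-b_1 = (x^{t+1}-x^{t})/\eta_1$ from \ref{eqn:AltGD}. Expanding the inner product $\langle x^{t+1}+x^{t}-2x^{*},\,x^{t+1}-x^{t}\rangle$ telescopes to $\|x^{t+1}-x^*\|^2 - \|x^{t}-x^*\|^2$ by the standard polarization identity, yielding the first equation of the lemma.

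Next I would handle agent 2 with the coordination sign. Start from $\langle y^{t+1}+y^{t}-2y^{*},\,A^{\intercal}(x^{t+1}-x^{*})\rangle$. Now use the coordination-game NE condition $A^\intercal x^* = b_2$ (which comes from $Bx^* = b_2$ and $B=A^\intercal$) to replace $A^\intercal x^*$ with $b_2$, producing $\langle y^{t+1}+y^{t}-2y^{*},\,A^{\intercal}x^{t+1}-b_2\rangle$. Since $B = A^\intercal$, agent 2's update rule in \ref{eqn:AltGD} becomes $y^{t+1} = y^t + \eta_2(A^\intercal x^{t+1} - b_2)$, so $A^\intercal x^{t+1} - b_2 = (y^{t+1}-y^{t})/\eta_2$. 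Substituting this and applying the same polarization expansion as for agent 1 gives $(\|y^{t+1}-y^*\|^2 - \|y^{t}-y^*\|^2)/\eta_2$, which is the second equation of the lemma.

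I do not anticipate any real obstacle: the argument is symmetric to the zero-sum case and every algebraic step is mechanical. The only care needed is to track the sign carefully --- in the zero-sum proof, the agent-2 identity carried the extra minus sign because $B=-A^\intercal$ and $A^\intercal x^*=-b_2$; in the coordination case both signs flip, so the minus sign disappears, and the right-hand side of the agent-2 equation becomes $\langle y^{t+1}+y^{t}-2y^{*},\,A^{\intercal}(x^{t+1}-x^{*})\rangle$ exactly as stated. A clean way to present this in the appendix is to state both proofs side by side and simply note which sign originates from $B=\pm A^\intercal$ at each step.
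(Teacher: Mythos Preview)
Your proposal is correct and follows essentially the same approach as the paper's proof in Appendix~\ref{sec:coord_partial}: use $Ay^*=b_1$ and the update rule to reduce agent~1's identity to a telescoping inner product, then do the symmetric computation for agent~2 with $B=A^\intercal$ and $A^\intercal x^*=b_2$. The paper's proof is exactly this mechanical adaptation of Lemma~\ref{lem:zero_partial}, so there is nothing to add.
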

The proof of Lemma \ref{lem:coord_partial} follows identically to Lemma \ref{lem:zero_partial}. 
The full proof can be found in Appendix \ref{sec:coord_partial}.

Having established the invariant energy functions $h_{-}^{t}$ and $h_{+}^{t}$ for zero-sum and coordination games, we now highlight the role of time-invariant functions in formulating valid linear equations that characterize the set of NE: $x^*,y^*$.
In Section \ref{sec:ValidEquations}, we use these invariants to capture the set of NE and demonstrate the computational performance of our approach in Sections \ref{sec:Base_Experiments} and \ref{sec:Improved_Experiments_Parallelization}.

\section{Characterizing the Set of Nash Equilibria}\label{sec:ValidEquations}
In this and subsequent sections, we use boldface script to denote unknown quantities, for example, $\boldsymbol{x^{*}}$ and $\boldsymbol{y^*}$ denote that $x^*$ and $y^*$ are unknown. 
All terms that are not bold-faced are observable quantities obtained via \ref{eqn:AltGD}, e.g., agent 1's strategy $x^t$ and agent 1's gradient $Ax^t-b_1$.
We remark that we only assume that we have access to agents' strategies and gradients, and not the payoff matrix $A$ or linear cost $b_i$, which is consistent with standard applications of online optimization in games. 
Our work then focuses on characterizing $\boldsymbol{x^{*}}$ and $\boldsymbol{y^*}$ using observable quantities.

We characterize the set of Nash equilibria (NE) using the time-invariant energy functions from Theorems \ref{thm:zero_invariant} and \ref{thm:coord_invariant}. 
By taking the difference of the energy functions with respect to consecutive iterations, we show that the resulting expression is linear in $\boldsymbol{x^{*}}, \boldsymbol{y^*},$ and $\boldsymbol{b_2}$, and depends on observable quantities such as agent strategies $x^t, y^t$ and agent 1's gradient $Ay^t-b_1$. 
Moreover, as long as $3k$ generated equations are linearly independent, we can solve for $\boldsymbol{x^{*}}, \boldsymbol{y^*},$ and $\boldsymbol{b_2}$, i.e., $3k$ iterations of \ref{eqn:AltGD} are sufficient to characterize the set of NE $(\boldsymbol{x^{*}},\boldsymbol{y^*})$ and agent 2's cost $\boldsymbol{b_2}$, modulo degeneracy.

However, our initial model suffers from degeneracy since $B\boldsymbol{x^{*}}=\boldsymbol{b_2}$. 
In Sections \ref{sec:ModelForSimulation}--\ref{sec:FlModel}, we resolve this issue with three modified models to characterize the set of NE. 
In particular, each model requires only $2k$, $2k+1$, and $k$ iterations of alternating gradient descent, respectively, an improvement over our initial model. 
Each model relies on different types of information from agents as given in Table \ref{tab:Models}.
Figure \ref{fig:valid_linear_equations} illustrates, through the model in Section \ref{sec:ModelForSimulation}, how we obtain the NE from valid linear equations, where $x^*, y^* \in \mathbb{R}^1$, requiring two independent linear equations derived via two iterations of \ref{eqn:AltGD}.

\begin{table}[!ht]
\centering
\renewcommand{\arraystretch}{1.6} % Adjust row height
\setlength{\tabcolsep}{2.6pt} % Adjust column padding
    \begin{tabular}{|c|c |c|} 
    \hline
    Model & Solution Characterized & {Required Observations} \\ 
    \hline
    Section \ref{sec:ModelForSimulation}& $(\boldsymbol{x^{*}},\boldsymbol{y^*})$ & \multicolumn{1}{l|}{$\{x^t\}_{t=0}^{2k}, \{Ay^t-b_1\}_{t=0}^{2k}$,  $\{y^t\}_{t=0}^{2k}$} \\ 
    \hline
   Section \ref{sec:EconomicModel}& $(\boldsymbol{x^{*}},\boldsymbol{b_1})$ &  \multicolumn{1}{l|}{$\{x^t\}_{t=0}^{2k+1}, \{Ay^t-b_1\}_{t=0}^{2k+1}$, $\{||y^t||\}_{t=0}^{2k+1}$} \\ 
    \hline
    Section \ref{sec:FlModel} & $\boldsymbol{x^{*}}$ &  \multicolumn{1}{l|}{$\{x^t\}_{t=0}^{k}, \{Ay^t-b_1\}_{t=0}^{k}$, $\{||B x^t-b_2||\}_{t=0}^{k}$} \\ 
    \hline
    \end{tabular}
\caption{Three models to characterize the NE in zero-sum ($B=-A^\intercal$) and coordination  ($B=A^\intercal$) games. For the models in Sections \ref{sec:EconomicModel} and \ref{sec:FlModel}, a symmetric model can be used to characterize $(\boldsymbol{y^*},\boldsymbol{b_2})$ and $\boldsymbol{y^*}$, respectively. }\label{tab:Models}
\end{table}

\def\SIZER{.8}
\begin{figure}[H]
    \centering
    \begin{subfigure}{0.45\linewidth}
        \includegraphics[width=\SIZER\linewidth]{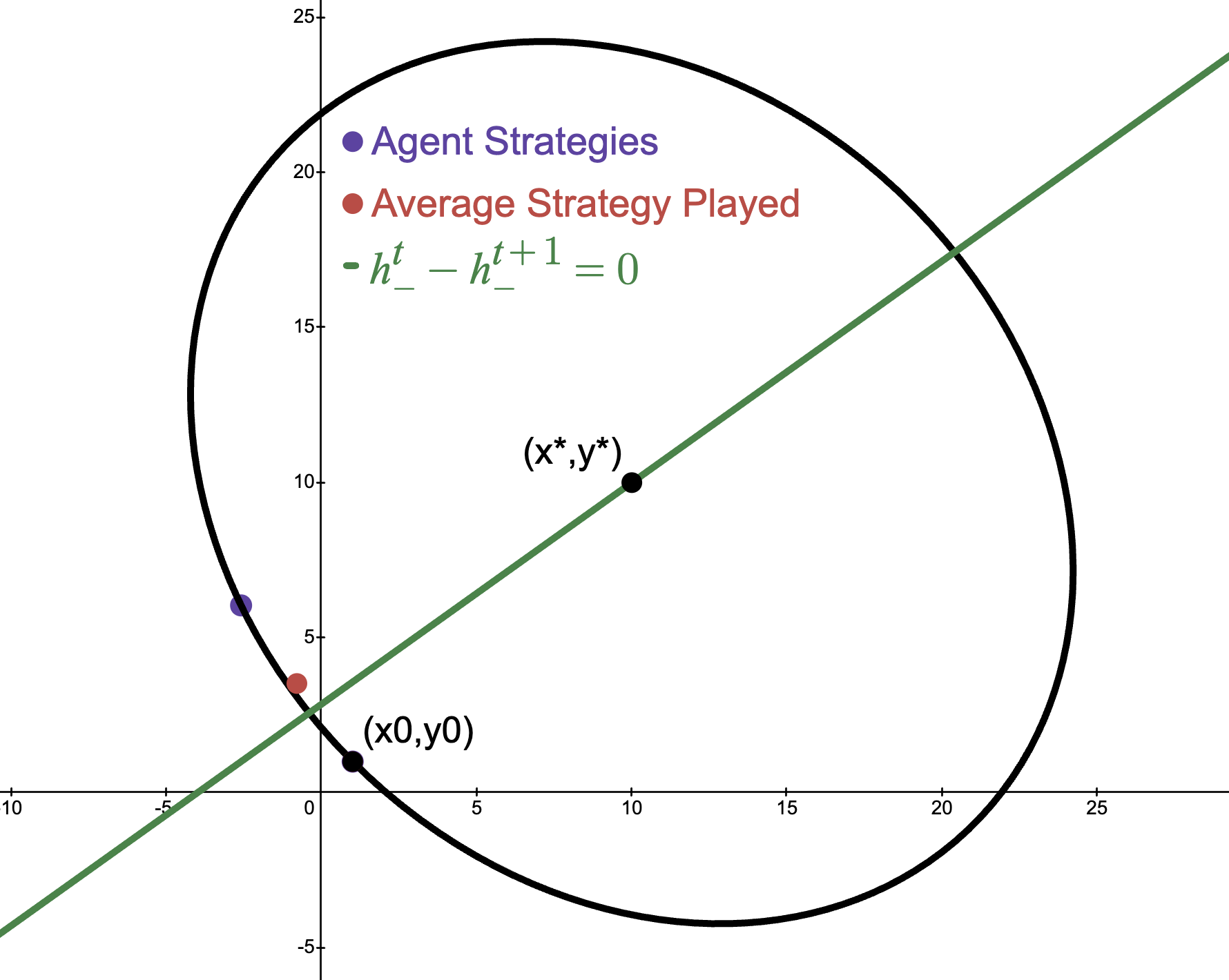}
        \caption{A linear equation after one update.}
        \label{fig:subfigA}
    \end{subfigure}
    \begin{subfigure}{0.45\linewidth}
            \includegraphics[width=\SIZER\linewidth]{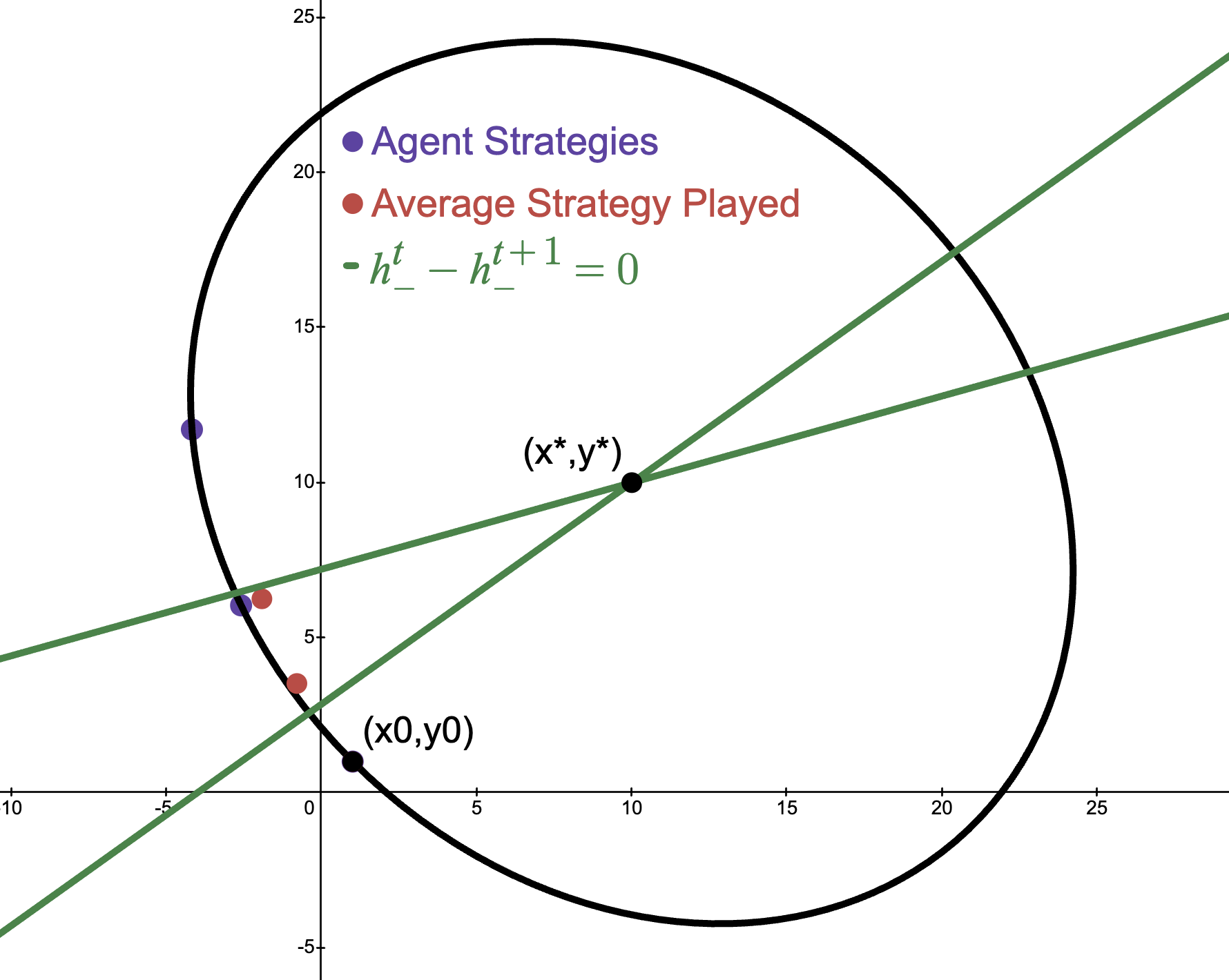}
            \caption{Two linear equations after two updates.}
            \label{fig:subfigC}
         \end{subfigure}
    \caption{Characterizing the set of NE via \ref{eqn:AltGD} in dimension 1. In contrast, the time-average of the strategies are still far from the set of NE. }
    \label{fig:valid_linear_equations}
\end{figure}

All three models are framed from the perspective of agent 1 and require knowledge of agent 1's gradient $(Ay^t-b_1)$ and agent 1's strategy $x^t$. 
For the model in Section \ref{sec:ModelForSimulation}, we additionally require knowledge of agent 2's strategy $y^t$; this is quite natural in settings such as Generative Adversarial Neural Networks (GANs) \cite{goodfellow2014generative} where a third party can observe the state of all agents in every iteration. 
In this model, we directly characterize $(\boldsymbol{x^{*}},\boldsymbol{y^*})$. 

In true settings of competition, e.g., economic models, or in settings where privacy is important, e.g., federated learning, directly accessing both agents' strategies is less realistic.
As a result, in the second two models (Sections \ref{sec:EconomicModel} and \ref{sec:FlModel}), we focus on decreasing the amount of information required from agent 2; instead of requiring agent 2's exact strategy, $y^t$, the proposed models instead only require the size of agent 2's strategy, $||y^t||$, or the size of agent 2's gradient, $||Bx^t-b_2||$, respectively. 
The increased level of privacy comes at a cost; when decreasing the amount of information from agent 2, we only characterize agent 1's NE $\boldsymbol{x^{*}}$. 
We remark that a symmetric model defined from the perspective of agent 2 can simultaneously learn agent 2's NE, $\boldsymbol{y^*}$.

\subsection{An Initial Model for Characterizing NE} \label{sec:DegenerateModel}

Since $h_-^t$ and $h_+^t$ are time-invariant by Theorem \ref{thm:zero_invariant} and \ref{thm:coord_invariant}, we can use consecutive iterations to characterize the set of NE. 
Specifically, we have $h_{-}^{t} - h_{-}^{t+1}=0$ in zero-sum games and $h_{+}^{t} - h_{+}^{t+1} = 0$ in coordination games. 
The unknown variables form a 3$k$-dimensional vector $(\boldsymbol{x^{*}}, \boldsymbol{y^*}, \boldsymbol{b_2})$, where $\boldsymbol{x^{*}}, \boldsymbol{y^*}, \boldsymbol{b_2} \in \mathbb{R}^{k}$. 
Theorems \ref{thm:zero_linear} and \ref{thm:coord_linear} ensure that by expanding $h_{-}^{t} - h_{-}^{t+1}=0$ and $h_{+}^{t} - h_{+}^{t+1} = 0$ in terms of $\boldsymbol{x^{*}}$, $\boldsymbol{y^*}$, and $\boldsymbol{b_2}$, we can construct linear systems describing these unknown quantities. 
Thus, it suffices to generate $3k$ equations using $3k+1$ pairs of ($x^{t}$, $y^{t}$) to characterize the set of NE.
We assume that at each iteration, agent 1 knows its strategy $x^t$, its gradient $Ay^t-b_1$, and has access to agent 2's strategy $y^t$.

\begin{theorem}\label{thm:zero_linear} 
Given two consecutive iterations of \ref{eqn:AltGD} in a zero-sum game, unknown variables $\boldsymbol{x^{*}}$, $\boldsymbol{y^*}$, and $\boldsymbol{b_2}$ satisfy a linear equation with observable quantities $\{x^t\}_{t=0}^{3k}$, $\{y^t\}_{t=0}^{3k}$ and $\{Ay^t-b_1\}_{t=0}^{3k}$.
Specifically, 
    \begin{align*}
        0= h_{-}^{t} - h_{-}^{t+1} = &\frac{\| x^{t} \|^{2} - \| x^{t+1} \|^{2} -2\left\langle x^{t}-x^{t+1}, \boldsymbol{x^{*}} \right\rangle}{\eta_1} + \frac{\| y^{t} \|^{2} - \|y^{t+1}\|^{2} -2\left\langle y^{t}-y^{t+1}, \boldsymbol{y^{*}} \right\rangle}{\eta_2}\\ &+ \left\langle x^{t}, \ Ay^{t}-b_1 \right\rangle + \left\langle y^{t}, \boldsymbol{\ b_2} \right\rangle - \left\langle x^{t+1}, \ Ay^{t+1}-b_1 \right\rangle - \left\langle y^{t+1}, \ \boldsymbol{b_2} \right\rangle .
    \end{align*}
\end{theorem}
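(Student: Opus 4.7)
The plan is to derive the claimed linear equation by directly applying the invariance $h_-^t = h_-^{t+1}$ established in Theorem \ref{thm:zero_invariant} and expanding the perturbed energy from Definition \ref{def:Zero_Energy} so that the dependence on the unknowns $(\boldsymbol{x^*}, \boldsymbol{y^*}, \boldsymbol{b_2})$ is made explicit.

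First I would expand each squared-distance term inside $h_-^t$ via
\begin{align*}
\| x^t - \boldsymbol{x^*} \|^2 &= \|x^t\|^2 - 2\langle x^t, \boldsymbol{x^*} \rangle + \|\boldsymbol{x^*}\|^2, \\
\| y^t - \boldsymbol{y^*} \|^2 &= \|y^t\|^2 - 2\langle y^t, \boldsymbol{y^*} \rangle + \|\boldsymbol{y^*}\|^2,
\end{align*}
and substitute back into Definition \ref{def:Zero_Energy}. This decomposes $h_-^t$ into four pieces: (i) the purely observable quantities $\|x^t\|^2/\eta_1$, $\|y^t\|^2/\eta_2$, and $\langle x^t, Ay^t - b_1\rangle$, the latter being the inner product of agent 1's strategy with its own gradient feedback; (ii) terms linear in $\boldsymbol{x^*}$ and $\boldsymbol{y^*}$ arising from the cross-terms $-2\langle x^t, \boldsymbol{x^*}\rangle/\eta_1$ and $-2\langle y^t, \boldsymbol{y^*}\rangle/\eta_2$; (iii) a term linear in $\boldsymbol{b_2}$ from $\langle y^t, \boldsymbol{b_2}\rangle$; and crucially, (iv) the quadratic-in-unknowns terms $\|\boldsymbol{x^*}\|^2/\eta_1$ and $\|\boldsymbol{y^*}\|^2/\eta_2$.

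The key step is then to form the difference $h_-^t - h_-^{t+1}$. Since the quadratic pieces in (iv) do not depend on the time index $t$, they cancel in the difference, leaving an expression that is linear in $(\boldsymbol{x^*}, \boldsymbol{y^*}, \boldsymbol{b_2})$ with coefficients built entirely from observable data, namely the strategies $\{x^t, x^{t+1}, y^t, y^{t+1}\}$ and agent 1's gradients $\{Ay^t - b_1, Ay^{t+1} - b_1\}$. Invoking Theorem \ref{thm:zero_invariant} to set this difference equal to zero and grouping like terms yields exactly the identity claimed in the theorem statement.

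\textbf{Main obstacle.} There is no substantive mathematical obstacle; the argument is essentially algebraic bookkeeping given Theorem \ref{thm:zero_invariant}. The only conceptual point worth emphasizing is that the cancellation of the $\|\boldsymbol{x^*}\|^2/\eta_1$ and $\|\boldsymbol{y^*}\|^2/\eta_2$ contributions is what converts the equation from quadratic to linear in the unknowns, which is exactly the property exploited in Section \ref{sec:ValidEquations} to assemble a solvable linear system after finitely many iterations of \ref{eqn:AltGD}.
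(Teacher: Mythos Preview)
Your proposal is correct and follows essentially the same approach as the paper: invoke the invariance $h_-^t=h_-^{t+1}$ from Theorem~\ref{thm:zero_invariant}, expand the squared-distance terms in Definition~\ref{def:Zero_Energy}, and observe that the $\|\boldsymbol{x^*}\|^2/\eta_1$ and $\|\boldsymbol{y^*}\|^2/\eta_2$ contributions cancel in the difference, leaving the stated linear equation. The paper additionally remarks that the first two terms coincide with the change-in-distance expressions of Lemma~\ref{lem:zero_partial}, but this is contextual rather than a different proof mechanism.
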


\begin{proof}
By Theorem \ref{thm:zero_invariant}, the change in energy is expressed as
    \begin{align*}
        0= h_{-}^{t} - h_{-}^{t+1} 
        = 
        &\frac{\| x^{t} - \boldsymbol{x^{*}} \|^{2}}{{\eta_1}} + \frac{\| y^{t} - \boldsymbol{y^{*}} \|^{2}}{{\eta_2}} + \left\langle x^{t}, \ Ay^{t}-b_1 \right\rangle + \left\langle y^{t}, \ \boldsymbol{b_2} \right\rangle \\
        &- 
        \frac{\| x^{t+1} - \boldsymbol{x^{*}} \|^{2}}{{\eta_1}} - \frac{\| y^{t+1} - \boldsymbol{y^{*}} \|^{2}}{{\eta_2}} - \left\langle x^{t+1}, \ Ay^{t+1}-b_1 \right\rangle - \left\langle y^{t}, \ \boldsymbol{b_2} \right\rangle \\
        = 
        &\frac{\lVert x^{t}-\boldsymbol{x^*} \rVert^{2}-\lVert x^{t+1}-\boldsymbol{x^*} \rVert^{2}}{\eta_1}+ \frac{\lVert y^{t}-\boldsymbol{y^*} \rVert^{2}-\lVert y^{t+1}-\boldsymbol{y^*}\rVert^{2}}{\eta_2} \\
        &+ 
        \left\langle x^{t}, \ Ay^{t}-b_1 \right\rangle + \left\langle y^{t}, \ \boldsymbol{b_2} \right\rangle - \left\langle x^{t+1}, \ Ay^{t+1}-b_1 \right\rangle - \left\langle y^{t+1}, \ \boldsymbol{b_2} \right\rangle
    \end{align*}

Note that the first two terms in the above equations are the change in the distance to the set of NE of Lemma \ref{lem:zero_partial}.
By expanding the terms containing $\boldsymbol{x^{*}}$ and $\boldsymbol{y^*}$ as unknowns, we get
    \begin{align*}
        &\frac{\lVert x^{t}-\boldsymbol{x^*} \rVert^{2}-\lVert x^{t+1}-\boldsymbol{x^*} \rVert^{2}}{\eta_1}+ \frac{\lVert y^{t}-\boldsymbol{y^*} \rVert^{2}-\lVert y^{t+1}-\boldsymbol{y^*}\rVert^{2}}{\eta_2}\\ 
        =
        & \frac{\| x^{t} \|^{2} - \| x^{t+1} \|^{2} -2\left\langle x^{t}-x^{t+1}, \boldsymbol{x^*} \right\rangle}{\eta_1} + \frac{\| y^{t} \|^{2} - \|y^{t+1}\|^{2} -2\left\langle y^{t}-y^{t+1}, \boldsymbol{y^*} \right\rangle}{\eta_2}
    \end{align*}
    Then, substituting the result into $0= h_{-}^{t} - h_{-}^{t+1}$ yields 
    \begin{align*}
        0= h_{-}^{t} - h_{-}^{t+1}
        = & 
        \frac{\| x^{t} \|^{2} - \| x^{t+1} \|^{2} -2\left\langle x^{t}-x^{t+1}, \boldsymbol{x^*} \right\rangle}{\eta_1} + \frac{\| y^{t} \|^{2} - \|y^{t+1}\|^{2} -2\left\langle y^{t}-y^{t+1}, \boldsymbol{y^*} \right\rangle}{\eta_2} \\
        & +
        \left\langle x^{t}, \ Ay^{t}-b_1 \right\rangle + \left\langle y^{t}, \ \boldsymbol{b_2} \right\rangle - \left\langle x^{t+1}, \ Ay^{t+1}-b_1 \right\rangle - \left\langle y^{t+1}, \ \boldsymbol{b_2} \right\rangle
    \end{align*}
    as desired. 
\end{proof}

The established equation $h_{-}^{t} - h_{-}^{t+1}=0$ is linear in the variables $\boldsymbol{x^{*}}$, $\boldsymbol{y^*}$, and $\boldsymbol{b_2}$
allowing us to solve for these variables, provided that we have $3k$ linearly independent equations.
Therefore, modulo degeneracy, we can characterize the set of NE with linear equations by observing $3k$ consecutive iterations of \ref{eqn:AltGD}. 
A similar result holds for coordination games.

\begin{restatable}{theorem}
{CoordLinear}\label{thm:coord_linear}
    Given two consecutive iterations of \ref{eqn:AltGD} in a coordination game, unknown variables $\boldsymbol{x^{*}}$, $\boldsymbol{y^*}$, and $\boldsymbol{b_2}$ satisfy a linear equation with observable quantities $\{x^t\}_{t=0}^{3k}$, $\{y^t\}_{t=0}^{3k}$ and $\{Ay^t-b_1\}_{t=0}^{3k}$.
    Specifically, 
    \begin{align*}
        0= h_{+}^{t} - h_{+}^{t+1} = &\frac{\| x^{t} \|^{2} - \| x^{t+1} \|^{2} -2\left\langle x^{t}-x^{t+1}, \boldsymbol{x^{*}} \right\rangle}{\eta_1} - \frac{\| y^{t} \|^{2} - \|y^{t+1}\|^{2} -2\left\langle y^{t}-y^{t+1}, \boldsymbol{y^{*}} \right\rangle}{\eta_2}\\ &+ \left\langle x^{t}, \ Ay^{t}-b_1 \right\rangle - \left\langle y^{t}, \ \boldsymbol{b_2} \right\rangle - \left\langle x^{t+1}, \ Ay^{t+1}-b_1 \right\rangle + \left\langle y^{t+1}, \ \boldsymbol{b_2} \right\rangle.
    \end{align*}
\end{restatable}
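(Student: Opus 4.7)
The plan is to mirror the proof of Theorem \ref{thm:zero_linear} almost verbatim, substituting the invariant $h_+^t$ (for coordination games) for $h_-^t$ (for zero-sum games). Since Theorem \ref{thm:coord_invariant} guarantees that $h_+^t$ is time-invariant, we have $h_+^t - h_+^{t+1} = 0$ for every $t\in \mathbb{Z}_{\geq 0}$, and this identity will be the source of the claimed linear equation in $\boldsymbol{x^*}$, $\boldsymbol{y^*}$, and $\boldsymbol{b_2}$.

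First, I would substitute the explicit form of $h_+^t$ from Definition \ref{def:Coord_Energy} into the difference $h_+^t - h_+^{t+1}=0$, obtaining
\begin{align*}
0 \;=\; &\frac{\|x^{t}-\boldsymbol{x^*}\|^{2} - \|x^{t+1}-\boldsymbol{x^*}\|^{2}}{\eta_1} - \frac{\|y^{t}-\boldsymbol{y^*}\|^{2} - \|y^{t+1}-\boldsymbol{y^*}\|^{2}}{\eta_2}\\
&+ \langle x^{t},\, Ay^{t}-b_1\rangle - \langle y^{t},\, \boldsymbol{b_2}\rangle - \langle x^{t+1},\, Ay^{t+1}-b_1\rangle + \langle y^{t+1},\, \boldsymbol{b_2}\rangle.
\end{align*}
Next, I would expand the squared-norm differences via the elementary identity $\|v-\boldsymbol{u}\|^{2} = \|v\|^{2} - 2\langle v,\boldsymbol{u}\rangle + \|\boldsymbol{u}\|^{2}$, applied with $v\in\{x^t,x^{t+1}\}$, $\boldsymbol{u}=\boldsymbol{x^*}$ (and symmetrically for $y$). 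The quadratic terms $\|\boldsymbol{x^*}\|^{2}$ and $\|\boldsymbol{y^*}\|^{2}$ cancel between consecutive iterations, yielding
\begin{align*}
\|x^{t}-\boldsymbol{x^*}\|^{2} - \|x^{t+1}-\boldsymbol{x^*}\|^{2} &= \|x^{t}\|^{2} - \|x^{t+1}\|^{2} - 2\langle x^{t}-x^{t+1},\, \boldsymbol{x^*}\rangle,
\end{align*}
and analogously for the $y$-terms. Substituting these expansions back produces a formula that is manifestly affine in the unknowns $\boldsymbol{x^*}$, $\boldsymbol{y^*}$, $\boldsymbol{b_2}$ with coefficients depending only on the observable quantities $\{x^t, y^t, Ay^t-b_1\}$, which is exactly the claimed identity.

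The main step requiring care — rather than a genuine obstacle — is bookkeeping of signs: the minus signs appearing in $h_+^t$ in front of both $\|y^t-y^*\|^2/\eta_2$ and $\langle y^t, b_2\rangle$ must be tracked so that they propagate to the corresponding minus signs in front of the $\boldsymbol{y^*}$-linear term and the $\boldsymbol{b_2}$-inner-products in the final expression. Apart from this, the proof is mechanically identical to the zero-sum case, so no new ideas are required beyond invoking Theorem \ref{thm:coord_invariant} in place of Theorem \ref{thm:zero_invariant}. The full derivation can then be deferred to the appendix, as is done for Theorem \ref{thm:coord_invariant} itself.
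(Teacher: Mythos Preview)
Your proposal is correct and follows essentially the same approach as the paper's own proof: invoke Theorem \ref{thm:coord_invariant} to obtain $h_+^t - h_+^{t+1} = 0$, substitute Definition \ref{def:Coord_Energy}, and expand the squared-norm differences so that the $\|\boldsymbol{x^*}\|^2$ and $\|\boldsymbol{y^*}\|^2$ terms cancel, leaving an expression affine in $\boldsymbol{x^*}$, $\boldsymbol{y^*}$, $\boldsymbol{b_2}$. The paper likewise remarks that the argument is identical to Theorem \ref{thm:zero_linear} and defers the details to the appendix.
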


The proof of Theorem \ref{thm:coord_linear} follows identically to Theorem \ref{thm:zero_linear} and appears in Appendix \ref{sec:coord_linear}. 
Thus, for both zero-sum and coordination games, our results suggest that we can characterize the NE after $3k$ iterations of \ref{eqn:AltGD}. 
However, the linear systems in Theorems \ref{thm:zero_linear} and \ref{thm:coord_linear} present a challenge;
since $B\boldsymbol{x^*}=\boldsymbol{b_2}$ by \ref{eqn:NEconditions}, the unknown variables $\boldsymbol{x^{*}}$ and $\boldsymbol{b_2}$ are linearly dependent. 
To address this issue, we propose three new models in the following subsections.

\subsection{Characterizing NE after 2k Iterations of Alternating GD}\label{sec:ModelForSimulation} 
We first show that the dependency on $\boldsymbol{b_2}$ can be removed. 
As a result, modulo degeneracy, we characterize $(\boldsymbol{x^{*}},\boldsymbol{y^*})$ after $2k$ iterations of \ref{eqn:AltGD} in both zero-sum and coordination games. 

\begin{restatable}{theorem}{ZeroW/B2}\label{thm:zero_w/_b2} Given two consecutive iterations of \ref{eqn:AltGD} in a zero-sum game, unknown variables $\boldsymbol{x^{*}}$ and $\boldsymbol{y^*}$ satisfy a linear equation with observable quantities $\{x^t\}_{t=0}^{2k}, \{y^t\}_{t=0}^{2k}$, $\{Ay^t-b_1\}_{t=0}^{2k}$. Specifically, 
\begin{align*}
    0 = &h_{-}^{t} - h_{-}^{t+1} \\
    = &\frac{\| x^{t} \|^{2} - \| x^{t+1} \|^{2} -2\left\langle x^{t}-x^{t+1}, \boldsymbol{x^{*}} \right\rangle}{\eta_1} + \frac{\| y^{t} \|^{2} - \|y^{t+1}\|^{2} -2\left\langle y^{t}-y^{t+1}, \boldsymbol{y^{*}} \right\rangle}{\eta_2} \\
    & + \left\langle x^{t}-\boldsymbol{x^{*}}, \ Ay^{t}-b_1 \right\rangle - \left\langle x^{t+1}-\boldsymbol{x^{*}}, \ Ay^{t+1}-b_1 \right\rangle.
\end{align*} 
\end{restatable}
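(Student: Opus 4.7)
The plan is to derive the statement directly from Theorem \ref{thm:zero_linear} by using the zero-sum Nash equilibrium condition to substitute away the unknown cost vector $\boldsymbol{b_2}$. The theorem already tells us that $0 = h_{-}^{t} - h_{-}^{t+1}$ can be written as a linear equation in $\boldsymbol{x^*}, \boldsymbol{y^*}, \boldsymbol{b_2}$; the goal of Theorem \ref{thm:zero_w/_b2} is simply to exhibit an equivalent form of this equation involving only $\boldsymbol{x^*}$ and $\boldsymbol{y^*}$, which is what makes the dimension of the characterization drop from $3k$ to $2k$.

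First, I would take the expression from Theorem \ref{thm:zero_linear} and collect the two $\boldsymbol{b_2}$ terms in the last line into the single inner product $\langle y^{t}-y^{t+1}, \boldsymbol{b_2}\rangle$. Next, invoke the \ref{eqn:NEconditions}: at a NE we have $B\boldsymbol{x^*} = \boldsymbol{b_2}$, and in a zero-sum game $B = -A^{\intercal}$, so $\boldsymbol{b_2} = -A^{\intercal}\boldsymbol{x^*}$. Substituting this in gives
\begin{equation*}
\langle y^{t}-y^{t+1}, \boldsymbol{b_2}\rangle = -\langle y^{t}-y^{t+1}, A^{\intercal}\boldsymbol{x^*}\rangle = \langle \boldsymbol{x^*}, A(y^{t+1}-y^{t})\rangle.
\end{equation*}

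Then I would plug this back into the expression from Theorem \ref{thm:zero_linear} and regroup the resulting $\boldsymbol{x^*}$-terms with the existing $\langle x^{t}, Ay^{t}-b_1\rangle$ and $\langle x^{t+1}, Ay^{t+1}-b_1\rangle$ terms. Concretely, writing $\langle \boldsymbol{x^*}, A(y^{t+1}-y^{t})\rangle = \langle \boldsymbol{x^*}, Ay^{t+1}-b_1\rangle - \langle \boldsymbol{x^*}, Ay^{t}-b_1\rangle$ (which is legal because the additional $\pm\langle \boldsymbol{x^*}, b_1\rangle$ terms cancel), we can combine these with $\langle x^{t}, Ay^{t}-b_1\rangle - \langle x^{t+1}, Ay^{t+1}-b_1\rangle$ to produce exactly $\langle x^{t}-\boldsymbol{x^*}, Ay^{t}-b_1\rangle - \langle x^{t+1}-\boldsymbol{x^*}, Ay^{t+1}-b_1\rangle$. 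The first two summands involving $\|x^{t}\|^2, \|x^{t+1}\|^2, \|y^{t}\|^2, \|y^{t+1}\|^2$ are untouched and carry over verbatim from Theorem \ref{thm:zero_linear}.

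There is no serious obstacle here; the argument is a short algebraic rearrangement, and the only substantive ingredient is that the zero-sum NE condition lets us express $\boldsymbol{b_2}$ as an affine function of $\boldsymbol{x^*}$. The mild subtlety worth flagging in the write-up is that the $b_1$ appearing in the observable quantity $Ay^{t}-b_1$ is the (possibly unknown) cost, but it cancels from both sides of the $\boldsymbol{x^*}$-insertion, so no new unknowns enter. Concluding that the resulting equation uses only the observables $\{x^t, y^t, Ay^t-b_1\}$ and is linear in $\boldsymbol{x^*}, \boldsymbol{y^*}$ is then immediate from inspection, so it suffices to observe $2k$ such iterations to produce enough equations, modulo degeneracy, to solve for $(\boldsymbol{x^*}, \boldsymbol{y^*})$.
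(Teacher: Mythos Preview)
Your proposal is correct and follows essentially the same approach as the paper's proof: both start from Theorem \ref{thm:zero_linear}, invoke the zero-sum NE condition $\boldsymbol{b_2}=-A^\intercal\boldsymbol{x^*}$ to replace the $\boldsymbol{b_2}$-terms, and observe that the resulting $\pm\langle\boldsymbol{x^*},\boldsymbol{b_1}\rangle$ contributions cancel so that the final expression is linear in $(\boldsymbol{x^*},\boldsymbol{y^*})$ alone. The only cosmetic difference is that the paper substitutes $\langle y^t,\boldsymbol{b_2}\rangle$ and $\langle y^{t+1},\boldsymbol{b_2}\rangle$ separately whereas you first collect them into $\langle y^t-y^{t+1},\boldsymbol{b_2}\rangle$; the algebra is otherwise identical.
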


\begin{proof}
    Recall agent 1's NE is such that $-A^\intercal \boldsymbol{x^{*}}= \boldsymbol{b_2}$, and therefore
    \begin{align*}
        \left\langle y^{t}, \ \boldsymbol{b_2} \right\rangle &= \left\langle y^{t}, \ -A^{\intercal}\boldsymbol{x^{*}} \right\rangle \\
        &= -\left\langle \boldsymbol{x^{*}}, \ Ay^{t} \right\rangle \\
        &= -\left\langle \boldsymbol{x^{*}}, \ Ay^{t}-b_1 \right\rangle-\left\langle \boldsymbol{x^{*}}, \ \boldsymbol{b_1} \right\rangle
    \end{align*}

    Following the above steps identically,
    \begin{align*}
        \left\langle y^{t+1}, \ \boldsymbol{b_2} \right\rangle = -\left\langle \boldsymbol{x^{*}}, \ Ay^{t+1}-b_1 \right\rangle-\left\langle \boldsymbol{x^{*}}, \ \boldsymbol{b_1} \right\rangle
    \end{align*}

    Theorem \ref{thm:zero_w/_b2} then follows by substituting these expressions for $\left\langle y^{t}, \ \boldsymbol{b_2} \right\rangle$ and $\left\langle y^{t+1}, \ \boldsymbol{b_2} \right\rangle$ into Theorem \ref{thm:zero_linear}.
    Notably, in this substitution, the nonlinear terms $\langle \boldsymbol{x^{*}}, \boldsymbol{b_1}\rangle$ cancel out.
\end{proof}

\begin{restatable}{theorem}{CoordWoB}\label{thm:coord_w/_b2} Given two consecutive iterations of \ref{eqn:AltGD} in a coordination game, unknown variables $\boldsymbol{x^{*}}$ and $\boldsymbol{y^*}$ satisfy linear equations with observable quantities $\{x^t\}_{t=0}^{2k}, \{y^t\}_{t=0}^{2k}$, $\{Ay^t-b_1\}_{t=0}^{2k}$. Specifically, 
    \begin{align*}
        0 = &h_{+}^{t} - h_{+}^{t+1} \\
        = &\frac{\| x^{t} \|^{2} - \| x^{t+1} \|^{2} -2\left\langle x^{t}-x^{t+1}, \boldsymbol{x^{*}} \right\rangle}{\eta_1} - \frac{\| y^{t} \|^{2} - \|y^{t+1}\|^{2} -2\left\langle y^{t}-y^{t+1}, \boldsymbol{y^{*}} \right\rangle}{\eta_2} \\
        & + \left\langle x^{t} - \boldsymbol{x^{*}}, \ Ay^{t}-b_1 \right\rangle - \left\langle x^{t+1} - \boldsymbol{x^{*}}, \ Ay^{t+1}-b_1 \right\rangle.
    \end{align*}
\end{restatable}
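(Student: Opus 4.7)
The plan is to mimic the proof of Theorem \ref{thm:zero_w/_b2} almost verbatim, adjusting only for the sign flip that distinguishes coordination games ($B = A^\intercal$) from zero-sum games ($B = -A^\intercal$). The starting point is Theorem \ref{thm:coord_linear}, which already expresses $h_{+}^{t} - h_{+}^{t+1} = 0$ as a linear relation in the unknowns $\boldsymbol{x^{*}}, \boldsymbol{y^{*}}, \boldsymbol{b_2}$ using only observable quantities. The goal is to eliminate $\boldsymbol{b_2}$ via the Nash equilibrium condition \ref{eqn:NEconditions}, which for a coordination game reads $A^\intercal \boldsymbol{x^{*}} = \boldsymbol{b_2}$.

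First, I would substitute this NE relation into the inner products $\langle y^t, \boldsymbol{b_2}\rangle$ and $\langle y^{t+1}, \boldsymbol{b_2}\rangle$ that appear in Theorem \ref{thm:coord_linear}, obtaining
\begin{align*}
\langle y^{t}, \boldsymbol{b_2}\rangle = \langle y^{t}, A^\intercal \boldsymbol{x^{*}}\rangle = \langle \boldsymbol{x^{*}}, Ay^{t}\rangle = \langle \boldsymbol{x^{*}}, Ay^{t}-b_1\rangle + \langle \boldsymbol{x^{*}}, b_1\rangle,
\end{align*}
and the analogous identity at iteration $t+1$. Note that $Ay^{t}-b_1$ is an observable quantity (agent 1's gradient), whereas the residual $\langle \boldsymbol{x^{*}}, b_1\rangle$ is a nonlinear term involving the unknown $\boldsymbol{b_1}$.

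Next, I would plug these two expressions into Theorem \ref{thm:coord_linear}, where $\langle y^{t}, \boldsymbol{b_2}\rangle$ and $\langle y^{t+1}, \boldsymbol{b_2}\rangle$ appear with opposite signs ($-\langle y^{t}, \boldsymbol{b_2}\rangle + \langle y^{t+1}, \boldsymbol{b_2}\rangle$, in contrast to $+\langle y^{t}, \boldsymbol{b_2}\rangle - \langle y^{t+1}, \boldsymbol{b_2}\rangle$ in the zero-sum case). This sign pattern is exactly what makes the offending nonlinear term $\langle \boldsymbol{x^{*}}, b_1\rangle$ cancel between the two substitutions, leaving only the observable residuals $\pm \langle \boldsymbol{x^{*}}, Ay^{t}-b_1\rangle$ and $\pm \langle \boldsymbol{x^{*}}, Ay^{t+1}-b_1\rangle$. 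Grouping these with the existing terms $\langle x^{t}, Ay^{t}-b_1\rangle$ and $-\langle x^{t+1}, Ay^{t+1}-b_1\rangle$ from Theorem \ref{thm:coord_linear} collapses them into $\langle x^{t}-\boldsymbol{x^{*}}, Ay^{t}-b_1\rangle - \langle x^{t+1}-\boldsymbol{x^{*}}, Ay^{t+1}-b_1\rangle$, which is precisely the claimed form.

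I do not expect any real obstacle: the proof is a direct analogue of Theorem \ref{thm:zero_w/_b2}, and the only place care is needed is in tracking signs, since the coordination energy $h_{+}^{t}$ has a minus sign in front of $\|y^{t}-\boldsymbol{y^{*}}\|^2/\eta_2$ and $\langle y^{t}, \boldsymbol{b_2}\rangle$, while the coordination NE condition $A^\intercal \boldsymbol{x^{*}} = \boldsymbol{b_2}$ likewise differs in sign from its zero-sum counterpart $-A^\intercal \boldsymbol{x^{*}} = \boldsymbol{b_2}$. These two sign changes compound so that the cancellation of the nonlinear term $\langle \boldsymbol{x^{*}}, b_1\rangle$ still occurs, which is the one structural fact the theorem relies on.
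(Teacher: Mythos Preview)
Your proposal is correct and follows essentially the same approach as the paper's own proof: start from Theorem \ref{thm:coord_linear}, use the coordination-game NE condition $A^\intercal \boldsymbol{x^{*}} = \boldsymbol{b_2}$ to rewrite $\langle y^{t}, \boldsymbol{b_2}\rangle$ and $\langle y^{t+1}, \boldsymbol{b_2}\rangle$ in terms of the observable gradients $Ay^{t}-b_1$, $Ay^{t+1}-b_1$ plus the nonlinear residual $\langle \boldsymbol{x^{*}}, b_1\rangle$, and observe that the residuals cancel because they enter with opposite signs. Your sign bookkeeping is accurate and your explanation of why the cancellation survives the zero-sum-to-coordination transition is clearer than the paper's.
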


The proof of Theorem \ref{thm:coord_w/_b2} follows identically to Theorem \ref{thm:zero_w/_b2} and is provided in Appendix \ref{sec:coord_w/_b2}. 
The new linear equations in Theorem \ref{thm:coord_w/_b2} contain no terms involving $\boldsymbol{b_2}$, allowing us to solve the linear system with only $2k$ equations.

\subsection{Characterizing Agent 1's NE Using the Size of Agent 2's Strategy}\label{sec:EconomicModel}

In economic settings, it may be unnatural to assume that agent 1 has access to agent 2's strategies when computing their NE. 
In this section, we improve on the previous model by decreasing its dependency on agent 2's strategy. 
We remark that this adjustment initially introduces a nonlinear term $\langle \boldsymbol{x^*}, \boldsymbol{b_1}\rangle$ that is easily resolved in Theorem \ref{thm:zero_less_yt_linear}.

\begin{restatable}{theorem}{ZeroLessBnonlinear}\label{thm:zero_less_yt_nonlinear} Given two consecutive iterations of \ref{eqn:AltGD} in a zero-sum game, unknown variables $\boldsymbol{x^*}$ and $\boldsymbol{b_1}$ are characterized as \underline{nonlinear} equations with observable quantities $\{x^t\}_{t=0}^{2k}, \{||y^t||\}_{t=0}^{2k}, \{Ay^t-b_1\}_{t=0}^{2k}$. Specifically,
    \begin{align*}
        0 = &h_{-}^{t} - h_{-}^{t+1} \\ 
        = &\frac{\| x^{t} \|^{2} - \| x^{t+1} \|^{2} -2\left\langle x^{t}-x^{t+1}, \boldsymbol{x^{*}} \right\rangle}{\eta_1} 
        + 
        \frac{\| y^{t} \|^{2} - \| y^{t+1} \|^{2}}{\eta_2} -2\left\langle x^{t+1}, \boldsymbol{b_1} \right\rangle 
        +
        2\left\langle \boldsymbol{x^{*}}, \boldsymbol{b_1} \right\rangle \\
        &+ 
        \left\langle x^{t}-\boldsymbol{x^{*}}, \ Ay^{t}-b_1 \right\rangle 
        - 
        \left\langle x^{t+1}-\boldsymbol{x^{*}}, \ Ay^{t+1}-b_1 \right\rangle
    \end{align*}

    \begin{proof}
        Expanding the terms in $h_{-}^{t}-h_{-}^{t+1}=0$, which involve $y^{t}$ and $y^{t+1}$ in Theorem \ref{thm:zero_invariant}, results in the following.
        \begin{align*}
            \frac{\| y^{t}-\boldsymbol{y^*} \|^{2}}{\eta_2} - \frac{\| y^{t+1}-\boldsymbol{y^*} \|^{2}}{\eta_2}=&\frac{\| y^{t} \|^{2} - 2\left\langle y^{t}, \boldsymbol{y^*} \right\rangle + \|\boldsymbol{y^*} \|^{2}}{\eta_2} - \frac{\| y^{t+1} \|^{2} - 2\left\langle y^{t+1}, \boldsymbol{y^*} \right\rangle + \|\boldsymbol{y^*} \|^{2}}{\eta_2} \\=&\frac{\| y^{t} \|^{2} - \| y^{t+1} \|^{2} - 2\left\langle y^{t}, \boldsymbol{y^*} \right\rangle +2\left\langle y^{t+1}, \boldsymbol{y^*} \right\rangle}{\eta_2}
        \end{align*}

        Applying \ref{eqn:AltGD}, \ref{eqn:NEconditions}, and $B=-A^{\intercal}$ yields
        \begin{align*}
            - \frac{2\left\langle y^{t}, \boldsymbol{y^*} \right\rangle}{\eta_2} + \frac{2\left\langle y^{t+1}, \boldsymbol{y^*} \right\rangle}{\eta_2} &= - \frac{2\left\langle y^{t}, \boldsymbol{y^*} \right\rangle}{\eta_2} + \frac{\left\langle 2y^{t} + 2\eta_2(-A^{\intercal}x^{t+1}-b_2), \boldsymbol{y^*} \right\rangle}{\eta_2} \\
            &= - \frac{2\left\langle y^{t}, \boldsymbol{y^*} \right\rangle}{\eta_2} + \frac{2\left\langle y^{t}, \boldsymbol{y^*} \right\rangle}{\eta_2} + \frac{2\left\langle \eta_2(-A^{\intercal}x^{t+1}-b_2), \boldsymbol{y^*} \right\rangle}{\eta_2} \\
            &= -2\left\langle A^{\intercal}x^{t+1}, \boldsymbol{y^*} \right\rangle -2\left\langle \boldsymbol{b_2}, \boldsymbol{y^*} \right\rangle \\
            &= -2\left\langle x^{t+1}, A\boldsymbol{y^*} \right\rangle -2\left\langle -A^{\intercal}\boldsymbol{x^*}, \boldsymbol{y^*} \right\rangle \\
            &= -2\left\langle x^{t+1}, \boldsymbol{b_1} \right\rangle +2\left\langle \boldsymbol{x^*}, A\boldsymbol{y^*} \right\rangle \\
            &= -2\left\langle x^{t+1}, \boldsymbol{b_1} \right\rangle +2\left\langle \boldsymbol{x^*}, \boldsymbol{b_1} \right\rangle
        \end{align*}
        implying
        \begin{align*}
            \frac{\| y^{t}-y^* \|^{2}}{\eta_2} - \frac{\| y^{t+1}-y^* \|^{2}}{\eta_2} &=\frac{\| y^{t} \|^{2} - \| y^{t+1} \|^{2}}{\eta_2} -2\left\langle x^{t+1}, \boldsymbol{b_1} \right\rangle +2\left\langle \boldsymbol{x^*}, \boldsymbol{b_1} \right\rangle
        \end{align*}
        Therefore, 
        \begin{align*}
            0 = &h_{-}^{t} - h_{-}^{t+1} \\= &\frac{\| x^{t} \|^{2} - \| x^{t+1} \|^{2} -2\left\langle x^{t}-x^{t+1}, \boldsymbol{x^*} \right\rangle}{\eta_1} + \frac{\| y^{t} \|^{2} - \| y^{t+1} \|^{2}}{\eta_2} -2\left\langle x^{t+1}, \boldsymbol{b_1} \right\rangle +2\left\langle \boldsymbol{x^{*}}, \boldsymbol{b_1} \right\rangle \\
            &+ \left\langle x^{t}-\boldsymbol{x^*}, \ Ay^{t}-b_1 \right\rangle - \left\langle x^{t+1}-\boldsymbol{x^*}, \ Ay^{t+1}-b_1 \right\rangle.
        \end{align*}
        as desired.
    \end{proof}
\end{restatable}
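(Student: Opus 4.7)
The plan is to start from Theorem \ref{thm:zero_w/_b2}, which has already eliminated the unobservable $\boldsymbol{b_2}$ but still retains agent 2's full strategy vector through the inner product $-2\langle y^t - y^{t+1}, \boldsymbol{y^*}\rangle/\eta_2$. The goal of the present statement is to eliminate this remaining vector dependence on $y^t$, keeping only the observable scalar norms $\|y^t\|$. Everything else in Theorem \ref{thm:zero_w/_b2} already uses observable quantities listed in the required observations column of Table \ref{tab:Models}, so the task reduces to rewriting this single inner product in a way that no longer uses the vector $y^t$ itself and no longer involves the unknown $\boldsymbol{y^*}$.

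First, I would apply \ref{eqn:AltGD} to substitute $y^{t+1} - y^t = \eta_2(-A^\intercal x^{t+1} - b_2)$, using $B = -A^\intercal$ for zero-sum. This cancels the $\eta_2$ in the denominator and produces $-2\langle x^{t+1}, A\boldsymbol{y^*}\rangle - 2\langle \boldsymbol{b_2}, \boldsymbol{y^*}\rangle$. Now $\boldsymbol{y^*}$ appears only under $A$ or paired with $\boldsymbol{b_2}$, which are exactly the quantities pinned down by \ref{eqn:NEconditions}.

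Next, I would apply those NE conditions in two moves. The identity $A\boldsymbol{y^*} = \boldsymbol{b_1}$ converts the first term into $-2\langle x^{t+1}, \boldsymbol{b_1}\rangle$, a new \emph{linear} term in the unknown $\boldsymbol{b_1}$ involving only the observable iterate $x^{t+1}$. The identity $\boldsymbol{b_2} = -A^\intercal \boldsymbol{x^*}$ converts the second term into $2\langle A^\intercal \boldsymbol{x^*}, \boldsymbol{y^*}\rangle = 2\langle \boldsymbol{x^*}, A\boldsymbol{y^*}\rangle = 2\langle \boldsymbol{x^*}, \boldsymbol{b_1}\rangle$, the nonlinear cross-term advertised in the statement. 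Substituting the resulting expression in place of the $\boldsymbol{y^*}$ inner product inside Theorem \ref{thm:zero_w/_b2} yields precisely the claimed identity, with the $\|y^t\|^2 - \|y^{t+1}\|^2$ piece surviving untouched from the $\|y-\boldsymbol{y^*}\|^2$ expansion.

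The key obstacle, and the reason this theorem carries the label \emph{nonlinear}, is that replacing $\boldsymbol{y^*}$ with the observable gradient of agent 1 forces the unknown $\boldsymbol{x^*}$ to be paired against the unknown $\boldsymbol{b_1}$ through the NE relation $A\boldsymbol{y^*} = \boldsymbol{b_1}$. In Theorem \ref{thm:zero_w/_b2} the analogous $\langle \boldsymbol{x^*}, b_1\rangle$ terms appeared symmetrically at both iterations $t$ and $t+1$ and cancelled in the difference $h_-^t - h_-^{t+1}$; here, by contrast, the $\boldsymbol{y^*}$ substitution arises asymmetrically from the single update $y^{t+1} - y^t$, so the nonlinear cross-term persists. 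The only real care required is in tracking signs across the two applications of \ref{eqn:NEconditions}; the arithmetic is otherwise routine, and the subsequent result (Theorem \ref{thm:zero_less_yt_linear}) will linearize $\langle \boldsymbol{x^*}, \boldsymbol{b_1}\rangle$ by combining two instances of this identity.
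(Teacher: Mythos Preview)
Your proposal is correct and follows essentially the same route as the paper: both compute $-2\langle y^t - y^{t+1}, \boldsymbol{y^*}\rangle/\eta_2$ by substituting the \ref{eqn:AltGD} update $y^{t+1}-y^t=\eta_2(-A^\intercal x^{t+1}-b_2)$ and then applying the two \ref{eqn:NEconditions} $A\boldsymbol{y^*}=\boldsymbol{b_1}$ and $\boldsymbol{b_2}=-A^\intercal\boldsymbol{x^*}$ to obtain $-2\langle x^{t+1},\boldsymbol{b_1}\rangle+2\langle\boldsymbol{x^*},\boldsymbol{b_1}\rangle$. The only cosmetic difference is that you explicitly take Theorem~\ref{thm:zero_w/_b2} as the starting point, whereas the paper nominally starts from the raw invariant and re-expands the $y$-distance term, but then silently uses the Theorem~\ref{thm:zero_w/_b2} form (with $\langle x^t-\boldsymbol{x^*},Ay^t-b_1\rangle$ already assembled) when writing the final expression; your framing is arguably cleaner.
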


Since the nonlinear term $\langle \boldsymbol{x^*}, \boldsymbol{b_1}\rangle$ is independent of time, it is straightforward to eliminate the nonlinearities by combining consecutive equations.

\begin{restatable}{theorem}{ZeroLessBlinear}\label{thm:zero_less_yt_linear} Given three consecutive iterations of \ref{eqn:AltGD} in a zero-sum game, unknown variables $\boldsymbol{x^*}$ and $\boldsymbol{b_1}$ are characterized as linear equations with observable quantities $\{x^t\}_{t=0}^{2k+1}, \{||y^t||\}_{t=0}^{2k+1}, \{Ay^t-b_1\}_{t=0}^{2k+1}$. Specifically,
    \begin{align*}
        0 &= (h_{-}^{t} - h_{-}^{t+1}) - (h_{-}^{t+1} - h_{-}^{t+2}) \\[5pt]
        &= 
        h_{-}^{t} - 2 h_{-}^{t+1} + h_{-}^{t+2} \\[5pt]
        &= 
        \frac{\|x^t\|^2 - 2\|x^{t+1}\|^2 + \|x^{t+2}\|^2}{\eta_1} 
        +
        \frac{\|y^t\|^2 - 2\|y^{t+1}\|^2 + \|y^{t+2}\|^2}{\eta_2} \\[5pt]
        & \quad + 
        \left\langle x^t - \boldsymbol{x^*},\ Ay^t - b_1 \right\rangle 
        - 
        2\left\langle x^{t+1} - \boldsymbol{x^*},\ Ay^{t+1} - b_1 \right\rangle 
        + \left\langle x^{t+2} - \boldsymbol{x^*},\ Ay^{t+2} - b_1 \right\rangle \\[5pt]
        & \quad - 
        \frac{2\left\langle x^t - 2x^{t+1} + x^{t+2},\ \boldsymbol{x^*} \right\rangle}{\eta_1}
        -
        2\left\langle x^{t+1} - x^{t+2},\ \boldsymbol{b_1} \right\rangle.
    \end{align*}
\end{restatable}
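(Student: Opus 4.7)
The plan is to derive the identity by subtracting the nonlinear equation from Theorem \ref{thm:zero_less_yt_nonlinear} applied at two consecutive time windows: once for the pair $(t,t+1)$ and once for the pair $(t+1,t+2)$. Both applications yield an expression of the form $0 = h_-^{\tau} - h_-^{\tau+1}$ that contains the same time-independent nonlinear term $+2\langle \boldsymbol{x^*}, \boldsymbol{b_1}\rangle$. Subtracting the second equation from the first therefore immediately eliminates this nonlinearity, leaving a purely linear expression in $\boldsymbol{x^*}$ and $\boldsymbol{b_1}$.

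More concretely, the first step is to write out
\begin{align*}
0 &= h_-^t - h_-^{t+1}, \\
0 &= h_-^{t+1} - h_-^{t+2},
\end{align*}
using Theorem \ref{thm:zero_less_yt_nonlinear}, and then form $(h_-^t - h_-^{t+1}) - (h_-^{t+1} - h_-^{t+2}) = h_-^t - 2h_-^{t+1} + h_-^{t+2}$. The key observation is that the only $\boldsymbol{b_1}$-dependent pieces in the nonlinear formulation are the linear-in-$\boldsymbol{b_1}$ terms $-2\langle x^{\tau+1}, \boldsymbol{b_1}\rangle$ and the time-independent nonlinear term $+2\langle \boldsymbol{x^*}, \boldsymbol{b_1}\rangle$. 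The latter appears with the same coefficient in both differences and cancels exactly, while the former combine into the single linear term $-2\langle x^{t+1}-x^{t+2},\boldsymbol{b_1}\rangle$ stated in the theorem.

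Second, I would collect the remaining terms. The $\tfrac{1}{\eta_1}$-terms telescope into $\tfrac{\|x^t\|^2 - 2\|x^{t+1}\|^2 + \|x^{t+2}\|^2}{\eta_1}$ together with $-\tfrac{2\langle x^t - 2x^{t+1} + x^{t+2}, \boldsymbol{x^*}\rangle}{\eta_1}$; the $\tfrac{1}{\eta_2}$-terms (which are already free of any unknowns thanks to Theorem \ref{thm:zero_less_yt_nonlinear}) combine into $\tfrac{\|y^t\|^2 - 2\|y^{t+1}\|^2 + \|y^{t+2}\|^2}{\eta_2}$; and the gradient-inner-product terms combine with alternating signs $+, -2, +$ into $\langle x^t - \boldsymbol{x^*}, Ay^t - b_1\rangle - 2\langle x^{t+1} - \boldsymbol{x^*}, Ay^{t+1} - b_1\rangle + \langle x^{t+2} - \boldsymbol{x^*}, Ay^{t+2} - b_1\rangle$. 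Summing these three groups together with the surviving $\boldsymbol{b_1}$-term recovers precisely the claimed formula.

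There is no real obstacle beyond keeping the bookkeeping straight: the nonlinear term cancels for a structural reason (time-independence), not through a cleverly chosen identity, so the argument is essentially a direct substitution followed by term collection. The only thing to double-check is that the coefficients of $\boldsymbol{x^*}$ and $\boldsymbol{b_1}$ do not accidentally combine to reintroduce a bilinear product, which they do not: $\boldsymbol{x^*}$ appears only inside inner products with observable quantities (as $\langle x^t - 2x^{t+1} + x^{t+2}, \boldsymbol{x^*}\rangle$ and through the $\langle \boldsymbol{x^*}, Ay^\tau - b_1\rangle$ contributions with observable $Ay^\tau - b_1$), and $\boldsymbol{b_1}$ appears only linearly against observable $x^{t+1} - x^{t+2}$. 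Thus the resulting equation is genuinely linear in $(\boldsymbol{x^*}, \boldsymbol{b_1})$ and depends only on the observables $\{x^t\}_{t=0}^{2k+1}$, $\{\|y^t\|\}_{t=0}^{2k+1}$, and $\{Ay^t - b_1\}_{t=0}^{2k+1}$, as asserted.
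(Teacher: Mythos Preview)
Your proposal is correct and matches the paper's own proof exactly: the paper states that the result follows directly by subtracting consecutive equations from Theorem~\ref{thm:zero_less_yt_nonlinear}, which is precisely what you do. Your additional bookkeeping about how the individual groups of terms combine is accurate and simply makes explicit what the paper leaves implicit.
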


A proof follows directly by subtracting consecutive equations from Theorem \ref{thm:zero_less_yt_nonlinear}.
Doing so requires an additional constraint, which requires $2k+1$ iterations of \ref{eqn:AltGD} to characterize the set of NE.
The same result also holds for coordination games. 

\begin{restatable}{theorem}{CoordLessBnonlinear}\label{thm:coord_less_yt_nonlinear} Given two consecutive iterations of \ref{eqn:AltGD} in a coordination game, unknown variables $\boldsymbol{x^*}$ and $\boldsymbol{b_1}$ are characterized as \underline{nonlinear} equations with observable quantities $\{x^t\}_{t=0}^{2k}, \{||y^t||\}_{t=0}^{2k}, \{Ay^t-b_1\}_{t=0}^{2k}$. Specifically,
    \begin{align*}
        0 = &h_{+}^{t} - h_{+}^{t+1} \\
        = &\frac{\| x^{t} \|^{2} - \| x^{t+1} \|^{2} -2\left\langle x^{t}-x^{t+1}, \boldsymbol{x^{*}} \right\rangle}{\eta_1} - \frac{\| y^{t} \|^{2} - \| y^{t+1} \|^{2}}{\eta_2}-2\left\langle x^{t+1}, \boldsymbol{b_1} \right\rangle +2\left\langle \boldsymbol{x^{*}}, \boldsymbol{b_1} \right\rangle \\
        & + \left\langle x^{t}-\boldsymbol{x^{*}}, \ Ay^{t}-b_1 \right\rangle - \left\langle x^{t+1}-\boldsymbol{x^{*}}, \ Ay^{t+1}-b_1 \right\rangle.
    \end{align*}
\end{restatable}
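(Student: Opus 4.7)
The plan is to mirror the proof of Theorem~\ref{thm:zero_less_yt_nonlinear}, carefully tracking the two sign changes induced by the coordination setting: the energy $h_+^t$ carries $-\tfrac{\|y^t-\boldsymbol{y^*}\|^2}{\eta_2}$ (versus $+$ in the zero-sum $h_-^t$), and the \ref{eqn:NEconditions} condition for agent~2 reads $\boldsymbol{b_2}=A^\intercal \boldsymbol{x^*}$ (versus $\boldsymbol{b_2}=-A^\intercal \boldsymbol{x^*}$ in zero-sum). I would begin from the identity $0=h_+^t-h_+^{t+1}$, guaranteed by Theorem~\ref{thm:coord_invariant}, and split it into four pieces: the $x$-norm difference, the signed $y$-norm difference, the inner-product pair $\langle x^t,Ay^t-b_1\rangle-\langle x^{t+1},Ay^{t+1}-b_1\rangle$, and the residual $-\langle y^t-y^{t+1},\boldsymbol{b_2}\rangle$ coming from the last term of $h_+^t$.

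The first summand expands routinely to $\tfrac{\|x^t\|^2-\|x^{t+1}\|^2-2\langle x^t-x^{t+1},\boldsymbol{x^*}\rangle}{\eta_1}$. For the $y$-norm difference, expanding $\|y^\tau-\boldsymbol{y^*}\|^2$ gives $-\tfrac{\|y^t\|^2-\|y^{t+1}\|^2}{\eta_2}+\tfrac{2\langle y^t-y^{t+1},\boldsymbol{y^*}\rangle}{\eta_2}$. I would then invoke \ref{eqn:AltGD} with $B=A^\intercal$ to substitute $\tfrac{y^t-y^{t+1}}{\eta_2}=\boldsymbol{b_2}-A^\intercal x^{t+1}$, and apply \ref{eqn:NEconditions} (both $A\boldsymbol{y^*}=\boldsymbol{b_1}$ and $A^\intercal \boldsymbol{x^*}=\boldsymbol{b_2}$) to collapse the remaining inner product to $-2\langle x^{t+1},\boldsymbol{b_1}\rangle+2\langle \boldsymbol{x^*},\boldsymbol{b_1}\rangle$. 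Notably, the two coordination sign-flips (on the $y$-norm and on $\boldsymbol{b_2}$) cancel, so this contribution is identical in form to the zero-sum case.

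The last step is to absorb the residual $-\langle y^t-y^{t+1},\boldsymbol{b_2}\rangle$ into the inner-product summands using $\boldsymbol{b_2}=A^\intercal \boldsymbol{x^*}$, which gives $-\langle y^t-y^{t+1},\boldsymbol{b_2}\rangle=\langle \boldsymbol{x^*},A(y^{t+1}-y^t)\rangle$. Adding and subtracting $\langle \boldsymbol{x^*},Ay^\tau-b_1\rangle$ at $\tau\in\{t,t+1\}$ then identifies $\langle x^t,Ay^t-b_1\rangle-\langle x^{t+1},Ay^{t+1}-b_1\rangle-\langle y^t-y^{t+1},\boldsymbol{b_2}\rangle$ with $\langle x^t-\boldsymbol{x^*},Ay^t-b_1\rangle-\langle x^{t+1}-\boldsymbol{x^*},Ay^{t+1}-b_1\rangle$, matching the final line of the claim.

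The main (and essentially only) obstacle is careful sign bookkeeping: the coordination setting flips the sign of both the $y$-norm inside $h_+^t$ and of $\boldsymbol{b_2}$ in the agent~2 NE condition, and both flips must be propagated consistently through the \ref{eqn:AltGD} substitution. Conceptually nothing new appears beyond Theorem~\ref{thm:zero_less_yt_nonlinear}; the nonlinear term $\langle \boldsymbol{x^*},\boldsymbol{b_1}\rangle$ again arises because $A\boldsymbol{y^*}$ was replaced by $\boldsymbol{b_1}$ without direct access to $\boldsymbol{y^*}$, and it can be eliminated subsequently by taking a further consecutive difference in direct analogy with Theorem~\ref{thm:zero_less_yt_linear}.
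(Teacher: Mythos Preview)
Your proposal is correct and follows essentially the same route as the paper's proof: both derive the identity by expanding the $y$-norm cross term $\tfrac{2\langle y^t-y^{t+1},\boldsymbol{y^*}\rangle}{\eta_2}$ via the \ref{eqn:AltGD} update with $B=A^\intercal$ and the \ref{eqn:NEconditions}, obtaining the nonlinear contribution $-2\langle x^{t+1},\boldsymbol{b_1}\rangle+2\langle \boldsymbol{x^*},\boldsymbol{b_1}\rangle$. Your treatment is in fact slightly more explicit than the paper's: you separately track and absorb the residual $-\langle y^t-y^{t+1},\boldsymbol{b_2}\rangle$ from the last term of $h_+^t$ into the $\langle x^\tau-\boldsymbol{x^*},Ay^\tau-b_1\rangle$ summands, a step the paper leaves implicit (effectively borrowing the already-absorbed form from Theorem~\ref{thm:coord_w/_b2}).
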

A proof of Theorem \ref{thm:coord_less_yt_nonlinear} follows identically to Theorem \ref{thm:zero_less_yt_nonlinear}. The full details are provided in Appendix \ref{sec:coord_less_yt}.

\begin{restatable}{theorem}{CoordLessBlinear}\label{thm:coord_less_yt_linear} Given three consecutive iterations of \ref{eqn:AltGD} in a coordination game, unknown variables $\boldsymbol{x^*}$ and $\boldsymbol{b_1}$ are characterized as linear equations with observable quantities $\{x^t\}_{t=0}^{2k+1}, \{||y^t||\}_{t=0}^{2k+1}, \{Ay^t-b_1\}_{t=0}^{2k+1}$. Specifically,
    \begin{align*}
        0 
        &= 
        (h_{+}^{t} - h_{+}^{t+1}) - (h_{+}^{t+1} - h_{+}^{t+2}) \\[5pt]    
        &= 
        h_{+}^{t} - 2 h_{+}^{t+1} + h_{+}^{t+2} \\[5pt]
        &= 
        \frac{\|x^t\|^2 - 2\|x^{t+1}\|^2 + \|x^{t+2}\|^2}{\eta_1} 
        -
        \frac{\|y^t\|^2 - 2\|y^{t+1}\|^2 + \|y^{t+2}\|^2}{\eta_2} \\[5pt]
        &\quad + 
        \left\langle x^t - \boldsymbol{x^*},\ Ay^t - b_1 \right\rangle 
        - 
        2\left\langle x^{t+1} - \boldsymbol{x^*},\ Ay^{t+1} - b_1 \right\rangle 
        + 
        \left\langle x^{t+2} - \boldsymbol{x^*},\ Ay^{t+2} - b_1 \right\rangle \\[5pt]
        &\quad -
        \frac{2\left\langle x^t - 2x^{t+1} + x^{t+2},\ \boldsymbol{x^*} \right\rangle}{\eta_1}
        - 
        2\left\langle x^{t+1} - x^{t+2},\ \boldsymbol{b_1} \right\rangle
    \end{align*}
\end{restatable}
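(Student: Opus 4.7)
The plan is to adapt the derivation of Theorem \ref{thm:zero_less_yt_linear} to the coordination setting by applying Theorem \ref{thm:coord_less_yt_nonlinear} twice and taking a discrete second difference. Since $h_+^t$ is time-invariant by Theorem \ref{thm:coord_invariant}, we have $h_+^t - h_+^{t+1} = 0$ and $h_+^{t+1} - h_+^{t+2} = 0$; subtracting these immediately yields the claimed identity $h_+^t - 2 h_+^{t+1} + h_+^{t+2} = 0$. The real content of the theorem is in re-expressing this second difference using only observable quantities together with the unknowns $\boldsymbol{x^*}$ and $\boldsymbol{b_1}$.

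First I would invoke Theorem \ref{thm:coord_less_yt_nonlinear} at the pair $(t,t+1)$ and again at $(t+1,t+2)$. Each instance contains the same iterate-independent nonlinear term $2\langle \boldsymbol{x^*}, \boldsymbol{b_1}\rangle$, along with a linear contribution of $\boldsymbol{b_1}$ taking the form $-2\langle x^{t+1}, \boldsymbol{b_1}\rangle$ in the first and $-2\langle x^{t+2}, \boldsymbol{b_1}\rangle$ in the second. Subtracting the two equations causes the nonlinear $2\langle \boldsymbol{x^*}, \boldsymbol{b_1}\rangle$ term to drop out exactly, and consolidates the remaining $\boldsymbol{b_1}$ dependence into the linear expression $-2\langle x^{t+1} - x^{t+2}, \boldsymbol{b_1}\rangle$ that appears in the statement. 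This is precisely the same cancellation mechanism used in the zero-sum analogue.

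Then I would collect terms across the three iterates. The $\|x^\tau\|^2/\eta_1$ pieces aggregate into the second difference $(\|x^t\|^2 - 2\|x^{t+1}\|^2 + \|x^{t+2}\|^2)/\eta_1$, while the $\|y^\tau\|^2/\eta_2$ pieces aggregate analogously but with the opposite sign, reflecting that $h_+^t$ carries $-\|y^t - \boldsymbol{y^*}\|^2/\eta_2$ rather than $+$. The three gradient-aligned inner products $\langle x^\tau - \boldsymbol{x^*},\, Ay^\tau - b_1 \rangle$ emerge with coefficients $+1, -2, +1$, and the linear $\boldsymbol{x^*}$ contribution from expanding the squared $x$-norms collects into $-2\langle x^t - 2x^{t+1} + x^{t+2},\, \boldsymbol{x^*} \rangle/\eta_1$. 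The linear $\boldsymbol{y^*}$ contribution that appears in Theorem \ref{thm:coord_less_yt_nonlinear} must, and will, telescope to zero via the identity $-2\langle y^t - y^{t+1}, \boldsymbol{y^*}\rangle + 2\langle y^{t+1} - y^{t+2}, \boldsymbol{y^*}\rangle = -2\langle y^t - 2y^{t+1} + y^{t+2}, \boldsymbol{y^*}\rangle$ being absorbed into the mechanism that eliminated $\boldsymbol{y^*}$ already in Theorem \ref{thm:coord_less_yt_nonlinear}, so no $\boldsymbol{y^*}$ survives.

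The main obstacle is purely bookkeeping: one must carefully track the sign flips of the coordination setting (the $-\|y\|^2/\eta_2$ and $-\langle y^t, b_2\rangle$ terms in $h_+^t$) and verify that, despite using $B = A^\intercal$ in place of $B = -A^\intercal$, the two copies of $2\langle \boldsymbol{x^*}, \boldsymbol{b_1}\rangle$ produced by Theorem \ref{thm:coord_less_yt_nonlinear} still carry identical sign and thus cancel under subtraction. Once this sign accounting is verified and Theorem \ref{thm:coord_less_yt_nonlinear} is taken as a black box, the remaining algebra is a mechanical subtraction that assembles exactly the stated identity, and the $2k+1$ iterate count simply reflects that each equation consumes three consecutive iterates while sharing two with its neighbor.
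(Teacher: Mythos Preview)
Your proposal is correct and follows exactly the paper's approach: subtract two consecutive instances of Theorem \ref{thm:coord_less_yt_nonlinear} so that the iterate-independent nonlinear term $2\langle \boldsymbol{x^*}, \boldsymbol{b_1}\rangle$ cancels, then collect the resulting second-difference expressions. One small clarification: there is no $\boldsymbol{y^*}$ contribution in Theorem \ref{thm:coord_less_yt_nonlinear} at all---it was already eliminated there---so your paragraph about a $\boldsymbol{y^*}$ term ``telescoping'' is unnecessary and slightly misleading, though your final conclusion that no $\boldsymbol{y^*}$ survives is of course correct.
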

A proof follows directly by subtracting consecutive equations from Theorem \ref{thm:coord_less_yt_nonlinear}.
Doing so requires an additional constraint, which requires $2k+1$ iterations of \ref{eqn:AltGD} to characterize the set of NE.

\subsection{Characterizing Agent 1's NE Using the Size of Agent 2's Gradient}\label{sec:FlModel}

In this section, we provide an alternative model to the model in Section \ref{sec:EconomicModel}. 
Like the model in the previous section, we do not directly rely on observing agent 2's strategy when characterizing agent 1's NE.  
Instead of requiring the size of the opponent's strategy, we require the size of the opponent's gradient.

\begin{restatable}{theorem}{ZeroNoY}\label{thm:zero_no_y} Given two consecutive iterations of \ref{eqn:AltGD} in a zero-sum game, unknown variable $\boldsymbol{x^*}$ is characterized as linear equations with observable quantities $\{x^t\}_{t=0}^{k}, \{||-A^\intercal x^t-b_2||^2\}_{t=0}^{k}, \{Ay^t-b_1\}_{t=0}^{k}$. Specifically,
    \begin{align*}
        {0} = &h_{-}^{t} - h_{-}^{t+1} \\
        = &\frac{\| x^{t} \|^{2} - \| x^{t+1} \|^{2}}{\eta_1} + 
        2\left\langle x^{t+1}, \frac{x^{t+1}-x^{t}}{\eta_1} \right\rangle - \eta_2\|-A^{\intercal}x^{t+1}-b_2\|^{2}\\
        & + \left\langle x^{t}-\boldsymbol{x^{*}}, \ Ay^{t}-b_1 \right\rangle - \left\langle x^{t+1}-\boldsymbol{x^{*}}, \ Ay^{t+1}-b_1 \right\rangle. \\
    \end{align*}
\end{restatable}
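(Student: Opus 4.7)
The plan is to take Theorem \ref{thm:zero_w/_b2} as the starting point rather than re-derive from scratch, since it already cleanly rewrites $h_-^t - h_-^{t+1}=0$ as
$$0 = \frac{\|x^t\|^2 - \|x^{t+1}\|^2 - 2\langle x^t - x^{t+1}, \boldsymbol{x^{*}}\rangle}{\eta_1} + \frac{\|y^t\|^2 - \|y^{t+1}\|^2 - 2\langle y^t - y^{t+1}, \boldsymbol{y^{*}}\rangle}{\eta_2} + \langle x^t - \boldsymbol{x^{*}}, Ay^t - b_1\rangle - \langle x^{t+1} - \boldsymbol{x^{*}}, Ay^{t+1} - b_1\rangle,$$
in which the gradient terms are already in the desired form. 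All I then have to do is eliminate $y^t$, $y^{t+1}$, and $\boldsymbol{y^{*}}$ from the middle group and show that what survives is exactly $\frac{\|x^t\|^2 - \|x^{t+1}\|^2}{\eta_1} + 2\langle x^{t+1}, (x^{t+1}-x^t)/\eta_1\rangle - \eta_2\|-A^\intercal x^{t+1} - b_2\|^2$.

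First I would invoke the \ref{eqn:AltGD} update $y^{t+1} - y^t = \eta_2(-A^\intercal x^{t+1} - b_2)$ to expand both $\|y^{t+1}\|^2 - \|y^t\|^2$ and the inner product against $\boldsymbol{y^{*}}$, yielding
$$\frac{\|y^t\|^2 - \|y^{t+1}\|^2 - 2\langle y^t - y^{t+1}, \boldsymbol{y^{*}}\rangle}{\eta_2} = -2\langle -A^\intercal x^{t+1} - b_2,\, y^t - \boldsymbol{y^{*}}\rangle - \eta_2\|-A^\intercal x^{t+1} - b_2\|^2.$$
Next I would apply the \ref{eqn:NEconditions} (with $B=-A^\intercal$) in the form $b_2 = -A^\intercal \boldsymbol{x^{*}}$ and $A\boldsymbol{y^{*}} = b_1$ to recognize $-A^\intercal x^{t+1} - b_2 = -A^\intercal(x^{t+1} - \boldsymbol{x^{*}})$ and $A(y^t - \boldsymbol{y^{*}}) = Ay^t - b_1$; pushing the adjoint across the inner product gives $-2\langle -A^\intercal x^{t+1} - b_2,\, y^t - \boldsymbol{y^{*}}\rangle = 2\langle x^{t+1} - \boldsymbol{x^{*}},\, Ay^t - b_1\rangle$. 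Finally I would substitute the other half of \ref{eqn:AltGD}, $Ay^t - b_1 = (x^{t+1} - x^t)/\eta_1$, to get $2\langle x^{t+1} - \boldsymbol{x^{*}},\, (x^{t+1}-x^t)/\eta_1\rangle$, which is purely in observable $x$-data plus a linear term in $\boldsymbol{x^{*}}$.

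Putting this back into the equation from Theorem \ref{thm:zero_w/_b2}, the new $\boldsymbol{x^{*}}$ contribution $-\frac{2}{\eta_1}\langle \boldsymbol{x^{*}},\, x^{t+1} - x^t\rangle$ is precisely the negative of the $\frac{-2\langle x^t - x^{t+1}, \boldsymbol{x^{*}}\rangle}{\eta_1}$ already present in the $x$-block, so these two $\boldsymbol{x^{*}}$ terms cancel and leave only $\frac{\|x^t\|^2 - \|x^{t+1}\|^2}{\eta_1} + 2\langle x^{t+1}, (x^{t+1}-x^t)/\eta_1\rangle$. Combined with the $-\eta_2\|-A^\intercal x^{t+1} - b_2\|^2$ from the first step and the untouched gradient differences, this is exactly the statement of the theorem.

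The main obstacle I anticipate is recognizing the cancellation: expanding the $y$-block naively produces an expression still carrying $y^t$, and it is only after (i) pushing $A^\intercal$ across using $b_2 = -A^\intercal \boldsymbol{x^{*}}$ and $A\boldsymbol{y^{*}} = b_1$, and then (ii) replacing $Ay^t - b_1$ via the $x$-update, that a $\boldsymbol{x^{*}}$-linear term of the correct sign appears to annihilate the one sitting in the $x$-block. A minor care point is to keep $\|-A^\intercal x^{t+1} - b_2\|^2$ as an \emph{observable} quantity throughout rather than expanding it through the NE condition, so that the final equation depends only on the data listed in the theorem statement.
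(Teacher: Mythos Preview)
Your proposal is correct and follows essentially the same computation as the paper: expand the $y$-block using the \ref{eqn:AltGD} update $y^{t+1}-y^t=\eta_2(-A^\intercal x^{t+1}-b_2)$, use the \ref{eqn:NEconditions} $b_2=-A^\intercal\boldsymbol{x^*}$ and $A\boldsymbol{y^*}=b_1$ to transpose into $2\langle x^{t+1}-\boldsymbol{x^*},Ay^t-b_1\rangle$, replace $Ay^t-b_1$ by $(x^{t+1}-x^t)/\eta_1$, and observe the cancellation of the $\boldsymbol{x^*}$ terms. The only cosmetic difference is that the paper substitutes the expanded $y$-block into Theorem~\ref{thm:zero_less_yt_nonlinear} (so the nonlinear terms $\pm 2\langle\boldsymbol{x^*},\boldsymbol{b_1}\rangle$ and $\pm 2\langle x^{t+1},\boldsymbol{b_1}\rangle$ appear and then cancel), whereas you start directly from Theorem~\ref{thm:zero_w/_b2} and never introduce those terms; your route is marginally cleaner for this reason, but the argument is the same.
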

    \begin{proof}
        Appplying \ref{eqn:AltGD}, \ref{eqn:NEconditions}, and $B=-A^{\intercal}$ yields
        \begin{align*}
            \frac{\| y^{t} \|^{2} - \| y^{t+1} \|^{2}}{\eta_2} =&\frac{\left\langle y^{t}-y^{t+1}, \ y^{t}+y^{t+1} \right\rangle}{\eta_2} \\
            = &\frac{\left\langle \eta_2(A^{\intercal}x^{t+1}+b_2), \ 2y^{t}+\eta_2(-A^{\intercal}x^{t+1}-b_2) \right\rangle}{\eta_2}\\
            = &\left\langle A^{\intercal}x^{t+1}+b_2, \ 2y^{t}\right\rangle - \eta_2\left\langle A^{\intercal}x^{t+1}+b_2, A^{\intercal}x^{t+1}+b_2\right\rangle \\
            = &\left\langle A^{\intercal}x^{t+1}+b_2, \ 2y^{t}\right\rangle - \eta_2\|A^{\intercal}x^{t+1}-A^{\intercal}\boldsymbol{x^*}\|^{2} \\
            = &2\left\langle A^{\intercal}x^{t+1}, \ y^{t}\right\rangle + 2\left\langle \boldsymbol{b_2}, \ y^{t}\right\rangle - \eta_2\|A^{\intercal}(x^{t+1}-\boldsymbol{x^*})\|^{2} \\
            = &2\left\langle x^{t+1}, \ Ay^{t}\right\rangle + 2\left\langle -A^{\intercal}\boldsymbol{x^*}, \ y^{t}\right\rangle - \eta_2\|A^{\intercal}(x^{t+1}-\boldsymbol{x^*})\|^{2} \\
            = &2\left\langle x^{t+1}, \ Ay^{t}\right\rangle - 2\left\langle \boldsymbol{x^*}, \ Ay^{t}\right\rangle - \eta_2\|A^{\intercal}(x^{t+1}-\boldsymbol{x^*})\|^{2} \\
            = &2\left\langle x^{t+1}-\boldsymbol{x^*}, \ Ay^{t}\right\rangle - \eta_2\|A^{\intercal}(x^{t+1}-\boldsymbol{x^*})\|^{2} \\
            = &2\left\langle x^{t+1}-\boldsymbol{x^*}, \frac{x^{t+1}-x^{t}}{\eta_1}+\boldsymbol{b_1}\right\rangle - \eta_2\|-A^{\intercal}x^{t+1}-b_2\|^{2} \\
            = &2\left\langle x^{t+1}, \frac{x^{t+1}-x^{t}}{\eta_1}+\boldsymbol{b_1}\right\rangle - 2\left\langle \boldsymbol{x^{*}}, \frac{x^{t+1}-x^{t}}{\eta_1}+\boldsymbol{b_1}\right\rangle - \eta_2\|-A^{\intercal}x^{t+1}-b_2\|^{2} \\
            = &2\left\langle x^{t+1}, \frac{x^{t+1}-x^{t}}{\eta_1}\right\rangle + 2\left\langle x^{t+1}, \boldsymbol{b_1} \right\rangle - 2\left\langle \boldsymbol{x^{*}}, \frac{x^{t+1}-x^{t}}{\eta_1} \right\rangle - 2\left\langle \boldsymbol{x^{*}}, \boldsymbol{b_1} \right\rangle  - \eta_2\|-A^{\intercal}x^{t+1} - b_2\|^{2} \\
        \end{align*}
    By plugging in the above result into the equation in Theorem \ref{thm:zero_less_yt_nonlinear}, we obtain 
        \begin{align*}
            0 = &h_{-}^{t} - h_{-}^{t+1} \\
            = &\frac{\| x^{t} \|^{2} - \| x^{t+1} \|^{2}}{\eta_1} + 
            2\left\langle x^{t+1}, \frac{x^{t+1}-x^{t}}{\eta_1} \right\rangle - \eta_2\|-A^{\intercal}x^{t+1}-b_2\|^{2}\\
            & + \left\langle x^{t}-\boldsymbol{x^{*}}, \ Ay^{t}-b_1 \right\rangle - \left\langle x^{t+1}-\boldsymbol{x^{*}}, \ Ay^{t+1}-b_1 \right\rangle. 
        \end{align*}
    which is linear in $x^{*}$, as desired.
\end{proof}

\begin{restatable}{theorem}{CoordNoY}\label{thm:coord_no_y} Given two consecutive iterations of \ref{eqn:AltGD} in a coordination game, unknown variable $\boldsymbol{x^*}$ is characterized as linear equations with observable quantities $\{x^t\}_{t=0}^{k}, \{||A^\intercal x^t-b_2||^2\}_{t=0}^{k}, \{Ay^t-b_1\}_{t=0}^{k}$. Specifically,
    \begin{align*}
        0 = &h_{+}^{t} - h_{+}^{t+1} \\
        = &\frac{\| x^{t} \|^{2} - \| x^{t+1} \|^{2}}{\eta_1} 
        + 
        2\left\langle x^{t+1}, \frac{x^{t+1}-x^{t}}{\eta_1} \right\rangle 
        + 
        \eta_2\|A^{\intercal}x^{t+1}-b_2\|^{2} \\
        &+ \left\langle x^{t}-\boldsymbol{x^{*}}, \ Ay^{t}-b_1 \right\rangle - \left\langle x^{t+1}-\boldsymbol{x^{*}}, \ Ay^{t+1}-b_1 \right\rangle.
    \end{align*}
\end{restatable}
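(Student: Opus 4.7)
The plan is to mimic the proof of Theorem \ref{thm:zero_no_y} almost verbatim, carefully flipping three kinds of signs that distinguish coordination from zero-sum: (i) the payoff relation is $B=A^\intercal$ rather than $B=-A^\intercal$, so \ref{eqn:AltGD} gives $y^{t+1}-y^t = \eta_2(A^\intercal x^{t+1}-b_2)$; (ii) by \ref{eqn:NEconditions}, agent 2's NE condition reads $A^\intercal \boldsymbol{x^*} = \boldsymbol{b_2}$ rather than $-A^\intercal \boldsymbol{x^*} = \boldsymbol{b_2}$; and (iii) the $y$-contribution in $h_+^t$ appears with a minus sign (Definition \ref{def:Coord_Energy}), unlike in $h_-^t$. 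The starting point will be the nonlinear coordination identity in Theorem \ref{thm:coord_less_yt_nonlinear}, and the job is to rewrite the remaining $y$-difference $(\|y^t\|^2 - \|y^{t+1}\|^2)/\eta_2$ so that only the observable norm $\|A^\intercal x^{t+1} - b_2\|$ survives.

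First, I expand $\|y^t\|^2 - \|y^{t+1}\|^2 = \langle y^t - y^{t+1},\, y^t + y^{t+1}\rangle$, substitute the coordination update $y^t - y^{t+1} = -\eta_2(A^\intercal x^{t+1} - b_2)$ and $y^t + y^{t+1} = 2y^t + \eta_2(A^\intercal x^{t+1} - b_2)$, and use the coordination NE condition $A^\intercal \boldsymbol{x^*}=\boldsymbol{b_2}$ to rewrite $A^\intercal x^{t+1} - b_2 = A^\intercal(x^{t+1} - \boldsymbol{x^*})$. This produces
\begin{align*}
\frac{\|y^t\|^2 - \|y^{t+1}\|^2}{\eta_2} = -2\langle x^{t+1} - \boldsymbol{x^*},\, Ay^t\rangle - \eta_2\|A^\intercal x^{t+1} - b_2\|^2,
\end{align*}
after which I replace $Ay^t$ by $(x^{t+1}-x^t)/\eta_1 + \boldsymbol{b_1}$ using agent 1's \ref{eqn:AltGD} update, obtaining an expression linear in $\boldsymbol{x^*}$ plus the nonlinear piece $-2\langle \boldsymbol{x^*}, \boldsymbol{b_1}\rangle$ and the purely observable term $-\eta_2\|A^\intercal x^{t+1}-b_2\|^2$.

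Next, I substitute this expansion into Theorem \ref{thm:coord_less_yt_nonlinear}. Because that theorem contains a $-(\|y^t\|^2 - \|y^{t+1}\|^2)/\eta_2$ term (coordination sign), the $-\eta_2\|A^\intercal x^{t+1}-b_2\|^2$ flips to $+\eta_2\|A^\intercal x^{t+1}-b_2\|^2$, which matches the stated theorem. The critical algebraic step is verifying that the nonlinear cross-term $\langle \boldsymbol{x^*}, \boldsymbol{b_1}\rangle$ introduced by the $y$-expansion cancels exactly against the $+2\langle \boldsymbol{x^*}, \boldsymbol{b_1}\rangle$ already present in Theorem \ref{thm:coord_less_yt_nonlinear}, and similarly that the $\pm 2\langle x^{t+1}, \boldsymbol{b_1}\rangle$ and $\pm 2\langle \boldsymbol{x^*}, (x^{t+1}-x^t)/\eta_1\rangle$ terms cancel with those already in the statement. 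After these cancellations, only terms linear in $\boldsymbol{x^*}$ together with observables remain, giving the claimed identity.

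The main obstacle is purely bookkeeping: tracking all three sign flips simultaneously so that the cancellations go through. Conceptually the argument is identical to Theorem \ref{thm:zero_no_y}, but because the coordination setting changes the sign of $b_2$ in the NE condition and the sign of the $y$-term in the energy, it is easy to land on $-\eta_2\|A^\intercal x^{t+1} - b_2\|^2$ instead of $+\eta_2\|A^\intercal x^{t+1} - b_2\|^2$. Once the substitution is performed cleanly and the $\langle \boldsymbol{x^*}, \boldsymbol{b_1}\rangle$ terms are confirmed to cancel, the final equation is linear in $\boldsymbol{x^*}$ and expressed entirely in terms of $\{x^t\}, \{Ay^t-b_1\}$, and $\{\|A^\intercal x^t - b_2\|^2\}$, as required.
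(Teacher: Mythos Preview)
Your proposal is correct and follows essentially the same approach as the paper's proof: both compute the $y$-difference $(\|y^t\|^2-\|y^{t+1}\|^2)/\eta_2$ via the factorization $\langle y^t-y^{t+1},y^t+y^{t+1}\rangle$, substitute the coordination update $y^{t+1}-y^t=\eta_2(A^\intercal x^{t+1}-b_2)$ and the NE condition $A^\intercal\boldsymbol{x^*}=\boldsymbol{b_2}$, replace $Ay^t$ by $(x^{t+1}-x^t)/\eta_1+\boldsymbol{b_1}$, and plug into Theorem~\ref{thm:coord_less_yt_nonlinear}. Your write-up is in fact more explicit than the paper's about checking that the $\langle\boldsymbol{x^*},\boldsymbol{b_1}\rangle$, $\langle x^{t+1},\boldsymbol{b_1}\rangle$, and $\langle\boldsymbol{x^*},(x^{t+1}-x^t)/\eta_1\rangle$ terms all cancel, which the paper leaves to the reader.
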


A proof of Theorem \ref{thm:coord_no_y} follows identically to Theorem \ref{thm:zero_no_y}. Full details are provided in Appendix \ref{sec:coord_no_y}.

\section{Base Experiments}\label{sec:Base_Experiments}

In this section, we investigate the numerical stability and solution quality for the set of Nash equilibria (NE) in two-player zero-sum and coordination games using different characterizations discussed in Section \ref{sec:ValidEquations}.
While our results provide theoretical support for our proposed methods, constructing such linear systems comes with numerical challenges.
To accurately capture the set of NE, the system matrices must be non-singular or have condition numbers that are not too large.
In practice, the matrices built from \ref{eqn:AltGD} updates often become nearly singular or ill-conditioned due to the correlation of strategies, especially in higher-dimensional games. 
It enlarges the numerical errors that deteriorate the quality of the approximated set of NE.
To address the issue, we explore alternatives such as the least squares method and the Tikhonov regularization (ridge regression) method, which can stabilize solutions to some extent. 
Least squares methods can mitigate the effects of small perturbations on sensitive changes in the approximated set of NE, while Tikhonov regularization introduces a controlled bias to improve the condition of the constructed linear system.
Our experiments systematically examine how the dimension of the game affects the stability of the standard solution, least squares, and Tikhonov regularization methods, motivating the need for more robust techniques.
Although these two regularization methods partially resolve stability issues, we also introduce a refined method using parallelization in Section \ref{sec:Improved_Experiments_Parallelization}.

\subsection{Experiment Setup}
We assess the performance of our models by how well they approximate the set of NE in zero-sum games with small and large learning rates, as well as coordination games across different dimensions.
The results are analyzed in terms of relative errors between the true set of NE, which is obtained by solving \ref{eqn:NEconditions}, and the set of NE solutions obtained by solving the constructed linear system.
We consider dimensions 1 through 20, generating 30 random instances per dimension. 
Each game is defined by a payoff matrix $A$ and cost vectors $b_1$ and $b_2$ of two agents.
For zero-sum games with small learning rates, the payoff matrix $A$ and the cost vectors $b_1$ and $b_2$ are randomly generated following the uniform distribution, specifically $A\sim [U(-1,1)]^{k\times k} \ \text{and} \  b_1, b_2 \in [U(-1,1)]^k$.
The learning rates $\eta_1$ and $\eta_2$ are randomly generated following a uniform distribution in $\left[-\frac{1}{k}, \frac{1}{k}\right]$, which satisfies the condition $\sqrt{\eta_1 \cdot \eta_2} < \frac{2}{\|A\|}$ ensuring bounded trajectories and time-average convergence of strategies via \ref{eqn:AltGD} \cite{bailey2021left}.
Although our theoretical results do not explicitly require these properties, bounded orbits naturally improve numerical stability. 
We also consider larger learning rates that violate these conditions. 
Specifically, we use the same game instances but with learning rates multiplied by 10 to make the time-average strategies diverge from the true set of NE.
Finally, for coordination games, the data is randomly generated identically to zero-sum games with small learning rates, but the payoff matrix $A$ is set to represent the structure of the coordination game instead of the zero-sum.
We remark that it is well known that strategies diverge in this setting, as in the zero-sum case when learning rates are large. 
Experiments are conducted on a MacBook Pro (Mac15,3) with an 8-core Apple M3 Chip and 16GB RAM using Python 3.12.7.
The Python code for all experiments is available at \href{https://github.com/walnutpie777/RPI}{https://github.com/walnutpie777/RPI}.

\textbf{A Note on Small vs. Large Learning Rates:}
Our analysis thus far has been independent of learning rates. 
We consider this a strength of our approach as our proposed method, at least theoretically, is the first method that uses learning algorithms with arbitrary learning rates to find the set of NE in a dimension larger than two. 
All existing methods require learning rates to be sufficiently small; for instance, \ref{eqn:AltGD} and optimistic gradient descent achieve time-average convergence if and only if the learning rates satisfy $\sqrt{\eta_1\cdot\eta_2}< 2/||A||$ and $\sqrt{\eta_1\cdot\eta_2}\leq 1/||A||$ respectively, otherwise the learning dynamics and the time-average of the dynamics, rapidly diverge from the set of NE.
As a result, when verifying the numerical stability of our approach, we consider two sets of experiments with both small and large learning rates. 
Moreover, with a parallelization refinement, which is discussed in Section \ref{sec:Improved_Experiments_Parallelization}, we decouple convergence behavior from the need to adjust learning rates. 
This provides an exception to the widely held belief that small learning rates are essential to finding NE.

\subsection{Standard Solution Method}

We first evaluate our proposed approaches from Sections \ref{sec:ModelForSimulation}--\ref{sec:FlModel} to directly solve the constructed linear systems without considering the numerical instability.
Given a payoff matrix $A$ and cost vectors $b_1, b_2$ that define a game environment, we simulate a trajectory of $2k, 2k+1$, or $k$ strategies depending on the model by applying \ref{eqn:AltGD} updates after initializing with a random starting strategy. 
Using the difference between two (or three) consecutive iterations of the energy function, we formulate linear equations whose unknowns are the set of NE, specifically $\boldsymbol{x^*}, \boldsymbol{y^*} \in \mathbb{R}^{k}$ for the model in Section \ref{sec:ModelForSimulation}, $\boldsymbol{x^*}, \boldsymbol{b_1} \in \mathbb{R}^{k}$ for the model in Section \ref{sec:EconomicModel}, and $\boldsymbol{x^*} \in \mathbb{R}^{k}$ for the model in Section \ref{sec:FlModel}.
The resulting linear systems are then solved directly, and the relative errors are measured with respect to the true set of NE.

\subsubsection{Zero-sum Games with Small Learning Rates} \label{subsubsec:standard_zero_non_violated}

As observed in Table \ref{tab:zero_avg_rel_errors_rt}, the determinant of the system matrices rapidly drops to zero as the dimension increases.
For instance, even in dimension 3, the determinant is in order of $10^{-7}$.
As a result, relative errors in solutions $x^*, y^*$ grow fast as the dimension increases. 
This suggests that directly solving the linear systems generally is not reliable without additional stabilization techniques. 
The complete results of Table \ref{tab:zero_avg_rel_errors_rt} across dimensions 1 to 20 are provided in Appendix \ref{app:tables}.

\begin{table}[!ht]
    \centering
    \resizebox{1\textwidth}{!}{%
    \begin{tabular}{c|cc|c|c|c}
        \toprule
        \textbf{Model}: & \multicolumn{2}{c|}{\textbf{Section \ref{sec:ModelForSimulation}}} & \textbf{Section} \ref{sec:EconomicModel} & \textbf{Section} \ref{sec:FlModel} & \\
        \hline 
        \textbf{Dimension} & \textbf{Avg Rel Err of $\boldsymbol{x^{*}}$} & \textbf{Avg Rel Err in $\boldsymbol{y^*}$} & \textbf{Avg Rel Err in $\boldsymbol{x^{*}}$} & \textbf{Avg Rel Err in $\boldsymbol{x^{*}}$} & $\boldsymbol{\textbf{det} (A_{system_1}})$ \\
        \hline
        \midrule
        % 1  & 8.97E-14  & 2.81E-13  & 1.29E-11  & 1.20E-13  & 7.32E-1\\
        % 2  & 2.30E-09  & 1.81E-09  & 4.41E-06  & 1.70E-12  & 1.27E-2\\
        % 3  & 1.14E-04  & 6.89E-05  & 2.00E-03  & 2.79E-08  & 2.10E-06\\
        % 4  & 1.85E-01  & 1.55E-01  & 5.42E+00  & 1.61E-05  & 3.06E-15\\
        5  & 1.43E+01  & 2.03E+00  & 2.69E+01  & 7.19E-05  & 3.61E-31\\
        % 6  & 1.48E+01  & 1.01E+02  & 4.80E+01  & 1.28E+03  & 1.50E-47\\
        % 7  & 2.12E+01  & 2.48E+01  & 6.36E+01  & 6.32E+00  & 7.87E-63\\
        % 8  & 4.71E+02  & 1.26E+03  & 1.41E+03  & 3.89E+02  & 8.90E-90\\
        % 9  & 6.40E+01  & 5.79E+01  & 1.45E+02  & 3.94E+01  & 2.50E-11\\
        10 & 4.40E+01  & 2.88E+01  & 2.15E+02  & 5.23E+01  & 2.20E-14\\
        % 11 & 7.61E+01  & 6.71E+01  & 4.91E+02  & 7.76E+01  & 3.40E-16\\
        % 12 & 1.15E+03  & 2.52E+02  & 4.97E+02  & 5.55E+03  & 5.50E-19\\
        % 13 & 9.60E+01  & 8.79E+01  & 2.06E+02  & 1.16E+04  & 6.90E-21\\
        % 14 & 1.61E+02  & 1.58E+02  & 4.17E+02  & 4.38E+02  & 2.50E-24\\
        15 & 1.07E+02  & 2.53E+02  & 4.65E+02  & 5.68E+02  & 3.40E-26\\
        % 16 & 1.85E+02  & 1.70E+02  & 2.98E+02  & 1.51E+03  & 1.60E-39\\
        % 17 & 1.99E+03  & 1.25E+02  & 1.58E+03  & 5.08E+02  & 1.70E-30\\
        % 18 & 1.14E+02  & 8.36E+01  & 3.09E+02  & 3.00E+03  & $\approx$ 0.0\\
        % 19 & 2.98E+03  & 7.91E+02  & 7.99E+02  & 6.94E+02  & $\approx$ 0.0\\
        20 & 1.43E+02  & 1.20E+02  & 3.05E+02  & 6.02E+02  & $\approx$ 0.0\\
        \bottomrule
    \end{tabular}
    }
    \caption{Zero-sum game with small learning rates: average relative error of NE solutions and determinant of linear system matrices in dimensions 5, 10, 15, 20 derived from solving linear systems.}
    \label{tab:zero_avg_rel_errors_rt}
\end{table}

\subsubsection{Zero-sum Games with Large Learning Rates}\label{subsubsec:standard_zero_violated}

As seen in Table \ref{tab:zero_avg_rel_errors_rt_violation}, the numerical stability of the linear systems used to solve for the set of NE deteriorates dramatically as the dimension of the games increases.
Interestingly, however, the error is not due to a near-singular matrix. 
Rather, the system of equations' condition numbers rapidly increase, indicating that the system is extremely sensitive to numerical perturbation.
Unfortunately, when learning rates are too large, strategies rapidly expand from the set of Nash equilibria, resulting in significant rounding errors. 
When combined with the large condition number, we should expect the poor results shown in Table \ref{tab:zero_avg_rel_errors_rt_violation}. 
The complete results of Table \ref{tab:zero_avg_rel_errors_rt_violation} across dimensions 1 to 20 are provided in Appendix \ref{app:tables}.

\begin{table}[!ht]
    \centering
    \resizebox{1\textwidth}{!}{%
    \begin{tabular}{c|cc|c|c|c|c}
        \toprule
        \textbf{Model}: & \multicolumn{2}{c|}{\textbf{Section \ref{sec:ModelForSimulation}}} & \textbf{Section} \ref{sec:EconomicModel} & \textbf{Section} \ref{sec:FlModel} & \\
        \hline 
        \textbf{Dimension} & \textbf{Avg Rel Err of $\boldsymbol{x^{*}}$} & \textbf{Avg Rel Err in $\boldsymbol{y^*}$} & \textbf{Avg Rel Err in $\boldsymbol{x^{*}}$} & \textbf{Avg Rel Err in $\boldsymbol{x^{*}}$} & $\boldsymbol{\textbf{det} (A_{system_1}})$ & $\boldsymbol{\textbf{cond} (A_{system_1}})$ \\
        \hline
        \midrule
        % 1  & 3.63E$-$11  & 7.61E$-$11  & 4.34E$-$08   & 2.55E$-$15   & 7.92E+02    & 2.72E+04 \\
        % 2  & 7.79E$-$04  & 1.88E$-$03  & 2.14E+01     & 5.56E$-$12   & 2.62E+09    & 4.64E+09 \\
        % 3  & 8.99E+08    & 7.43E+09    & 1.83E+13     & 4.97E$-$08   & 1.78E+20    & 1.67E+18 \\
        % 4  & 7.82E+16    & 4.00E+17    & 7.01E+18     & 1.66E$-$03   & 3.37E+36    & 1.26E+23 \\
        5  & 2.69E+26    & 2.00E+27    & 4.46E+28     & 8.76E+00     & 1.11E+78    & 5.21E+27 \\
        % 6  & 3.71E+28    & 2.76E+29    & 1.10E+32     & 1.68E+04     & 1.32E+112   & 2.03E+31 \\
        % 7  & 9.36E+35    & 4.57E+36    & 2.14E+38     & 2.58E+09     & 4.73E+164   & 1.14E+35 \\
        % 8  & 3.47E+43    & 1.85E+44    & 1.15E+46     & 4.80E+12     & 6.39E+244   & 2.46E+39 \\
        % 9  & 8.47E+47    & 3.71E+48    & 8.78E+53     & 9.15E+17     & $\infty$    & 9.91E+40 \\
        10 & 5.43E+54    & 8.29E+55    & 8.46E+55     & 2.42E+19     & $\infty$    & 5.05E+44 \\
        % 11 & 1.58E+62    & 1.56E+62    & 6.59E+65     & 1.02E+29     & $\infty$    & 5.84E+47 \\
        % 12 & 4.12E+72    & 2.54E+73    & 8.52E+73     & 1.37E+31     & $\infty$    & 1.90E+51 \\
        % 13 & 9.25E+77    & 7.34E+78    & 6.54E+80     & 2.32E+35     & $\infty$    & 1.09E+51 \\
        % 14 & 1.85E+81    & 1.87E+82    & 2.99E+84     & 1.76E+37     & $\infty$    & 7.51E+53 \\
        15 & 2.24E+94    & 5.45E+95    & 2.00E+96     & 2.36E+43     & $\infty$    & 2.93E+58 \\
        % 16 & 5.83E+99    & 4.80E+100   & 2.07E+102    & 1.53E+46     & $\infty$    & 9.63E+58 \\
        % 17 & 1.05E+109   & 1.90E+110   & 2.94E+108    & 2.03E+50     & $\infty$    & 7.80E+62 \\
        % 18 & 3.67E+111   & 5.78E+112   & 1.71E+114    & 1.97E+54     & $\infty$    & 1.23E+64 \\
        % 19 & 5.82E+118   & 8.20E+119   & 2.98E+122    & 7.09E+56     & $\infty$    & 2.24E+65 \\
        20 & 1.62E+128   & 3.01E+129   & 4.63E+132    & 1.29E+62     & $\infty$    & 1.68E+67 \\
        \bottomrule
    \end{tabular}
    }
    \caption{Zero-sum game with large learning rates: average relative error of NE solutions with determinant and condition number of linear system matrices in dimensions 5, 10, 15, 20 derived from solving linear systems.}
    \label{tab:zero_avg_rel_errors_rt_violation}
\end{table}

\subsubsection{Coordination Games}\label{subsubsec:standard_coordination}

Like zero-sum games with small learning rates, the system of equations describing the set of Nash equilibria has an approximately singular matrix. 
As a result, without regularization, our proposed approach performs poorly. 
The experiments revealed no new observations that could be useful for improving the implementation of our model, and we defer the table of results to Table~\ref{tab:coord_avg_rel_errors_rt} in Appendix \ref{app:tables}.

\subsection{Improving Stability with the Least Squares Method}\label{subsec:least_squares_method}

To address the numerical instability caused by near-singular systems in higher dimensions, we explore an alternative approach by minimizing the total squared violation. 
Instead of solving $Ax = b$ exactly, we formulate and solve a least squares problem that seeks a solution minimizing the aggregate deviation from the fixed-point condition $h^t - h^{t+1} = 0$. 
This relaxation improves numerical stability and can yield better NE approximations, especially in higher-dimensional settings where the linear system becomes ill-conditioned.

% ------------------------------------------------------------------------ %
\subsubsection{Zero-sum Games with Small Learning Rates}
\label{subsubsec:least_squares_zero_non_violated}

Applying the least squares method stabilizes the system. 
In Tables~\ref{tab:zero_avg_rel_errors_ls_non_violation} and \ref{tab:zero_nonviolation_least_squares_rel_errors_rt_high_dim}, relative errors for the model from Section \ref{sec:ModelForSimulation} remain small even in high dimensions.
Although relative errors for the models from Sections \ref{sec:EconomicModel} and \ref{sec:FlModel} show higher variance, the error size decreases dramatically compared to the standard solution method in Section \ref{subsubsec:standard_zero_non_violated}.
The complete results of Table \ref{tab:zero_avg_rel_errors_ls_non_violation} across dimensions 1 to 20 are provided in Appendix \ref{app:tables}.

\begin{table}[!ht]
    \centering
    % \resizebox{1\textwidth}{!}{%
    \begin{tabular}{c|cc|c|c}
        \toprule
        \textbf{Model:} & \multicolumn{2}{c|}{\textbf{Section \ref{sec:ModelForSimulation}}} & \textbf{Section} \ref{sec:EconomicModel} & \textbf{Section} \ref{sec:FlModel} \\
        \hline
        \textbf{Dimension} & \textbf{Avg Rel Err of $\boldsymbol{x^{*}}$} & \textbf{Avg Rel Err of $\boldsymbol{y^*}$} & \textbf{Avg Rel Err of $\boldsymbol{x^{*}}$} & \textbf{Avg Rel Err of $\boldsymbol{x^{*}}$} \\
        \hline
        \midrule
        % 1  & 8.85E-14 & 2.79E-13 & 1.29E-11 & 1.20E-13 \\
        % 2  & 2.30E-09 & 1.81E-09 & 4.41E-06 & 1.71E-12 \\
        % 3  & 1.10E-04 & 6.91E-05 & 2.00E-03 & 2.79E-08 \\
        % 4  & 1.88E-02 & 2.09E-02 & 1.37E-01 & 1.61E-05 \\
        5  & 3.06E-01 & 2.59E-01 & 2.79E+00 & 7.18E-05 \\
        % 6  & 6.55E-01 & 6.34E-01 & 1.11E+00 & 1.27E-01 \\
        % 7  & 6.12E-01 & 6.33E-01 & 1.34E+00 & 2.68E-01 \\
        % 8  & 1.08E+00 & 9.46E-01 & 3.26E+00 & 1.28E+01 \\
        % 9  & 8.17E-01 & 8.40E-01 & 4.73E+00 & 7.43E+00 \\
        10 & 8.56E-01 & 8.69E-01 & 1.87E+00 & 8.83E-01 \\
        % 11 & 8.68E-01 & 8.50E-01 & 2.56E+00 & 1.22E+00 \\
        % 12 & 8.86E-01 & 8.38E-01 & 1.38E+01 & 3.12E+01 \\
        % 13 & 8.67E-01 & 8.91E-01 & 3.21E+00 & 1.19E+01 \\
        % 14 & 8.17E-01 & 8.09E-01 & 2.06E+00 & 3.30E+00 \\
        15 & 8.96E-01 & 9.05E-01 & 3.25E+00 & 4.30E+00 \\
        % 16 & 9.31E-01 & 8.62E-01 & 2.20E+01 & 1.96E+01 \\
        % 17 & 8.94E-01 & 9.26E-01 & 1.88E+01 & 1.80E+01 \\
        % 18 & 9.35E-01 & 9.33E-01 & 1.49E+01 & 9.08E+00 \\
        % 19 & 9.57E-01 & 9.97E-01 & 5.28E+01 & 1.30E+01 \\
        20 & 9.20E-01 & 9.09E-01 & 2.62E+00 & 2.79E+00 \\
        \bottomrule
    \end{tabular}
    % }
    \caption{Zero-sum game with small learning rates: average relative error of NE solutions via the least squares method in dimensions 5, 10, 15, 20.}
    \label{tab:zero_avg_rel_errors_ls_non_violation}
\end{table}

\begin{table}[!ht]
    \centering
    % \resizebox{1\textwidth}{!}{%
    \begin{tabular}{c|cc|c|c}
        \toprule
        \textbf{Model:} & \multicolumn{2}{c}{\textbf{Section \ref{sec:ModelForSimulation}}} & \textbf{Section \ref{sec:EconomicModel}} & \textbf{Section \ref{sec:FlModel}} \\
        \hline
        \textbf{Dimension} & \textbf{Avg Rel Err of $\boldsymbol{x^{*}}$} & \textbf{Avg Rel Err in $\boldsymbol{y^*}$} & \textbf{Avg Rel Err in $\boldsymbol{x^{*}}$} & \textbf{Avg Rel Err in $\boldsymbol{x^{*}}$} \\
        \hline
        \midrule    
        100 & 1.00E+00 & 9.51E-01 & 9.97E-01 & 9.98E-01 \\
        200 & 9.99E-01 & 1.00E+00 & 3.21E+01 & 1.28E+00 \\
        300 & 9.96E-01 & 9.93E-01 & 1.00E+00 & 1.01E+00 \\
        400 & 9.98E-01 & 9.77E-01 & 9.97E-01 & 9.98E-01 \\
        500 & 1.00E+00 & 1.00E+00 & 3.16E+03 & 9.96E+01 \\
        \bottomrule
    \end{tabular}
    \caption{Zero-sum games with small learning rates: relative error of the set of NE solutions and determinant of linear system matrices for one instance each in dimensions 100, 200, 300, 400, and 500 via the least squares method.}
    \label{tab:zero_nonviolation_least_squares_rel_errors_rt_high_dim}
\end{table}

\subsubsection{Zero-sum Games with Large Learning Rates}\label{subsubsec:least_squares_zero_violated}

While the least squares method avoids the issue of computing the inverse, the source of numerical instability from Section \ref{subsubsec:standard_zero_violated} remains when working with large learning rates. 
For a full table of results, see Table \ref{tab:zero_large_avg_rel_errors_ls} in Appendix \ref{app:tables}.

\subsubsection{Coordination Games}
\label{subsubsec:least_squares_coordination}

The results are consistent with the results for zero-sum games with small learning rates; 
the least squares method significantly improves numerical stability in coordination games compared to the standard solution method in Section \ref{subsubsec:standard_coordination}.
For full results, see Table~\ref{tab:coord_avg_rel_errors_ls} in Appendix \ref{app:tables}.

% ==================================================================== %

\subsection{Tikhonov Regularizer (Ridge Regression) Method}\label{subsec:tikhonov_reg_method}

Another common method to address ill-conditioned matrices in higher dimensions is to use the Tikhonov regularization \cite{calvetti2000tikhonov}.
By introducing Tikhonov regularizer $\lambda >0$, we solve new linear systems, specifically, $\boldsymbol{(A^{\intercal}A + \lambda I) x = A^{\intercal}b}$. 
New system matrices $A^{\intercal}A + \lambda I$ are now better conditioned, which yields a more stabilized and better approximate set of NE solutions.

In all instances, Tikhonov regularization performs similarly to the least squares methods --- our approach performs well in both zero-sum games with small learning rates and coordination games but still has instability issues for zero-sum games with large learning rates. 
See Appendix \ref{app:tables} for a full list of results. 

% ------------------------------------------------------------------------ %

\subsection{Discussion}

The results of the base experiments uncover critical issues. 
While the standard method provides exact solutions theoretically, it is highly sensitive to dimension, quickly deteriorating in capturing the set of NE. 
When learning rates are sufficiently small, the least squares and Tikhonov regularizer methods mitigate the issue, but there still exists a small approximation error. 
We suspect this is caused by the similarity of consecutively generated equations -- since learning rates must be small in higher dimensions, the strategies don't change much from iteration to iteration, resulting in similar equations. 
In the next section, we introduce parallelization with multiple initial conditions to avoid redundant equations. 

When learning rates are large, the condition matrix grows rapidly and errors propagate due to rounding large, divergent strategies thereby deteriorating our approximation. 
We also introduce parallelization to keep the strategies bounded in a relatively small region.

\section{Improved Estimates via Parallelization}\label{sec:Improved_Experiments_Parallelization}

To address the aforementioned issues, we introduce a parallelizable approach that removes the correlation among equations. 
We begin by observing that the equations generated by our theoretical results only require observations from two consecutive iterations of \ref{eqn:AltGD}. 
To generate $2k$ valid equations, we could start with multiple initial conditions and run \ref{eqn:AltGD} in parallel to decrease the overall number of updates to a single initial condition, e.g., we could generate $1k$ valid equations by applying \ref{eqn:AltGD} to the initial strategies $(x,y)$ and the second set of $1k$ valid equations from a different set of initial strategies $(x',y')$. 
This means that the Nash equilibrium (NE) can be characterized in parallel, a first in algorithmic game theory. 

More importantly, our approach addresses the two primary obstacles in the previous section. 
First, by collecting only a single equation from an initial strategy, we avoid correlation between consecutive equations caused by consecutive strategies being close together. 
Second, we also resolve the issue with large learning rates. 
By updating a strategy only once, strategies remain relatively small despite their tendency to diverge from the set of NE in the long run when learning rates are large. 

The updates from random initializations can be computed independently and simultaneously, enabling scalable parallelized computation. 
By eliminating equation correlation, the resulting system matrices become well-conditioned across dimensions. 

\subsection{Experiment Setup}

Our settings for the experiments remain unchanged from those in the previous section. 
However, instead of starting with a single initial strategy and updating it $2k, 2k+1$, or $k$ times depending on the solution method, we instead start with $2k$ or $k$ initial strategies and update each of them to generate a single equation. 
Each initial strategy generates a single equation that describes the set of NE.

Our new set of experiments confirms that we can quickly obtain extremely accurate estimates of the set of NE in a finite number of iterations. 
We also confirm this by repeating our experiments in much larger games (up to 500 strategies).
We also compare our results to standard methods.
We remark that we do not compare our results with time-average guarantees of alternating gradient descent and not optimistic gradient descent --- \cite{bailey2021left} experimentally shows that alternating gradient descent returns approximations that are $\approx 2.35$ times closer to the set of NE than optimistic gradient descent in unconstrained zero-sum games.

\subsection{Zero-sum Games with Small Learning Rates}

Using new initial conditions for each generated equation resolves the issue of the vanishing determinant, and we can accurately compute the Nash equilibrium. 
Table \ref{tab:zero_par_nonviolation_avg_rel_errors_rt} shows that the relative errors remain near $10^{-11}$ up to dimension 20. 
The results are consistent across all instances for each dimension.
Moreover, we maintain accurate predictions even in higher dimensions; 
Table \ref{tab:zero_nonviolation_parallel_rel_errors_rt_high_dim} shows that the relative errors remain extremely low when the strategy space is 500-dimensional.
\begin{table}[!ht]
    \centering
    % \resizebox{1\textwidth}{!}{%
    \begin{tabular}{c|cc|c|c|c}
        \toprule
        \textbf{Model:} & \multicolumn{2}{c}{\textbf{Section \ref{sec:ModelForSimulation}}} & \textbf{Section} \ref{sec:EconomicModel} & \textbf{Section} \ref{sec:FlModel} & \\
        \hline
        \textbf{Dimension} & \textbf{Avg Rel Err of $\boldsymbol{x^{*}}$} & \textbf{Avg Rel Err in $\boldsymbol{y^*}$} & \textbf{Avg Rel Err in $\boldsymbol{x^{*}}$} & \textbf{Avg Rel Err in $\boldsymbol{x^{*}}$} & $\boldsymbol{\textbf{det} (A_{par}})$ \\
        \hline
        \midrule
        % 1  & 2.76E-14 & 5.21E-15 & 2.76E-14 & 2.76E-14 & 0.8198 \\
        % 2  & 6.00E-14 & 5.77E-14 & 6.00E-14 & 6.00E-14 & 0.4027 \\
        % 3  & 1.81E-13 & 3.04E-13 & 1.81E-13 & 1.81E-13 & 0.5225 \\
        % 4  & 3.10E-13 & 2.34E-13 & 3.10E-13 & 3.10E-13 & 3.2979 \\
        5  & 3.56E-13 & 3.60E-13 & 3.56E-13 & 3.56E-13 & 4.71E+01 \\
        % 6  & 6.22E-13 & 7.69E-13 & 6.22E-13 & 6.22E-13 & 159.0137 \\
        % 7  & 1.20E-12 & 1.27E-12 & 1.20E-12 & 1.20E-12 & 1391.0836 \\
        % 8  & 2.68E-12 & 2.84E-12 & 2.68E-12 & 2.68E-12 & 27251.9335 \\
        % 9  & 3.26E-12 & 2.56E-12 & 3.26E-12 & 3.26E-12 & 8.58E+06 \\
        10 & 1.68E-12 & 1.78E-12 & 1.68E-12 & 1.68E-12 & 3.33E+06 \\
        % 11 & 7.89E-12 & 7.39E-12 & 7.89E-12 & 7.89E-12 & 1.03E+08 \\
        % 12 & 4.30E-12 & 3.28E-12 & 4.30E-12 & 4.30E-12 & 2.66E+09 \\
        % 13 & 2.08E-11 & 2.03E-11 & 2.08E-11 & 2.08E-11 & 3.56E+11 \\
        % 14 & 1.84E-12 & 5.05E-12 & 1.84E-12 & 1.84E-12 & 1.14E+14 \\
        15 & 7.44E-11 & 1.36E-10 & 7.44E-11 & 7.44E-11 & 7.62E+14 \\
        % 16 & 7.86E-11 & 1.27E-10 & 7.86E-11 & 7.86E-11 & 6.80E+16 \\
        % 17 & 4.27E-12 & 3.16E-12 & 4.27E-12 & 4.27E-12 & 6.45E+18 \\
        % 18 & 5.28E-12 & 5.51E-12 & 5.28E-12 & 5.28E-12 & 1.12E+20 \\
        % 19 & 8.66E-11 & 4.91E-11 & 8.66E-11 & 8.66E-11 & 1.03E+22 \\
        20 & 3.02E-11 & 2.58E-11 & 3.02E-11 & 3.02E-11 & 7.02E+23 \\
        \bottomrule
    \end{tabular}
    \caption{Zero-sum games with small learning rates: relative errors of NE solutions and average determinants of linear system matrices in dimensions 5, 10, 15, 20, captured by solving $A_{system}x = b$ in zero-sum games via a parallelization method.}
    \label{tab:zero_par_nonviolation_avg_rel_errors_rt}
\end{table}

\begin{table}[!ht]
    \centering
    \resizebox{1\textwidth}{!}{%
    \begin{tabular}{c|cc|c|c|c}
        \toprule
        \textbf{Model:} & \multicolumn{2}{c}{\textbf{Section \ref{sec:ModelForSimulation}}} & \textbf{Section \ref{sec:EconomicModel}} & \textbf{Section \ref{sec:FlModel}} & \\
        \hline
        \textbf{Dimension} & \textbf{Avg Rel Err of $\boldsymbol{x^{*}}$} & \textbf{Avg Rel Err in $\boldsymbol{y^*}$} & \textbf{Avg Rel Err in $\boldsymbol{x^{*}}$} & \textbf{Avg Rel Err in $\boldsymbol{x^{*}}$} & $\boldsymbol{\textbf{det} (A_{system_1}})$ \\
        \hline
        \midrule    
        100 & 8.11E-10 & 1.41E-09 & 8.11E-10 & 1.87E-05 & 1.38E+249 \\
        200 & 3.62E-10 & 3.34E-10 & 3.62E-10 & 2.28E-04 & $\infty$ \\
        300 & 4.54E-08 & 4.96E-08 & 4.54E-08 & 9.42E-05 & $\infty$ \\
        400 & 1.40E-08 & 2.54E-08 & 1.40E-08 & 6.96E-05 & $\infty$ \\
        500 & 2.19E-07 & 2.24E-07 & 2.19E-07 & 3.18E-02 & $\infty$ \\
        \bottomrule
    \end{tabular}}
    \caption{Zero-sum games with small learning rates: relative error of NE solutions and determinant of linear system matrices for one instance each in dimensions 100, 200, 300, 400, and 500 via parallelization method.}
    \label{tab:zero_nonviolation_parallel_rel_errors_rt_high_dim}
\end{table}

Finally, we compare our proposed method with standard techniques for approximating NE via time-average convergence. 
Even in dimension 7, as depicted in Table \ref{tab:zero_nonviolation_num_iters_updated_comparison_time_avg_few_instances}, 1000s of iterations are required to get a decent approximation (within $.1\%=10^{-3}$) of the Nash equilibrium using standard methods. 
Our proposed model obtains a much higher level of precision ($10^{-13}$) while completing only 14 iterations of \ref{eqn:AltGD}.  
On the other hand, the time-average convergence method via \ref{eqn:AltGD} requires more computation time, achieving worse relative errors than our parallelization method.

\begin{table}[!ht]
\centering
\resizebox{1\textwidth}{!}{%
\begin{tabular}{c|*{5}{>{\raggedleft\arraybackslash}p{2.5cm}}|c}
    \toprule
    & \multicolumn{5}{c|}{\textbf{Iterations Required for Time-Average Convergence to First Reach Given Relative Error}} & \\ % Centered header
    \textbf{Instance} & \textbf{10\%} & \textbf{5\%} & \textbf{1\%} & \textbf{0.5\%} & \textbf{0.1\%} & \textbf{Parallelization} \\
    \hline
    \midrule
    1  & 3073 & 3241 & 6842 & 10208 & 34148 & 14 \\
    2  & 454  & 497  & 1786 & 3067  & 10535 & 14 \\
    3  & 1633 & 1678 & 6967 & 8713  & 24466 & 14 \\
    4  & 425  & 569  & 2365 & 3683  & 10042 & 14 \\
    % $\cdots$  & $\cdots$  & $\cdots$  & $\cdots$ & $\cdots$  & $\cdots$ & $\cdots$ \\
    5  & 133  & 548  & 1390 & 2233  & 5164  & 14 \\
    % 6  & 330  & 357  & 2451 & 3162  & 9516  & 14 \\
    % 7  & 3085 & 3184 & 3286 & 3301  & 3315  & 14 \\
    % 8  & 1125 & 1177 & 1223 & 2458  & 3703  & 14 \\
    % 9  & 607  & 1174 & 1214 & 3669  & 9793  & 14 \\
    % $\cdots$  & $\cdots$  & $\cdots$  & $\cdots$ & $\cdots$  & $\cdots$ & $\cdots$ \\
    % 10 & 530  & 1883 & 4794 & 4801  & 25972 & 14 \\
    % 11 & 678  & 695  & 2190 & 3156  & 11430 & 14 \\
    % 12 & 153  & 163  & 824  & 828   & 3819  & 14 \\
    % 13 & 730  & 776  & 2425 & 4911  & 20442 & 14 \\
    % 14 & 447  & 465  & 1681 & 2172  & 6544  & 14 \\
    % $\cdots$  & $\cdots$  & $\cdots$  & $\cdots$ & $\cdots$  & $\cdots$ & $\cdots$ \\
    % 15 & 11056 & 11145 & 11254 & 56580 & 243743 & 14 \\
    % 16 & 414  & 865  & 1302 & 3489  & 8309  & 14 \\
    % 17 & 1405 & 1442 & 2960 & 4459  & 8949  & 14 \\
    % 18 & 7506 & 7653 & 7960 & 16000 & 24113 & 14 \\
    % 19 & 664  & 693  & 1448 & 5084  & 7282  & 14 \\
    % $\cdots$  & $\cdots$  & $\cdots$  & $\cdots$ & $\cdots$  & $\cdots$ & $\cdots$ \\
    % 20 & 2227 & 2304 & 2392 & 2407  & 4827  & 14 \\
    % 21 & 4586 & 4719 & 4862 & 4884  & 4904  & 14 \\
    % 22 & 470  & 486  & 1005 & 1508  & 4036  & 14 \\
    % 23 & 479  & 996  & 1577 & 2616  & 11033 & 14 \\
    % 24 & 270  & 768  & 1337 & 2386  & 8776  & 14 \\
    % $\cdots$  & $\cdots$  & $\cdots$  & $\cdots$ & $\cdots$  & $\cdots$ & $\cdots$ \\
    % 25 & 189  & 260  & 1372 & 1638  & 9325  & 14 \\
    % 26 & 1167 & 1250 & 3842 & 8920  & 46128 & 14 \\
    % 27 & 717  & 748  & 3100 & 3899  & 12513 & 14 \\
    % 28 & 379  & 838  & 1252 & 1257  & 2524  & 14 \\
    % 29 & 10229 & 10537 & 10861 & 10911 & 10954 & 14 \\
    $\vdots$  & $\vdots$  & $\vdots$  & $\vdots$ & $\vdots$  & $\vdots$ & $\vdots$ \\
    30 & 509  & 532  & 802  & 2418  & 8055  & 14 \\
    \hline
    \midrule
    % \rowcolor{blue!20}
    Average & 1856 & 2055 & 3225 & 6161 & 19812 & 14 \\
    % \rowcolor{red!20}
    Max     & 11056 & 11145 & 11254 & 56580 & 243743 & 14 \\
    Min     & 133 & 163 & 802 & 828 & 2524 & 14 \\
    \bottomrule
\end{tabular}}
\caption{Time-average convergence vs. model in Section \ref{sec:ModelForSimulation} via parallelization method: the number of iterations needed to approximate NE in dimension 7.}
\label{tab:zero_nonviolation_num_iters_updated_comparison_time_avg_few_instances}
\end{table}

We also compared the performance when \ref{eqn:AltGD} runs for a fixed time.
This allows a more accurate comparison, as our proposed method requires additional time to approximate $A'x^*=b'$ using the information collected from gradients. 
When our parallelization method is applied to solve the model in Section \ref{sec:ModelForSimulation} --- including solving $A'x^*=b'$ --- the empirical runtime is at most 0.000446 seconds, while achieving a higher precision of $10^{-10}$, as shown in Table \ref{tab:zero_par_nonviolation_avg_rel_errors_rt}.
In contrast, as shown in Table \ref{tab:time_avg_convergence_dim_7}, standard time-average convergence results only reach a precision level of approximately $10^{-6}$ after running for 10 minutes in dimension 7. 
As shown in Table \ref{tab:zero_avg_rel_errors_time_average_non_violation_time_minutes} continues to drastically outperform standard convergence guarantees even in higher dimensions. 
% For dimensions 5, 10, 15, and 20, we also set time limits of 1, 3, and 5 minutes and record the average relative errors of NE via time-average convergence of \ref{eqn:AltGD}. 
% As shown in Table \ref{tab:zero_avg_rel_errors_time_average_non_violation_time_minutes}, a precision of $10^{-4}$ is consistently achieved, reaching $10^{-5}$ when \ref{eqn:AltGD} is run for 5 minutes.
% In contrast, when 
We note that in Tables \ref{tab:time_avg_convergence_dim_7} and \ref{tab:zero_avg_rel_errors_time_average_non_violation_time_minutes}, there are cases when (average) relative error slightly increases after longer \ref{eqn:AltGD} updates. 
It is fairly natural in \ref{eqn:AltGD} since the time-average trajectory oscillates around the NE (see Figure \ref{fig:spiral_trajectory_of_time_average}). 
% If $(x^t, y^t) \approx (x^0, y^0)$, then the time-average is approximately $x^*,y^*$. 
% As strategies cycle around the NE, they periodically return near the initial strategy $(x^0, y^0)$, which result in such oscillatory behavior. 
% As shown in Figure \ref{fig:spiral_trajectory_of_time_average}, the distance to the NE oscillates with a decaying amplitude. 
The complete results of Tables \ref{tab:zero_par_nonviolation_avg_rel_errors_rt} and \ref{tab:time_avg_convergence_dim_7} in dimensions 1 through 20 and \ref{tab:zero_nonviolation_num_iters_updated_comparison_time_avg_few_instances} across 30 instances are provided in Appendix \ref{app:tables2}.

\begin{table}[!ht]
\centering
% \resizebox{1\textwidth}{!}{%
\begin{tabular}{c|*{4}{>{\raggedleft\arraybackslash}p{2cm}}}
    \toprule
    & \multicolumn{4}{c}{\textbf{Rel Err of $\boldsymbol{x^{*}}$ via Time-Average Convergence After}} \\ % Centered header
    \textbf{Instance} & \textbf{1 minute} & \textbf{3 minutes} & \textbf{5 minutes} & \textbf{10 minutes}\\
    \hline
    \midrule
    1  & 2.32E-04 & 8.30E-05 & 2.58E-06 & 3.88E-06 \\
    2  & 1.38E-06 & 4.83E-07 & 3.28E-07 & 1.10E-07 \\
    3  & 1.81E-05 &  1.56E-06 &  5.31E-06 & 2.12E-06 \\
    4  & 3.59E-06 & 1.69E-06 & 9.02E-07 & 3.10E-07 \\
    5  & 3.11E-06 & 7.11E-07 & 3.99E-07 & 1.20E-07 \\
    % 6  & 5.10E-06 &  4.95E-07 &  1.41E-06 & 6.47E-07 \\
    % 7  & 7.89E-05 & 1.94E-05 & 6.72E-06 & 6.54E-06 \\
    % 8  & 1.46E-05 & 3.34E-06 &  4.15E-07 &  1.90E-06 \\
    % 9  & 4.68E-06 &  5.57E-07 &  8.03E-07 & 3.94E-07 \\
    % 10 & 1.79E-05 & 8.16E-06 & 4.44E-06 & 9.33E-07 \\
    % 11 & 2.91E-06 & 8.18E-07 & 4.99E-07 & 3.03E-07 \\
    % 12 & 3.53E-06 & 1.56E-06 & 4.68E-07 & 4.51E-07 \\
    % 13 & 2.17E-06 & 1.28E-05 & 7.82E-07 & 8.85E-07 \\
    % 14 & 1.68E-06 & 5.86E-07 & 1.02E-07 & 1.49E-07 \\
    % 15 & 2.30E-05 & 2.20E-05 & 9.40E-06 & 8.15E-07 \\
    % 16 & 9.83E-06 & 6.13E-07 &  3.28E-07 &  1.07E-06 \\
    % 17 & 7.77E-06 & 3.36E-06 & 2.55E-05 & 1.82E-06 \\
    % 18 & 1.54E-03 & 5.79E-04 &  4.19E-05 &  1.63E-04 \\
    % 19 & 2.21E-05 & 2.41E-06 & 1.29E-06 & 5.74E-07 \\
    % 20 & 1.15E-04 & 1.55E-05 & 4.05E-06 & 1.26E-06 \\
    % 21 & 1.09E-05 & 1.14E-05 &  4.22E-06 &  9.98E-06 \\
    % 22 & 1.40E-06 & 7.12E-06 & 2.10E-07 & 1.92E-07 \\
    % 23 & 6.31E-06 & 4.75E-06 & 1.94E-06 & 1.19E-06 \\
    % 24 & 1.11E-05 & 7.30E-07 & 1.84E-06 & 8.96E-07 \\
    % 25 & 2.39E-06 & 5.60E-07 & 6.75E-07 & 3.30E-07 \\
    % 26 & 6.01E-05 & 3.42E-06 &  2.22E-06 &  5.81E-06 \\
    % 27 & 6.21E-05 & 1.07E-05 & 3.96E-07 & 6.25E-06 \\
    % 28 & 7.35E-06 & 1.00E-06 & 1.07E-06 & 7.74E-07 \\
    % 29 & 1.46E-04 & 8.03E-05 & 2.10E-05 & 3.49E-06 \\
    $\vdots$  & $\vdots$  & $\vdots$  & $\vdots$ & $\vdots$  \\
    30 & 5.73E-06 & 2.07E-06 &  4.82E-07 &  6.38E-07 \\ 
    \hline
    \midrule
    Average & 8.08E-05 & 2.93E-05 & 4.72E-06 & 7.23E-06 \\ 
    Max     & 1.38E-06 & 4.83E-07 & 1.02E-07 & 1.10E-07 \\ 
    Min     & 1.54E-03 & 5.79E-04	& 4.19E-05 & 1.63E-04 \\ 
    \bottomrule
\end{tabular}
\caption{Zero-sum games with small learning rates: average relative error of time-averaged NE solutions in dimension 7 via \ref{eqn:AltGD} for 1, 3, 5, and 10 minutes.}
\label{tab:time_avg_convergence_dim_7}
\end{table}

\begin{table}[!ht]
    \centering
    % \resizebox{1\textwidth}{!}{%
    % \begin{tabular}{c|ccc|c}
    \begin{tabular}{c|p{3cm}<{\centering}p{3cm}<{\centering}p{3cm}<{\centering}|c}
        \toprule
        & \multicolumn{3}{c|}{\textbf{Avg Rel Err of $\boldsymbol{x^*}$ via Time-Average Convergence}}
        & \textbf{Time for Parallelization}\\
        \hline
        \textbf{Dimension} & \textbf{1 minute} & \textbf{3 minutes} & \textbf{5 minutes} & \textbf{($\approx 10^{-10}$ Avg Rel Err)}\\
        \hline
        \midrule
        5 &  8.85E-05 & 9.25E-05 & 6.11E-05 & 0.00128 seconds\\
        10 & 4.45E-04 & 1.00E-04 & 5.16E-05 & 0.00194 seconds\\
        15 & 1.50E-04 & 1.05E-04 & 4.90E-05 & 0.00446 seconds\\
        20 & 3.86E-04 & 1.02E-04 & 5.91E-05 & 0.00214 seconds\\
        \bottomrule
    \end{tabular}
    \caption{Zero-sum games with small learning rates: average relative error of time-averaged NE solutions via \ref{eqn:AltGD} for 1, 3, and 5 minutes, and elapsed time of parallelization method achieving $\approx 10^{-10}$ precision.}
    \label{tab:zero_avg_rel_errors_time_average_non_violation_time_minutes}
\end{table}

\def\sizer{.65}
\begin{figure}[!ht]
  \begin{subfigure}[b]{0.45\linewidth}
    \centering
    \includegraphics[width=\sizer\linewidth]{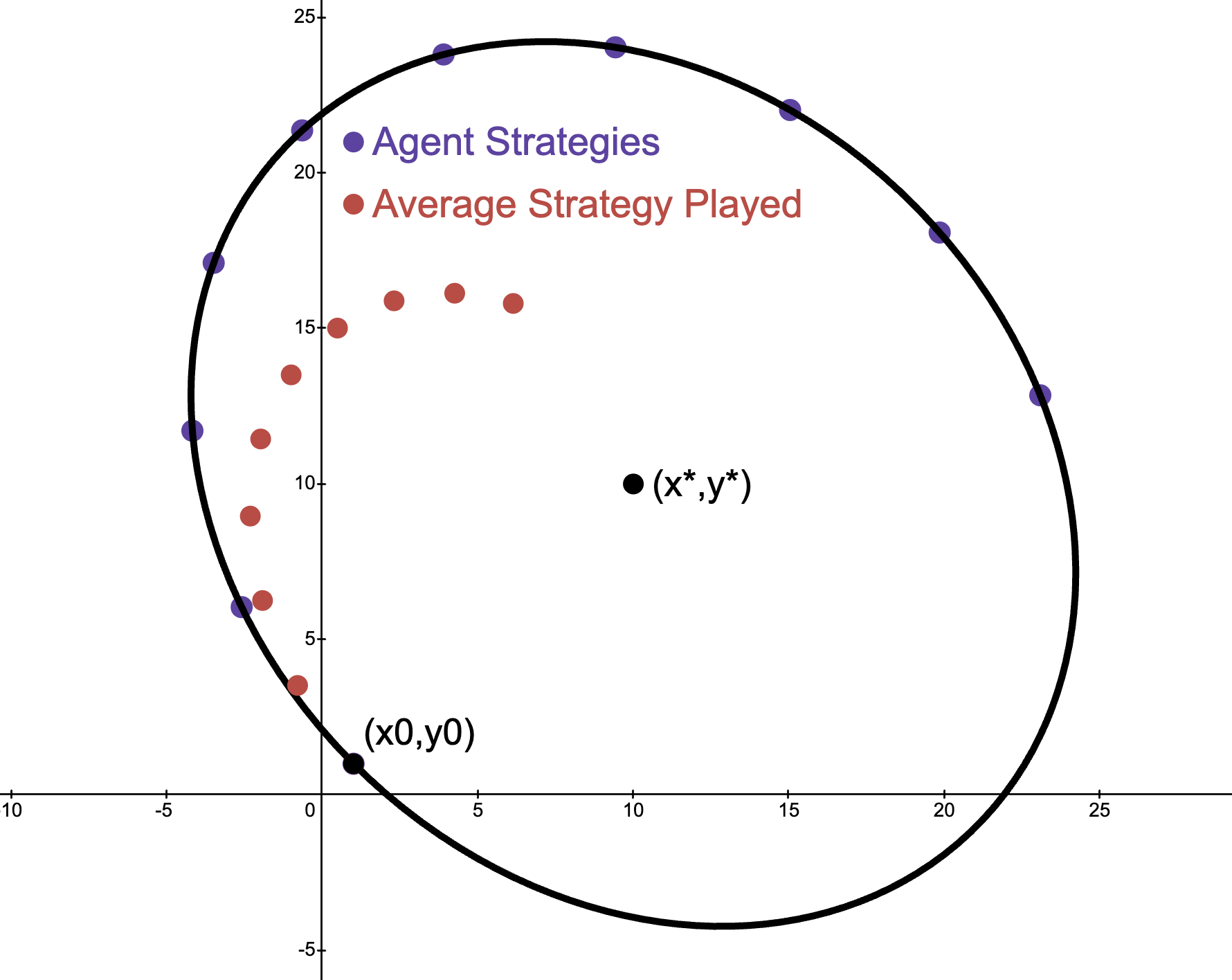} 
    \caption{Time-average trajectory in early iterations.} 
    \label{fig7:a} 
    \vspace{4ex}
  \end{subfigure}%% 
  \begin{subfigure}[b]{0.45\linewidth}
    \centering
    \includegraphics[width=\sizer\linewidth]{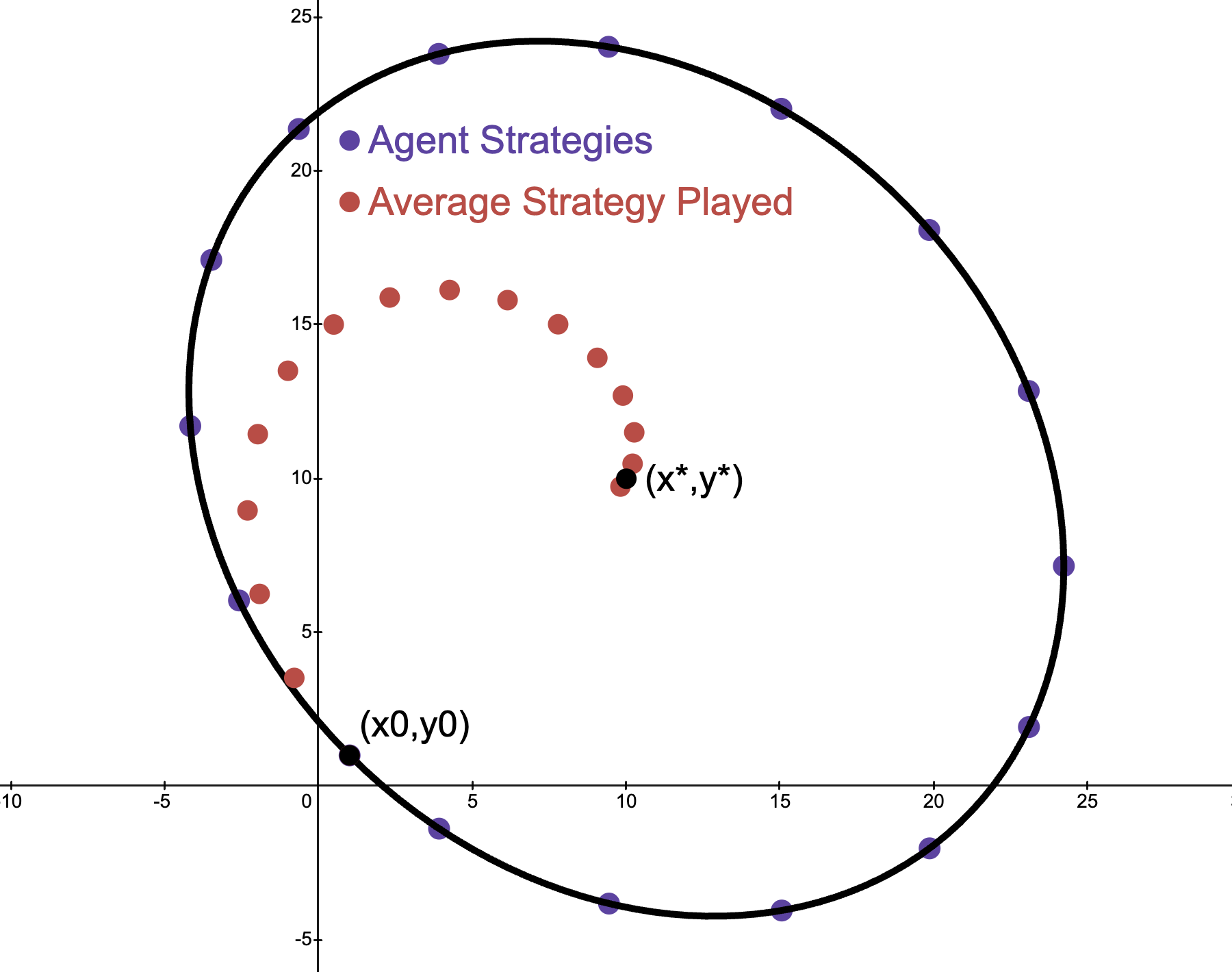} 
    \caption{First highly accurate approximation of the NE.} 
    \label{fig7:b} 
    \vspace{4ex}
  \end{subfigure} 
  \begin{subfigure}[b]{0.45\linewidth}
    \centering
    \includegraphics[width=\sizer\linewidth]{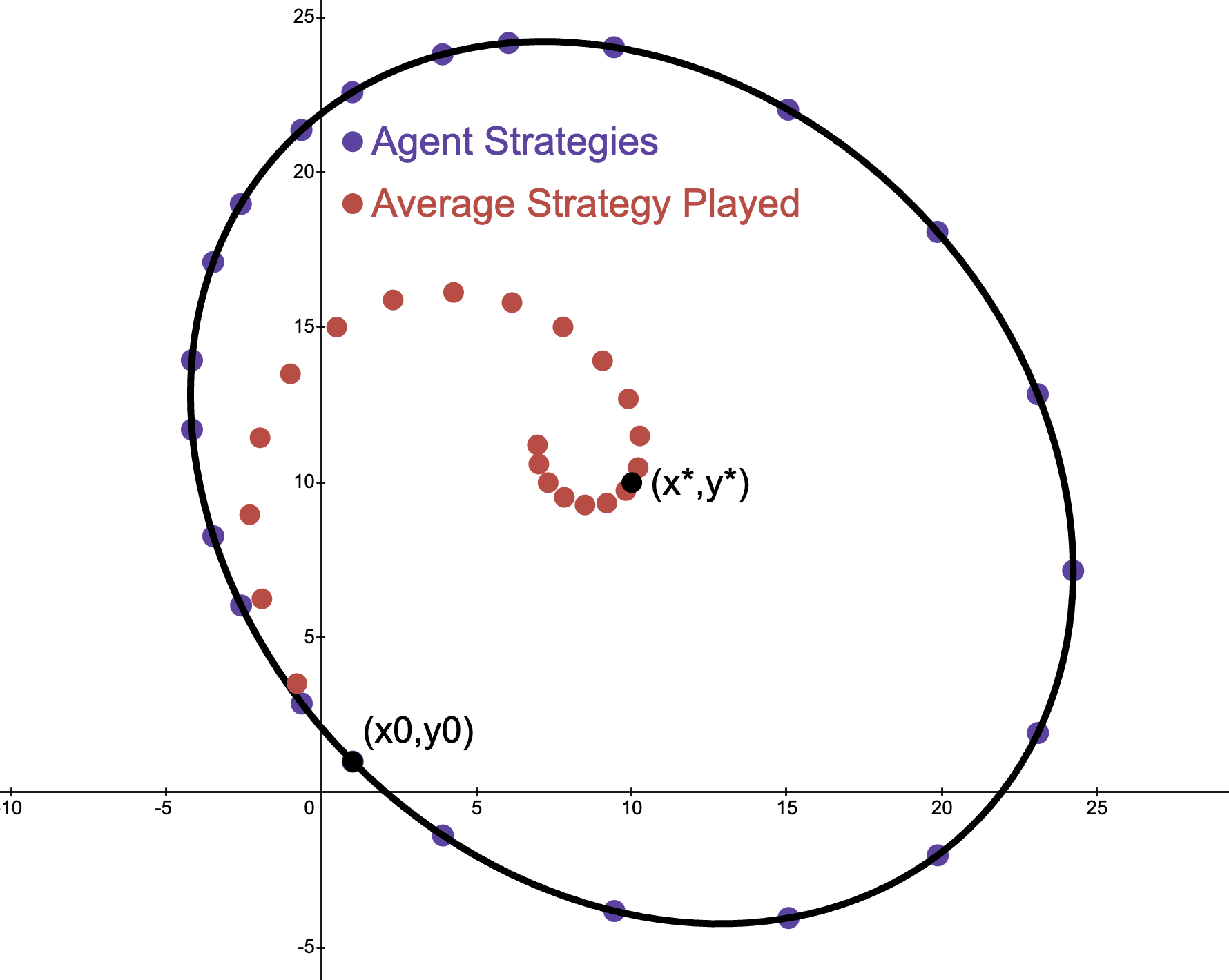} 
    \caption{Oscillatory deviation around the NE.} 
    \label{fig7:c} 
  \end{subfigure}%%
  \begin{subfigure}[b]{0.45\linewidth}
    \centering
    \includegraphics[width=\sizer\linewidth]{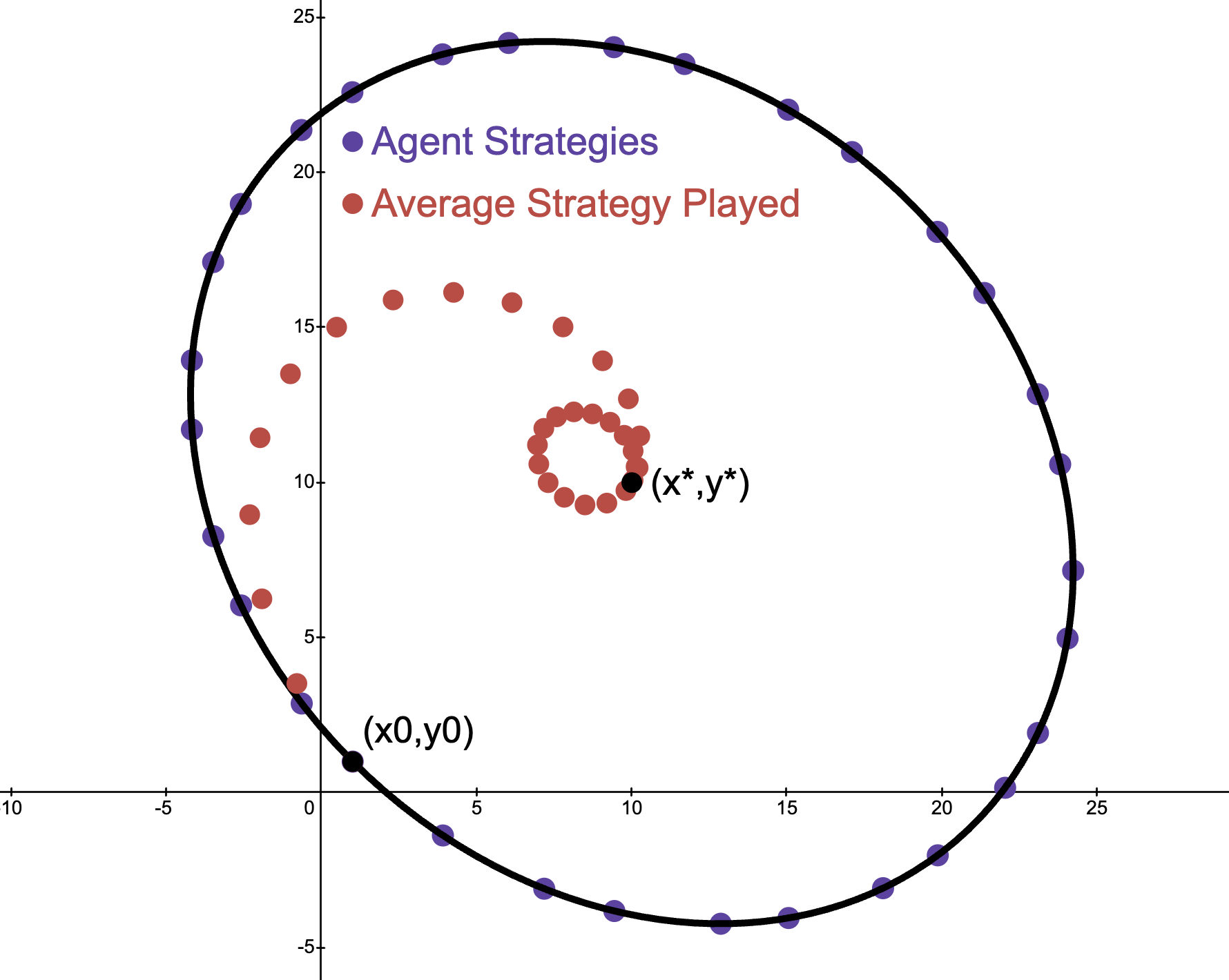} 
    \caption{Second highly accurate approximation of NE.} 
    \label{fig7:d} 
  \end{subfigure} 
  \caption{Spiral trajectory of time-average strategies towards the NE via \ref{eqn:AltGD}}
  \label{fig:spiral_trajectory_of_time_average} 
\end{figure}

\subsubsection{Zero-sum Games with Large Learning Rates and Coordination Games}

As shown in Tables \ref{tab:zero_avg_rel_errors_parallel_violation}, \ref{tab:zero_parallel_violated_rel_errors_dim18}, 
\ref{tab:coord_parallel_avg_rel_errors}, and \ref{tab:coord_parallel_rel_errors_dim20} 
in Appendix \ref{app:tables2}, our results remain consistent in both zero-sum games with large learning rates and in coordination games. 
The result in zero-sum games with large learning rates is especially notable -- every existing method requires learning to be sufficiently small to find the Nash equilibrium. 

\subsection{Discussion}

Overall, our parallelization method fundamentally changes the structure of linear systems by eliminating correlation between equations. 
The result is a massive improvement in numerical conditioning, leading to low error and consistent performance across all dimensions tested. 
It removes the need for tuning learning rates or carefully regularizing the system with a Tikhonov regularizer, and reduces the number of updates needed to approximate NE compared to standard methods. 
In contrast to the base methods in Section \ref{sec:Base_Experiments}, the parallelization method consistently approximates the set of NE with high precision while demonstrating improvements in stability and reliability.

\section{Conclusion}

We develop a new method for finding the set of Nash equilibria (NE) that only requires a finite number of inquiries from an online learning algorithm, a first in algorithmic game theory.
Further, we prove that our method can be parallelized and works with arbitrary learning rates, two additional firsts. 
We also demonstrate that the parallelized version of our approach is numerically stable; 
Our proposed method runs faster than the traditional time-average convergence algorithm, while approximating NE with higher accuracy. 
Specifically, the parallelization method achieves a precision of $\approx 10^{-10}$ in less than a second across dimensions 1 through 20, outperforming the time-average convergence approach, which reaches only $\approx 10^{-6}$ precision in 5 minutes. 
Further, as discussed in Section \ref{sec:UnboundedMotivation}, our approach can be applied to games with probability vectors once the support of the Nash equilibrium is known.

\bibliographystyle{plain}
\bibliography{References}

\begin{thebibliography}{10}

\bibitem{abernethy2021last}
Jacob Abernethy, Kevin~A Lai, and Andre Wibisono.
\newblock Last-iterate convergence rates for min-max optimization: Convergence of hamiltonian gradient descent and consensus optimization.
\newblock In {\em Algorithmic Learning Theory}, pages 3--47. PMLR, 2021.

\bibitem{adler2013equivalence}
Ilan Adler.
\newblock The equivalence of linear programs and zero-sum games.
\newblock {\em International Journal of Game Theory}, 42:165--177, 2013.

\bibitem{bailey2021left}
James~P Bailey.
\newblock ${O}(1/t)$ time-average convergence in a generalization of multiagent zero-sum games.
\newblock {\em arXiv preprint arXiv:2110.02482}, 2021.

\bibitem{Bailey20Regret}
James~P. Bailey, Gauthier Gidel, and Georgios Piliouras.
\newblock Finite regret and cycles with fixed step-size via alternating gradient descent-ascent.
\newblock In Jacob Abernethy and Shivani Agarwal, editors, {\em Proceedings of Thirty Third Conference on Learning Theory}, volume 125 of {\em Proceedings of Machine Learning Research}, pages 391--407. PMLR, 09--12 Jul 2020.

\bibitem{Bailey18Divergence}
James~P. Bailey and Georgios Piliouras.
\newblock Multiplicative weights update in zero-sum games.
\newblock In {\em Proceedings of the 2018 ACM Conference on Economics and Computation}, EC ’18, page 321–338, New York, NY, USA, 2018. Association for Computing Machinery.

\bibitem{Bailey19Hamiltonian}
James~P. Bailey and Georgios Piliouras.
\newblock Multi-agent learning in network zero-sum games is a hamiltonian system.
\newblock In Edith Elkind, Manuela Veloso, Noa Agmon, and Matthew~E. Taylor, editors, {\em Proceedings of the 18th International Conference on Autonomous Agents and MultiAgent Systems, {AAMAS} '19, Montreal, QC, Canada, May 13-17, 2019}, pages 233--241. International Foundation for Autonomous Agents and Multiagent Systems, 2019.

\bibitem{brock2018large}
Andrew Brock, Jeff Donahue, and Karen Simonyan.
\newblock Large scale gan training for high fidelity natural image synthesis.
\newblock {\em International Conference on Learning Representations}, 2018.

\bibitem{cai2016zero}
Yang Cai, Ozan Candogan, Constantinos Daskalakis, and Christos Papadimitriou.
\newblock Zero-sum polymatrix games: A generalization of minmax.
\newblock {\em Mathematics of Operations Research}, 41(2):648--655, 2016.

\bibitem{pmlr-v235-cai24f}
Yang Cai, Argyris Oikonomou, and Weiqiang Zheng.
\newblock Accelerated algorithms for constrained nonconvex-nonconcave min-max optimization and comonotone inclusion.
\newblock In Ruslan Salakhutdinov, Zico Kolter, Katherine Heller, Adrian Weller, Nuria Oliver, Jonathan Scarlett, and Felix Berkenkamp, editors, {\em Proceedings of the 41st International Conference on Machine Learning}, volume 235 of {\em Proceedings of Machine Learning Research}, pages 5312--5347. PMLR, 21--27 Jul 2024.

\bibitem{calvetti2000tikhonov}
Daniela Calvetti, Serena Morigi, Lothar Reichel, and Fiorella Sgallari.
\newblock Tikhonov regularization and the l-curve for large discrete ill-posed problems.
\newblock {\em Journal of computational and applied mathematics}, 123(1-2):423--446, 2000.

\bibitem{cesa2006prediction}
Nicolo Cesa-Bianchi and G{\'a}bor Lugosi.
\newblock {\em Prediction, learning, and games}.
\newblock Cambridge university press, 2006.

\bibitem{cirillo2020brain_tumor}
Marco~Domenico Cirillo, David Abramian, and Anders Eklund.
\newblock Vox2vox: 3d-gan for brain tumour segmentation.
\newblock In {\em Brainlesion: Glioma, Multiple Sclerosis, Stroke and Traumatic Brain Injuries: 6th International Workshop, BrainLes 2020, Held in Conjunction with MICCAI 2020, Lima, Peru, October 4, 2020, Revised Selected Papers, Part I 6}, pages 274--284. Springer, 2021.

\bibitem{daskalakis2019last}
C~Daskalakis and Ioannis Panageas.
\newblock Last-iterate convergence: Zero-sum games and constrained min-max optimization.
\newblock In {\em 10th Innovations in Theoretical Computer Science (ITCS) conference, ITCS 2019}, 2019.

\bibitem{daskalakis2011near}
Constantinos Daskalakis, Alan Deckelbaum, and Anthony Kim.
\newblock Near-optimal no-regret algorithms for zero-sum games.
\newblock In {\em Proceedings of the twenty-second annual ACM-SIAM symposium on Discrete Algorithms}, pages 235--254. SIAM, 2011.

\bibitem{daskalakis2021nearoptimal}
Constantinos Daskalakis, Maxwell Fishelson, and Noah Golowich.
\newblock Near-optimal no-regret learning in general games, 2021.

\bibitem{daskalakis2017training}
Constantinos Daskalakis, Andrew Ilyas, Vasilis Syrgkanis, and Haoyang Zeng.
\newblock Training gans with optimism.
\newblock {\em arXiv preprint arXiv:1711.00141}, 2017.

\bibitem{daskalakis2018limit}
Constantinos Daskalakis and Ioannis Panageas.
\newblock The limit points of (optimistic) gradient descent in min-max optimization.
\newblock {\em Advances in neural information processing systems}, 31, 2018.

\bibitem{duvocelle2023multiagent}
Benoit Duvocelle, Panayotis Mertikopoulos, Mathias Staudigl, and Dries Vermeulen.
\newblock Multiagent online learning in time-varying games.
\newblock {\em Mathematics of Operations Research}, 48(2):914--941, 2023.

\bibitem{golowich2020iterate}
Noah Golowich, Sarath Pattathil, Constantinos Daskalakis, and Asuman Ozdaglar.
\newblock Last iterate is slower than averaged iterate in smooth convex-concave saddle point problems.
\newblock In Jacob Abernethy and Shivani Agarwal, editors, {\em Proceedings of Thirty Third Conference on Learning Theory}, volume 125 of {\em Proceedings of Machine Learning Research}, pages 1758--1784. PMLR, 09--12 Jul 2020.

\bibitem{goodfellow2014generative}
Ian Goodfellow, Jean Pouget-Abadie, Mehdi Mirza, Bing Xu, David Warde-Farley, Sherjil Ozair, Aaron Courville, and Yoshua Bengio.
\newblock Generative adversarial nets.
\newblock In {\em Advances in neural information processing systems}, pages 2672--2680, 2014.

\bibitem{Hairer2006EnergyConserve}
Ernst Hairer.
\newblock {\em Long-time Energy Conservation}, page 162–180.
\newblock London Mathematical Society Lecture Note Series. Cambridge University Press, 2006.

\bibitem{li2016precomputed}
Chuan Li and Michael Wand.
\newblock Precomputed real-time texture synthesis with markovian generative adversarial networks.
\newblock In {\em Computer Vision--ECCV 2016: 14th European Conference, Amsterdam, The Netherlands, October 11-14, 2016, Proceedings, Part III 14}, pages 702--716. Springer, 2016.

\bibitem{Mertikopoulos2018CyclesAdverserial}
Panayotis Mertikopoulos, Christos Papadimitriou, and Georgios Piliouras.
\newblock Cycles in adversarial regularized learning.
\newblock In {\em Proceedings of the Twenty-Ninth Annual ACM-SIAM Symposium on Discrete Algorithms}, SODA ’18, page 2703–2717, USA, 2018.

\bibitem{mokhtari2020convergence}
Aryan Mokhtari, Asuman~E Ozdaglar, and Sarath Pattathil.
\newblock Convergence rate of {O(1/k)} for optimistic gradient and extragradient methods in smooth convex-concave saddle point problems.
\newblock {\em SIAM Journal on Optimization}, 30(4):3230--3251, 2020.

\bibitem{nemirovski2004prox}
Arkadi Nemirovski.
\newblock Prox-method with rate of convergence o (1/t) for variational inequalities with lipschitz continuous monotone operators and smooth convex-concave saddle point problems.
\newblock {\em SIAM Journal on Optimization}, 15(1):229--251, 2004.

\bibitem{nesterov2005excessive}
Yu~Nesterov.
\newblock Excessive gap technique in nonsmooth convex minimization.
\newblock {\em SIAM Journal on Optimization}, 16(1):235--249, 2005.

\bibitem{pu2020online}
Shi Pu, J~Joaquin Escudero-Garz{\'a}s, Alfredo Garcia, and Shahin Shahrampour.
\newblock An online mechanism for resource allocation in networks.
\newblock {\em IEEE Transactions on Control of Network Systems}, 7(3):1140--1150, 2020.

\bibitem{roughgarden2016twenty}
Tim Roughgarden.
\newblock {\em Twenty lectures on algorithmic game theory}.
\newblock Cambridge University Press, 2016.

\bibitem{vonNeumann28}
John von Neumann.
\newblock Zur {T}heorie der {G}esellschaftsspiele.
\newblock {\em Mathematische Annalen}, 100:295--320, 1928.
\newblock Translated by S. Bargmann as ``On the Theory of Games of Strategy'' in A. Tucker and R. D. Luce, editors, \emph{Contributions to the Theory of Games IV}, volume 40 of \emph{Annals of Mathematics Studies}, pages 13-42, 1957, Princeton University Press, Princeton.

\bibitem{yoon2021accelerated}
TaeHo Yoon and Ernest~K Ryu.
\newblock Accelerated algorithms for smooth convex-concave minimax problems with o (1/k\^{} 2) rate on squared gradient norm.
\newblock In {\em International Conference on Machine Learning}, pages 12098--12109. PMLR, 2021.

\bibitem{zhang2020deform}
Xiaoyue Zhang, Weijian Jian, Yu~Chen, and Shihting Yang.
\newblock Deform-gan: An unsupervised learning model for deformable registration.
\newblock {\em arXiv preprint arXiv:2002.11430}, 2020.

\bibitem{zhao2020bayesian}
Gengyan Zhao, Mary~E Meyerand, and Rasmus~M Birn.
\newblock Bayesian conditional gan for mri brain image synthesis.
\newblock {\em arXiv preprint arXiv:2005.11875}, 2020.

\bibitem{zhou2020image_resolution}
Hang Zhou, Ruiyao Cai, Tingwei Quan, Shijie Liu, Shiwei Li, Qing Huang, Ali Ert{\"u}rk, and Shaoqun Zeng.
\newblock 3d high resolution generative deep-learning network for fluorescence microscopy imaging.
\newblock {\em Optics letters}, 45(7):1695--1698, 2020.

\end{thebibliography}

\appendix

\section{Proof of Theorem \ref{thm:zero_invariant}}\label{sec:zero_invariant}

\ZeroInvariant*

\begin{proof}
    By Lemma \ref{lem:zero_partial}, we have 
    \begin{align*}
        \frac{\| y^{t+1} - y^{*}\|^{2} - \| y^{t} - y^{*} \|^{2}}{{\eta_2}} 
        = 
        \left\langle x^{t+1} + x^{t} - 2x^{*}, \ A(y^{t}-y^{*}) \right\rangle
    \end{align*}
    Expanding the right-hand side yields
    \begin{align*}
        &\left\langle x^{t+1} + x^{t} - 2x^{*}, \ A(y^{t}-y^{*}) \right\rangle \\
        = &\left\langle x^{t+1}, \ Ay^{t} \right\rangle - \left\langle x^{t+1}, \ Ay^{*} \right\rangle + \left\langle x^{t} \ Ay^{t} \right\rangle - \left\langle x^{t} \ Ay^{*} \right\rangle -2 \left\langle x^{*} \ Ay^{t} \right\rangle + 2 \left\langle x^{*}, \ Ay^{*} \right\rangle \\
        = &\left\langle x^{t+1}, \ Ay^{t} \right\rangle - \left\langle x^{t+1}, \ b_1 \right\rangle + \left\langle x^{t}, \ Ay^{t} \right\rangle - \left\langle x^{t}, \ b_1 \right\rangle - 2\left\langle y^{t}, \ A^{\intercal}x^{*} \right\rangle + 2\left\langle y^*, \ A^{\intercal}x^{*} \right\rangle \\
        = &\left\langle x^{t+1}, \ Ay^{t} \right\rangle - \left\langle x^{t+1}, \ b_1 \right\rangle + \left\langle x^{t}, \ Ay^{t} \right\rangle - \left\langle x^{t}, \ b_1 \right\rangle + 2\left\langle y^{t}, \ b_2 \right\rangle + 2\left\langle y^*, \ A^{\intercal}x^{*} \right\rangle
    \end{align*}
    Therefore, 
    \begin{align*}
        &\frac{\| x^{t+1} - x^{*}\|^{2} - \| x^{t} - x^{*} \|^{2}}{{\eta_1}} \\
        =
        &\left\langle x^{t+1}, \ Ay^{t} \right\rangle - \left\langle x^{t+1}, \ b_1 \right\rangle + \left\langle x^{t}, \ Ay^{t} \right\rangle - \left\langle x^{t}, \ b_1 \right\rangle + 2\left\langle y^{t}, \ b_2 \right\rangle + 2\left\langle y^*, \ A^{\intercal}x^{*} \right\rangle
    \end{align*}
    By following the same procedure identically, we obtain
    \begin{align*}
        &\frac{\| y^{t+1} - y^{*}\|^{2} - \| y^{t} - y^{*} \|^{2}}{{\eta_2}} \\
        =
        &-\left\langle x^{t+1}, \ Ay^{t+1} \right\rangle - \left\langle y^{t+1}, \ b_2 \right\rangle - \left\langle x^{t+1}, \ Ay^{t} \right\rangle - \left\langle y^{t}, \ b_2 \right\rangle + 2\left\langle x^{t+1}, \ b_1 \right\rangle - 2\left\langle y^{*}, \ A^{\intercal}x^{*} \right\rangle
    \end{align*}
    By adding the two equations above, we obtain
    \begin{align*}
        &\frac{\| x^{t+1} - x^{*} \|^{2} - \| x^{t} - x^{*} \|^{2}}{{\eta_1}} + \frac{\| y^{t+1} - y^{*}\|^{2} - \| y^{t} - y^{*} \|^{2}}{{\eta_2}} \\ 
        = &\left\langle x^{t+1}, \ Ay^{t} \right\rangle - \left\langle x^{t+1}, \ b_1 \right\rangle + \left\langle x^{t}, \ Ay^{t} \right\rangle - \left\langle x^{t}, \ b_1 \right\rangle + 2\left\langle y^{t}, \ b_2 \right\rangle + 2\left\langle y^*, \ A^{\intercal}x^{*} \right\rangle \\ 
        &-\left\langle x^{t+1}, \ Ay^{t+1} \right\rangle - \left\langle y^{t+1}, \ b_2 \right\rangle - \left\langle x^{t+1}, \ Ay^{t} \right\rangle - \left\langle y^{t}, \ b_2 \right\rangle + 2\left\langle x^{t+1}, \ b_1 \right\rangle - 2\left\langle y^{*}, \ A^{\intercal}x^{*} \right\rangle \\
        = &\left\langle x^{t}, \ Ay^{t}-b_1 \right\rangle - \left\langle x^{t+1}, \ Ay^{t+1}-b_1 \right\rangle + \left\langle y^{t}, \ b_2 \right\rangle - \left\langle y^{t+1}, \ b_2 \right\rangle
    \end{align*}
    Reorganizing the equation with all $t+1$ terms on one side and $t$ terms on the other yields
    \begin{align*}
         &\frac{\| x^{t+1}-x^* \|^{2}}{\eta_1} + \frac{\| y^{t+1}-y^* \|^{2}}{\eta_2} + \left\langle x^{t+1}, \ Ay^{t+1}-b_1 \right\rangle + \left\langle y^{t+1}, \ b_2 \right\rangle \\ = &\frac{\| x^{t}-x^* \|^{2}}{\eta_1} + \frac{\| y^{t}-y^* \|^{2}}{\eta_2} + \left\langle x^{t}, \ Ay^{t}-b_1 \right\rangle + \left\langle y^{t}, \ b_2 \right\rangle
    \end{align*}
    By Definition \ref{def:Zero_Energy}, this implies
    \begin{align*}
         h_{-}^{t+1} = h_-^{t}
    \end{align*}
    Since it holds for all $t\in \mathbb{Z}_{\geq 0}$, it follows that $h_-^{t} = h_-^{0}$.
    Hence, the energy function $h_-^{t}$ is time-invariant.
\end{proof}

\section{Proof of Theorem \ref{thm:coord_invariant}}\label{sec:coord_invariant}

\CoordInvariant*

\begin{proof}
Expanding the right-hand side yields
\begin{align*}
    &\left\langle x^{t+1} + x^{t} - 2x^{*}, \ A(y^{t}-y^{*}) \right\rangle \\
    = 
    &\left\langle x^{t+1}, \ Ay^{t} \right\rangle - \left\langle x^{t+1}, \ Ay^{*} \right\rangle + \left\langle x^{t} \ Ay^{t} \right\rangle - \left\langle x^{t} \ Ay^{*} \right\rangle -2 \left\langle x^{*} \ Ay^{t} \right\rangle + 2 \left\langle x^{*}, \ Ay^{*} \right\rangle \\
    = 
    &\left\langle x^{t+1}, \ Ay^{t} \right\rangle - \left\langle x^{t+1}, \ b_1 \right\rangle + \left\langle x^{t}, \ Ay^{t} \right\rangle - \left\langle x^{t}, \ b_1 \right\rangle - 2\left\langle y^{t}, \ A^{\intercal}x^{*} \right\rangle + 2\left\langle y^*, \ A^{\intercal}x^{*} \right\rangle \\
    = 
    &\left\langle x^{t+1}, \ Ay^{t} \right\rangle - \left\langle x^{t+1}, \ b_1 \right\rangle + \left\langle x^{t}, \ Ay^{t} \right\rangle - \left\langle x^{t}, \ b_1 \right\rangle - 2\left\langle y^{t}, \ b_2 \right\rangle + 2\left\langle y^*, \ A^{\intercal}x^{*} \right\rangle \\ 
\end{align*}
Therefore, 
\begin{align*}
    &\frac{\| x^{t+1} - x^{*} \|^{2} - \| x^{t} - x^{*} \|^{2}}{{\eta_1}} \\
    =
    &\left\langle x^{t+1}, \ Ay^{t} \right\rangle - \left\langle x^{t+1}, \ b_1 \right\rangle + \left\langle x^{t}, \ Ay^{t} \right\rangle - \left\langle x^{t}, \ b_1 \right\rangle - 2\left\langle y^{t}, \ b_2 \right\rangle + 2\left\langle y^*, \ A^{\intercal}x^{*} \right\rangle
\end{align*}
By following the same procedure identically, we obtain
\begin{align*}
    &\frac{\| y^{t+1} - y^{*}\|^{2} - \| y^{t} - y^{*} \|^{2}}{{\eta_2}}\\
    =
    &\left\langle x^{t+1}, \ Ay^{t+1} \right\rangle - \left\langle y^{t+1}, \ b_2 \right\rangle + \left\langle x^{t+1}, \ Ay^{t} \right\rangle - \left\langle y^{t}, \ b_2 \right\rangle - 2\left\langle x^{t+1}, \ b_1 \right\rangle + 2\left\langle y^{*}, \ A^{\intercal}x^{*} \right\rangle
\end{align*}    
By subtracting the two equations above, we obtain  
\begin{align*}
    &\frac{\| x^{t+1} - x^{*} \|^{2} - \| x^{t} - x^{*} \|^{2}}{{\eta_1}} - \frac{\| y^{t+1} - y^{*}\|^{2} - \| y^{t} - y^{*} \|^{2}}{{\eta_2}} \\ 
    = 
    &\left\langle x^{t+1}, \ Ay^{t} \right\rangle - \left\langle x^{t+1}, \ b_1 \right\rangle + \left\langle x^{t}, \ Ay^{t} \right\rangle - \left\langle x^{t}, \ b_1 \right\rangle - 2\left\langle y^{t}, \ b_2 \right\rangle + 2\left\langle y^*, \ A^{\intercal}x^{*} \right\rangle \\ 
    &-\left\langle x^{t+1}, \ Ay^{t+1} \right\rangle + \left\langle y^{t+1}, \ b_2 \right\rangle - \left\langle x^{t+1}, \ Ay^{t} \right\rangle + \left\langle y^{t}, \ b_2 \right\rangle + 2\left\langle x^{t+1}, \ b_1 \right\rangle - 2\left\langle y^{*}, \ A^{\intercal}x^{*} \right\rangle \\
    = 
    &\left\langle x^{t}, \ Ay^{t}-b_1 \right\rangle - \left\langle x^{t+1}, \ Ay^{t+1}-b_1 \right\rangle - \left\langle y^{t}, \ b_2 \right\rangle + \left\langle y^{t+1}, \ b_2 \right\rangle
\end{align*}
Reorganizing with all $t+1$ terms on one side and $t$ terms on the other yields
\begin{align*}
     &\frac{\| x^{t+1}-x^* \|^{2}}{\eta_1} - \frac{\| y^{t+1}-y^* \|^{2}}{\eta_2} + \left\langle x^{t+1}, \ Ay^{t+1}-b_1 \right\rangle - \left\langle y^{t+1}, \ b_2 \right\rangle \\ 
     = 
     &\frac{\| x^{t}-x^* \|^{2}}{\eta_1} - \frac{\| y^{t}-y^* \|^{2}}{\eta_2} + \left\langle x^{t}, \ Ay^{t}-b_1 \right\rangle - \left\langle y^{t}, \ b_2 \right\rangle
\end{align*}
By Definition \ref{def:Coord_Energy}, this implies
\begin{align*}
     h_{+}^{t+1} = h_+^{t}
\end{align*}
Since it holds for all $t\in \mathbb{Z}_{\geq 0}$, it follows that $h_+^{t} = h_+^{0}$.
Hence, the energy function $h_+^{t}$ is time-invariant.
\end{proof}

\section{Proof of Lemma \ref{lem:coord_partial}}\label{sec:coord_partial}

\CoordPartial*

\begin{proof}
Expanding the right-hand side of the lemma yields
\begin{align*}
    \left\langle x^{t+1}+x^{t}-2x^{*}, \ A(y^{t}-y^{*}) \right\rangle 
    =
    &\left\langle x^{t+1}+x^{t}-2x^{*}, \ Ay^{t}-Ay^{*} \right\rangle \\
    =
    &\left\langle x^{t+1}+x^{t}-2x^{*}, \ Ay^{t}-b_1 \right\rangle
\end{align*}
since $Ay^{*} = b_1$ by \ref{eqn:NEconditions}.
Then, by \ref{eqn:AltGD} and using the fact that the payoff matrices satisfy $B=A^{\intercal}$ in a coordination game, we obtain
\begin{align*}
    \left\langle x^{t+1}+x^{t}-2x^{*}, \ A(y^{t}-y^{*}) \right\rangle
    =
    &\left\langle x^{t+1}+x^{t}-2x^{*}, \ Ay^{t}-b_1 \right\rangle \\
    = 
    &\left\langle x^{t+1}+x^{t}-2x^{*}, \ \frac{x^{t+1}-x^{t}}{{\eta_1}} \right\rangle \\
    = 
    &\frac{\left\langle x^{t+1}+x^{t}-2x^{*}, \ x^{t+1}-x^{t} \right\rangle}{{\eta_1}} \\
    = 
    &\frac{\| x^{t+1} \|^{2} -  \| x^{t} \|^{2} -2 \left\langle x^{*}, \ x^{t+1}-x^{t} \right\rangle}{{\eta_1}} \\
    = 
    &\frac{\| x^{t+1} \|^{2} - 2 \left\langle x^{*}, \ x^{t+1} \right\rangle + \| x^{*} \|^{2} - (\| x^{t} \|^{2} - 2 \left\langle x^{*}, \ x^{t} \right\rangle + \| x^{*} \|^{2})}{{\eta_1}} \\
    = 
    &\frac{\| x^{t+1} - x^{*}\|^{2} - \| x^{t} - x^{*} \|^{2}}{{\eta_1}}
\end{align*}
By following the same steps symmetrically for agent 2, we obtain
\begin{align*}
    \left\langle y^{t+1}+y^{t}-2y^{*}, \ A^{\intercal}(x^{t+1}-x^{*}) \right\rangle
    =
    \frac{\| y^{t+1} - y^{*}\|^{2} - \| y^{t} - y^{*} \|^{2}}{{\eta_2}}
\end{align*}
since $A^{\intercal}x^{*} = b_2$ and $A^{\intercal}x^{t+1}-b_2 = \frac{y^{t+1} - y^{t}}{{\eta_2}}$ in a coordination game, thereby yielding the statement of the lemma. 
\end{proof}

\section{Proof of Theorem \ref{thm:coord_linear}}\label{sec:coord_linear}

\CoordLinear*

\begin{proof}

By Definition \ref{def:Coord_Energy} and Theorem \ref{thm:coord_invariant}, the change in energy is expressed as
    \begin{align*}
        0= h_{+}^{t} - h_{+}^{t+1}
        = &\frac{\| x^{t} - \boldsymbol{x^{*}} \|^{2}}{{\eta_1}} - \frac{\| y^{t} - \boldsymbol{y^{*}} \|^{2}}{{\eta_2}} + \left\langle x^{t}, \ Ay^{t}-b_1 \right\rangle - \left\langle y^{t}, \ \boldsymbol{b_2} \right\rangle \\
        &- 
        \frac{\| x^{t+1} - \boldsymbol{x^{*}} \|^{2}}{{\eta_1}} + \frac{\| y^{t+1} - \boldsymbol{y^{*}} \|^{2}}{{\eta_2}} - \left\langle x^{t+1}, \ Ay^{t+1}-b_1 \right\rangle + \left\langle y^{t}, \ \boldsymbol{b_2} \right\rangle \\
        = & \frac{\lVert x^{t}-\boldsymbol{x^*} \rVert^{2}-\lVert x^{t+1}-\boldsymbol{x^*} \rVert^{2}}{\eta_1} - \frac{\lVert y^{t}-\boldsymbol{y^*} \rVert^{2}-\lVert y^{t+1}-\boldsymbol{y^*}\rVert^{2}}{\eta_2} \\
        &+ \left\langle x^{t}, \ Ay^{t}-b_1 \right\rangle - \left\langle y^{t}, \ \boldsymbol{b_2} \right\rangle - \left\langle x^{t+1}, \ Ay^{t+1}-b_1 \right\rangle + \left\langle y^{t}, \ \boldsymbol{b_2} \right\rangle
    \end{align*}

Note that the first two terms in the above equations are the change in the distance to NE of Lemma \ref{lem:coord_partial}.
By expanding these terms that have $x^*$ and $y^*$ as unknowns, we get
    \begin{align*}
        &\frac{\lVert x^{t}-\boldsymbol{x^*} \rVert^{2}-\lVert x^{t+1}-\boldsymbol{x^*} \rVert^{2}}{\eta_1} - \frac{\lVert y^{t}-\boldsymbol{y^*} \rVert^{2}-\lVert y^{t+1}-\boldsymbol{y^*}\rVert^{2}}{\eta_2}\\ 
        =
        & \frac{\| x^{t} \|^{2} - \| x^{t+1} \|^{2} -2\left\langle x^{t}-x^{t+1}, \boldsymbol{x^*} \right\rangle}{\eta_1} - \frac{\| y^{t} \|^{2} - \|y^{t+1}\|^{2} -2\left\langle y^{t}-y^{t+1}, \boldsymbol{y^*} \right\rangle}{\eta_2}
    \end{align*}
    Then, by plugging in the result into $0= h_{+}^{t} - h_{+}^{t+1}$, we have equations 
    \begin{align*}
        0= h_{+}^{t} - h_{+}^{t+1}
        = & 
        \frac{\| x^{t} \|^{2} - \| x^{t+1} \|^{2} -2\left\langle x^{t}-x^{t+1}, \boldsymbol{x^*} \right\rangle}{\eta_1} - \frac{\| y^{t} \|^{2} - \|y^{t+1}\|^{2} -2\left\langle y^{t}-y^{t+1}, \boldsymbol{y^*} \right\rangle}{\eta_2} \\
        & +
        \left\langle x^{t}, \ Ay^{t}-b_1 \right\rangle - \left\langle y^{t}, \ \boldsymbol{b_2} \right\rangle - \left\langle x^{t+1}, \ Ay^{t+1}-b_1 \right\rangle + \left\langle y^{t}, \ \boldsymbol{b_2} \right\rangle
    \end{align*}
    that are linear in $x^*$ and $y^*$. 
\end{proof}

\section{Proof of Theorem \ref{thm:coord_w/_b2}}\label{sec:coord_w/_b2}

\CoordWoB*

\begin{proof}
    By \ref{eqn:AltGD} and $B=A^{\intercal}$ in a coordination game, $\left\langle y^{t}, \ b_2 \right\rangle$ term in $h_+^{t}$ is transformed into
    \begin{align*}
        \left\langle y^{t}, \ \boldsymbol{b_2} \right\rangle 
        &= \left\langle y^{t}, \ A^{\intercal}\boldsymbol{x^{*}} \right\rangle \\
        &= \left\langle \boldsymbol{x^{*}}, \ Ay^{t} \right\rangle \\
        &= \left\langle \boldsymbol{x^{*}}, \ Ay^{t}-b_1 \right\rangle-\left\langle \boldsymbol{x^{*}}, \ \boldsymbol{b_1} \right\rangle
    \end{align*}
Following the above process identically, 
    \begin{align*}
        \left\langle y^{t+1}, \ \boldsymbol{b_2} \right\rangle 
        = 
        \left\langle \boldsymbol{x^{*}}, \ Ay^{t+1}-b_1 \right\rangle-\left\langle \boldsymbol{x^{*}}, \ \boldsymbol{b_1} \right\rangle
    \end{align*}
Theorem \ref{thm:coord_w/_b2} then follows by substituting these expressions for $\left\langle y^{t}, \ \boldsymbol{b_2} \right\rangle$ and $\left\langle y^{t+1}, \ \boldsymbol{b_2} \right\rangle$ in equations $h_+^{t}-h_+^{t+1} = 0$ from Theorem \ref{thm:coord_linear}.
\end{proof}

\section{Proof of Theorem \ref{thm:coord_less_yt_nonlinear}}\label{sec:coord_less_yt}

\CoordLessBnonlinear*

\begin{proof}
    Expanding the terms in $h_{+}^{t}-h_{+}^{t+1}=0$, which involve $y^{t}$ and $y^{t+1}$ in Theorem \ref{thm:coord_invariant}, results in the following.
    \begin{align*}
        -\frac{\| y^{t} - \boldsymbol{y^*} \|^{2}}{\eta_2} + \frac{\| y^{t+1} - \boldsymbol{y^*} \|^{2}}{\eta_2} 
        &= 
        -\frac{\| y^{t} \|^{2} - 2\left\langle y^{t}, \boldsymbol{y^*} \right\rangle + \|\boldsymbol{y^*} \|^{2}}{\eta_2} 
        + 
        \frac{\| y^{t+1} \|^{2} - 2\left\langle y^{t+1}, \boldsymbol{y^*} \right\rangle + \|\boldsymbol{y^*} \|^{2}}{\eta_2} \\
        &=\frac{-\| y^{t} \|^{2} + \| y^{t+1} \|^{2} + 2\left\langle y^{t}, \boldsymbol{y^*} \right\rangle - 2\left\langle y^{t+1}, \boldsymbol{y^*} \right\rangle}{\eta_2}
    \end{align*}

    Applying \ref{eqn:AltGD}, \ref{eqn:NEconditions}, and recalling $B=A^{\intercal}$ yields
    \begin{align*}
        \frac{2\left\langle y^{t}, \boldsymbol{y^*} \right\rangle}{\eta_2} - \frac{2\left\langle y^{t+1}, \boldsymbol{y^*} \right\rangle}{\eta_2} 
        &= \frac{2\left\langle y^{t}, \boldsymbol{y^*} \right\rangle}{\eta_2} - \frac{\left\langle 2y^{t} + 2\eta_2(A^{\intercal}x^{t+1}-b_2), \boldsymbol{y^*} \right\rangle}{\eta_2} \\
        &= \frac{2\left\langle y^{t}, \boldsymbol{y^*} \right\rangle}{\eta_2} - \frac{2\left\langle y^{t}, \boldsymbol{y^*} \right\rangle}{\eta_2} - 2\frac{\left\langle \eta_2(A^{\intercal}x^{t+1}-b_2), \boldsymbol{y^*} \right\rangle}{\eta_2} \\
        &= -2\left\langle A^{\intercal}x^{t+1}, \boldsymbol{y^*} \right\rangle +2\left\langle \boldsymbol{b_2}, \boldsymbol{y^*} \right\rangle \\
        &= -2\left\langle x^{t+1}, A\boldsymbol{y^*} \right\rangle +2\left\langle A^{\intercal}\boldsymbol{x^{*}}, \boldsymbol{y^*} \right\rangle \\
        &= -2\left\langle x^{t+1}, \boldsymbol{b_1} \right\rangle + 2\left\langle \boldsymbol{x^{*}}, A\boldsymbol{y^*} \right\rangle \\
        &= -2\left\langle x^{t+1}, \boldsymbol{b_1} \right\rangle + 2\left\langle \boldsymbol{x^{*}}, \boldsymbol{b_1} \right\rangle
    \end{align*}
    implying 
    \begin{align*}
        -\frac{\| y^{t}-\boldsymbol{y^*} \|^{2}}{\eta_2} + \frac{\| y^{t+1}-\boldsymbol{y^*} \|^{2}}{\eta_2} 
        &= 
        -\frac{\| y^{t} \|^{2} - \| y^{t+1} \|^{2}}{\eta_2}-2\left\langle x^{t+1}, \boldsymbol{b_1} \right\rangle +2\left\langle \boldsymbol{x^{*}}, \boldsymbol{b_1} \right\rangle
    \end{align*}

    Therefore, 
    \begin{align*}
        0 = &h_{+}^{t} - h_{+}^{t+1} \\
        = 
        &\frac{\| x^{t} \|^{2} - \| x^{t+1} \|^{2} -2\left\langle x^{t}-x^{t+1}, \boldsymbol{x^{*}} \right\rangle}{\eta_1} - \frac{\| y^{t} \|^{2} - \| y^{t+1} \|^{2}}{\eta_2}
        -
        2\left\langle x^{t+1}, \boldsymbol{b_1} \right\rangle 
        +
        2\left\langle \boldsymbol{x^{*}}, \boldsymbol{b_1} \right\rangle \\
        &+ 
        \left\langle x^{t}-\boldsymbol{x^{*}}, \ Ay^{t}-b_1 \right\rangle - \left\langle x^{t+1}-\boldsymbol{x^{*}}, \ Ay^{t+1}-b_1 \right\rangle.
    \end{align*}
\end{proof}

\section{Proof of Theorem \ref{thm:coord_no_y}}\label{sec:coord_no_y}

\CoordNoY*

\begin{proof}
    By \ref{eqn:AltGD}, \ref{eqn:NEconditions}, and $B = A^{\intercal}$,
    
    \begin{align*}
            y^{t+1} &= y^{t} +\eta_2(A^{\intercal}x^{t+1}-b_2) \\
            x^{t+1} &= x^{t} +\eta_1(Ay^{t}-b_1) \\
            A^{\intercal}\boldsymbol{x^*} &= \boldsymbol{b_2} \\
            A\boldsymbol{y^*} &= \boldsymbol{b_1}
    \end{align*}

    Thus,
    
    \begin{align*}
        -\frac{\| y^{t} \|^{2} - \| y^{t+1} \|^{2}}{\eta_2} = &\frac{\left\langle y^{t+1}-y^{t}, y^{t+1}+y^{t} \right\rangle}{\eta_2} \\
        = &\frac{\left\langle \eta_2(A^{\intercal}x^{t+1}-b_2), 2y^{t}+\eta_2(A^{\intercal}x^{t+1}-b_2) \right\rangle}{\eta_2} \\
        = &\left\langle(A^{\intercal}x^{t+1}-b_2), 2y^{t}\right\rangle + \eta_2\left\langle A^{\intercal}x^{t+1}-b_2, A^{\intercal}x^{t+1}-b_2\right\rangle \\
        = &\left\langle(A^{\intercal}x^{t+1}-b_2), 2y^{t}\right\rangle + \eta_2\|A^{\intercal}x^{t+1}-b_2\|^{2} \\
        = &2\left\langle A^{\intercal}x^{t+1}, y^{t}\right\rangle - 2\left\langle \boldsymbol{b_2}, y^{t}\right\rangle + \eta_2\|A^{\intercal}x^{t+1}-b_2\|^{2} \\
        = &2\left\langle x^{t+1}, Ay^{t}\right\rangle - 2\left\langle A^{\intercal}\boldsymbol{x^*}, y^{t}\right\rangle + \eta_2\|A^{\intercal}x^{t+1}-b_2\|^{2} \\
        = &2\left\langle x^{t+1}, Ay^{t}\right\rangle - 2\left\langle \boldsymbol{x^*}, Ay^{t}\right\rangle + \eta_2\|A^{\intercal}x^{t+1}-b_2\|^{2} \\
        = &2\left\langle x^{t+1}-\boldsymbol{x^*}, Ay^{t}\right\rangle + \eta_2\|A^{\intercal}x^{t+1}-b_2\|^{2} \\
        = &2\left\langle x^{t+1}-\boldsymbol{x^*}, \frac{x^{t+1}-x^{t}}{\eta_1}+\boldsymbol{b_1} \right\rangle + \eta_2\|A^{\intercal}x^{t+1}-b_2\|^{2}.
    \end{align*}
    
    Theorem \ref{thm:coord_no_y} then follows by substituting the above expression for $-\frac{\| y^{t} \|^{2} - \| y^{t+1} \|^{2}}{\eta_2}$ in Theorem \ref{thm:coord_less_yt_nonlinear}.
\end{proof}

\section{Additional Tables for Base Experiments}\label{app:tables}

\noindent\textbf{Zero-sum Games with Small Learning Rates with Standard Methods}

\begin{table}[H]
    \centering
    \resizebox{1\textwidth}{!}{%
    \begin{tabular}{c|cc|c|c|c}
        \toprule
        \textbf{Model}: & \multicolumn{2}{c|}{\textbf{Section \ref{sec:ModelForSimulation}}} & \textbf{Section} \ref{sec:EconomicModel} & \textbf{Section} \ref{sec:FlModel} & \\
        \hline 
        \textbf{Dimension} & \textbf{Avg Rel Err of $\boldsymbol{x^{*}}$} & \textbf{Avg Rel Err in $\boldsymbol{y^*}$} & \textbf{Avg Rel Err in $\boldsymbol{x^{*}}$} & \textbf{Avg Rel Err in $\boldsymbol{x^{*}}$} & $\boldsymbol{\textbf{det} (A_{system_1}})$ \\
        \hline
        \midrule
        1  & 8.97E-14  & 2.81E-13  & 1.29E-11  & 1.20E-13  & 7.32E-1\\
        2  & 2.30E-09  & 1.81E-09  & 4.41E-06  & 1.70E-12  & 1.27E-2\\
        3  & 1.14E-04  & 6.89E-05  & 2.00E-03  & 2.79E-08  & 2.10E-06\\
        4  & 1.85E-01  & 1.55E-01  & 5.42E+00  & 1.61E-05  & 3.06E-15\\
        5  & 1.43E+01  & 2.03E+00  & 2.69E+01  & 7.19E-05  & 3.61E-31\\
        6  & 1.48E+01  & 1.01E+02  & 4.80E+01  & 1.28E+03  & 1.50E-47\\
        7  & 2.12E+01  & 2.48E+01  & 6.36E+01  & 6.32E+00  & 7.87E-63\\
        8  & 4.71E+02  & 1.26E+03  & 1.41E+03  & 3.89E+02  & 8.90E-90\\
        9  & 6.40E+01  & 5.79E+01  & 1.45E+02  & 3.94E+01  & 2.50E-11\\
        10 & 4.40E+01  & 2.88E+01  & 2.15E+02  & 5.23E+01  & 2.20E-14\\
        11 & 7.61E+01  & 6.71E+01  & 4.91E+02  & 7.76E+01  & 3.40E-16\\
        12 & 1.15E+03  & 2.52E+02  & 4.97E+02  & 5.55E+03  & 5.50E-19\\
        13 & 9.60E+01  & 8.79E+01  & 2.06E+02  & 1.16E+04  & 6.90E-21\\
        14 & 1.61E+02  & 1.58E+02  & 4.17E+02  & 4.38E+02  & 2.50E-24\\
        15 & 1.07E+02  & 2.53E+02  & 4.65E+02  & 5.68E+02  & 3.40E-26\\
        16 & 1.85E+02  & 1.70E+02  & 2.98E+02  & 1.51E+03  & 1.60E-39\\
        17 & 1.99E+03  & 1.25E+02  & 1.58E+03  & 5.08E+02  & 1.70E-30\\
        18 & 1.14E+02  & 8.36E+01  & 3.09E+02  & 3.00E+03  & $\approx$ 0.0\\
        19 & 2.98E+03  & 7.91E+02  & 7.99E+02  & 6.94E+02  & $\approx$ 0.0\\
        20 & 1.43E+02  & 1.20E+02  & 3.05E+02  & 6.02E+02  & $\approx$ 0.0\\
        \bottomrule
    \end{tabular}
    }
    \caption{Zero-sum game with small learning rates: average relative error of NE solutions and determinant of linear system matrices in each dimension derived from solving linear systems.}
    % \label{tab:zero_avg_rel_errors_rt}
\end{table}

\noindent\textbf{Zero-sum Games with Large Learning Rates with Standard Methods}

\begin{table}[H]
    \centering
    \resizebox{1\textwidth}{!}{%
    \begin{tabular}{c|cc|c|c|c|c}
        \toprule
        \textbf{Model}: & \multicolumn{2}{c|}{\textbf{Section \ref{sec:ModelForSimulation}}} & \textbf{Section} \ref{sec:EconomicModel} & \textbf{Section} \ref{sec:FlModel} & \\
        \hline 
        \textbf{Dimension} & \textbf{Avg Rel Err of $\boldsymbol{x^{*}}$} & \textbf{Avg Rel Err in $\boldsymbol{y^*}$} & \textbf{Avg Rel Err in $\boldsymbol{x^{*}}$} & \textbf{Avg Rel Err in $\boldsymbol{x^{*}}$} & $\boldsymbol{\textbf{det} (A_{system_1}})$ & $\boldsymbol{\textbf{cond} (A_{system_1}})$ \\
        \hline
        \midrule
        1  & 3.63E$-$11  & 7.61E$-$11  & 4.34E$-$08   & 2.55E$-$15   & 7.92E+02    & 2.72E+04 \\
        2  & 7.79E$-$04  & 1.88E$-$03  & 2.14E+01     & 5.56E$-$12   & 2.62E+09    & 4.64E+09 \\
        3  & 8.99E+08    & 7.43E+09    & 1.83E+13     & 4.97E$-$08   & 1.78E+20    & 1.67E+18 \\
        4  & 7.82E+16    & 4.00E+17    & 7.01E+18     & 1.66E$-$03   & 3.37E+36    & 1.26E+23 \\
        5  & 2.69E+26    & 2.00E+27    & 4.46E+28     & 8.76E+00     & 1.11E+78    & 5.21E+27 \\
        6  & 3.71E+28    & 2.76E+29    & 1.10E+32     & 1.68E+04     & 1.32E+112   & 2.03E+31 \\
        7  & 9.36E+35    & 4.57E+36    & 2.14E+38     & 2.58E+09     & 4.73E+164   & 1.14E+35 \\
        8  & 3.47E+43    & 1.85E+44    & 1.15E+46     & 4.80E+12     & 6.39E+244   & 2.46E+39 \\
        9  & 8.47E+47    & 3.71E+48    & 8.78E+53     & 9.15E+17     & $\infty$    & 9.91E+40 \\
        10 & 5.43E+54    & 8.29E+55    & 8.46E+55     & 2.42E+19     & $\infty$    & 5.05E+44 \\
        11 & 1.58E+62    & 1.56E+62    & 6.59E+65     & 1.02E+29     & $\infty$    & 5.84E+47 \\
        12 & 4.12E+72    & 2.54E+73    & 8.52E+73     & 1.37E+31     & $\infty$    & 1.90E+51 \\
        13 & 9.25E+77    & 7.34E+78    & 6.54E+80     & 2.32E+35     & $\infty$    & 1.09E+51 \\
        14 & 1.85E+81    & 1.87E+82    & 2.99E+84     & 1.76E+37     & $\infty$    & 7.51E+53 \\
        15 & 2.24E+94    & 5.45E+95    & 2.00E+96     & 2.36E+43     & $\infty$    & 2.93E+58 \\
        16 & 5.83E+99    & 4.80E+100   & 2.07E+102    & 1.53E+46     & $\infty$    & 9.63E+58 \\
        17 & 1.05E+109   & 1.90E+110   & 2.94E+108    & 2.03E+50     & $\infty$    & 7.80E+62 \\
        18 & 3.67E+111   & 5.78E+112   & 1.71E+114    & 1.97E+54     & $\infty$    & 1.23E+64 \\
        19 & 5.82E+118   & 8.20E+119   & 2.98E+122    & 7.09E+56     & $\infty$    & 2.24E+65 \\
        20 & 1.62E+128   & 3.01E+129   & 4.63E+132    & 1.29E+62     & $\infty$    & 1.68E+67 \\
        \bottomrule
    \end{tabular}
    }
    \caption{Zero-sum game with large learning rates: average relative error of NE solutions with determinant and condition number of linear system matrices in each dimension derived from solving linear systems.}
    % \label{tab:zero_avg_rel_errors_rt_violation}
\end{table}

\noindent\textbf{Coordination Games with  Standard Methods}
\begin{table}[H]
    \centering
    \resizebox{1\textwidth}{!}{%
    \begin{tabular}{c|cc|c|c|c}
        \toprule
        \textbf{Model:} & \multicolumn{2}{c}{\textbf{Section \ref{sec:ModelForSimulation}}} & \textbf{Section} \ref{sec:EconomicModel} & \textbf{Section} \ref{sec:FlModel} & \\
        \hline
        \textbf{Dimension} & \textbf{Avg Rel Err of $\boldsymbol{x^{*}}$} & \textbf{Avg Rel Err in $\boldsymbol{y^*}$} & \textbf{Avg Rel Err in $\boldsymbol{x^{*}}$} & \textbf{Avg Rel Err in $\boldsymbol{x^{*}}$} & $\boldsymbol{\textbf{det} (A_{system_1}})$ \\
        \hline
        \midrule
        1  & 3.30E-14 & 8.93E-14 & 3.14E-11 & 2.39E-14 & 7.61E-01 \\
        2  & 6.51E-11 & 5.24E-11 & 8.56E-09 & 7.67E-13 & 5.17E-03 \\
        3  & 1.81E-02 & 1.20E-02 & 6.42E+00 & 1.35E-08 & 1.83E-07 \\
        4  & 2.00E-04 & 2.19E-04 & 3.87E-01 & 4.40E-09 & 2.96E-18 \\
        5  & 6.10E+00 & 2.17E+00 & 1.12E+01 & 2.73E-03 & 1.19E-30 \\
        6  & 2.92E+01 & 3.90E+01 & 5.42E+01 & 2.65E-03 & 3.94E-49 \\
        7  & 5.18E+01 & 2.88E+01 & 9.62E+01 & 5.02E+00 & 5.78E-72 \\
        8  & 2.15E+02 & 1.76E+02 & 1.49E+02 & 2.88E+02 & 2.89E-89 \\
        9  & 2.48E+02 & 1.86E+03 & 1.86E+03 & 3.64E+02 & 1.30E-106 \\
        10 & 2.68E+02 & 7.15E+02 & 6.72E+02 & 2.53E+02 & 5.90E-131 \\
        11 & 1.17E+03 & 1.15E+03 & 7.67E+02 & 9.72E+02 & 2.50E-151 \\
        12 & 1.79E+02 & 1.99E+02 & 8.65E+02 & 1.91E+02 & 2.20E-170 \\
        13 & 2.76E+02 & 2.86E+02 & 4.61E+02 & 3.18E+03 & 8.50E-196 \\
        14 & 2.97E+02 & 2.35E+02 & 5.53E+02 & 1.61E+03 & 5.30E-222 \\
        15 & 1.49E+03 & 1.30E+03 & 2.90E+03 & 4.99E+02 & 1.40E-249 \\
        16 & 7.76E+02 & 7.40E+02 & 2.56E+03 & 9.18E+02 & 2.20E-262 \\
        17 & 1.32E+03 & 1.90E+03 & 1.35E+03 & 5.05E+03 & 1.40E-284 \\
        18 & 1.09E+03 & 6.72E+02 & 1.92E+03 & 8.80E+02 & 1.50E-299 \\
        19 & 1.14E+03 & 8.63E+02 & 1.06E+04 & 1.07E+03 & $\approx$ 0 \\
        20 & 9.35E+03 & 5.77E+03 & 3.59E+03 & 1.51E+03 & $\approx$ 0 \\
        \bottomrule
    \end{tabular}
    }
    \caption{Coordination game: average relative error of NE solutions and determinant of linear system matrices in each dimension derived from solving linear systems.}
    \label{tab:coord_avg_rel_errors_rt}
\end{table}

\newpage

\noindent\textbf{Zero-sum with Small Learning Rates with Least Squares}

\begin{table}[H]
    \centering
    % \resizebox{1\textwidth}{!}{%
    \begin{tabular}{c|cc|c|c}
        \toprule
        \textbf{Model:} & \multicolumn{2}{c|}{\textbf{Section \ref{sec:ModelForSimulation}}} & \textbf{Section} \ref{sec:EconomicModel} & \textbf{Section} \ref{sec:FlModel} \\
        \hline
        \textbf{Dimension} & \textbf{Avg Rel Err of $\boldsymbol{x^{*}}$} & \textbf{Avg Rel Err of $\boldsymbol{y^*}$} & \textbf{Avg Rel Err of $\boldsymbol{x^{*}}$} & \textbf{Avg Rel Err of $\boldsymbol{x^{*}}$} \\
        \hline
        \midrule
        1  & 8.85E-14 & 2.79E-13 & 1.29E-11 & 1.20E-13 \\
        2  & 2.30E-09 & 1.81E-09 & 4.41E-06 & 1.71E-12 \\
        3  & 1.10E-04 & 6.91E-05 & 2.00E-03 & 2.79E-08 \\
        4  & 1.88E-02 & 2.09E-02 & 1.37E-01 & 1.61E-05 \\
        5  & 3.06E-01 & 2.59E-01 & 2.79E+00 & 7.18E-05 \\
        6  & 6.55E-01 & 6.34E-01 & 1.11E+00 & 1.27E-01 \\
        7  & 6.12E-01 & 6.33E-01 & 1.34E+00 & 2.68E-01 \\
        8  & 1.08E+00 & 9.46E-01 & 3.26E+00 & 1.28E+01 \\
        9  & 8.17E-01 & 8.40E-01 & 4.73E+00 & 7.43E+00 \\
        10 & 8.56E-01 & 8.69E-01 & 1.87E+00 & 8.83E-01 \\
        11 & 8.68E-01 & 8.50E-01 & 2.56E+00 & 1.22E+00 \\
        12 & 8.86E-01 & 8.38E-01 & 1.38E+01 & 3.12E+01 \\
        13 & 8.67E-01 & 8.91E-01 & 3.21E+00 & 1.19E+01 \\
        14 & 8.17E-01 & 8.09E-01 & 2.06E+00 & 3.30E+00 \\
        15 & 8.96E-01 & 9.05E-01 & 3.25E+00 & 4.30E+00 \\
        16 & 9.31E-01 & 8.62E-01 & 2.20E+01 & 1.96E+01 \\
        17 & 8.94E-01 & 9.26E-01 & 1.88E+01 & 1.80E+01 \\
        18 & 9.35E-01 & 9.33E-01 & 1.49E+01 & 9.08E+00 \\
        19 & 9.57E-01 & 9.97E-01 & 5.28E+01 & 1.30E+01 \\
        20 & 9.20E-01 & 9.09E-01 & 2.62E+00 & 2.79E+00 \\
        \bottomrule
    \end{tabular}
    % }
    \caption{Zero-sum game with small learning rate: average relative error of NE solutions via the least squares method in each dimension.}
    % \label{tab:zero_avg_rel_errors_ls_non_violation}
\end{table}

\noindent\textbf{Zero-sum with Large Learning Rates with Least Squares}

\begin{table}[H]
    \centering
    % \resizebox{1\textwidth}{!}{%
    \begin{tabular}{c|cc|c|c}
        \toprule
        \textbf{Model: }& \multicolumn{2}{c|}{\textbf{Section \ref{sec:ModelForSimulation}}} & \textbf{Section} \ref{sec:EconomicModel} & \textbf{Section} \ref{sec:FlModel} \\
        \hline
        \textbf{Dimension} & \textbf{Avg Rel Err of $\boldsymbol{x^{*}}$} & \textbf{Avg Rel Err of $\boldsymbol{y^*}$} & \textbf{Avg Rel Err of $\boldsymbol{x^{*}}$} & \textbf{Avg Rel Err of $\boldsymbol{x^{*}}$} \\
        \hline
        \midrule
        1  & 3.78E-11 & 7.63E-11 & 4.34E-08 & 2.57E-15 \\
        2  & 7.79E-04 & 1.88E-03 & 2.14E+01 & 5.59E-12 \\
        3  & 2.75E+03 & 6.15E+03 & 6.89E+07 & 4.95E-08 \\
        4  & 1.37E+07 & 6.45E+06 & 6.97E+09 & 1.66E-03 \\
        5  & 7.80E+14 & 1.67E+14 & 1.43E+18 & 4.90E+00 \\
        6  & 2.26E+20 & 5.25E+19 & 4.31E+23 & 8.66E+02 \\
        7  & 3.39E+26 & 7.11E+25 & 3.82E+29 & 4.44E+05 \\
        8  & 5.27E+33 & 3.62E+32 & 1.15E+37 & 7.13E+08 \\
        9  & 1.79E+38 & 1.25E+37 & 3.70E+45 & 2.87E+13 \\
        10 & 1.55E+42 & 3.03E+41 & 8.97E+44 & 3.13E+13 \\
        11 & 9.92E+52 & 2.79E+51 & 3.00E+56 & 2.47E+19 \\
        12 & 1.47E+63 & 9.28E+61 & 3.90E+65 & 1.58E+22 \\
        13 & 3.40E+68 & 1.91E+67 & 5.68E+71 & 3.45E+24 \\
        14 & 4.86E+71 & 4.53E+70 & 1.02E+75 & 1.53E+28 \\
        15 & 1.03E+83 & 1.46E+82 & 1.03E+86 & 1.48E+32 \\
        16 & 7.30E+87 & 6.68E+86 & 1.75E+91 & 1.44E+35 \\
        17 & 8.24E+96 & 7.39E+95 & 1.11E+97 & 1.12E+40 \\
        18 & 1.30E+101 & 1.38E+100 & 4.71E+101 & 6.85E+42 \\
        19 & 1.80E+109 & 2.14E+108 & 7.51E+112 & 3.81E+45 \\
        20 & 3.34E+117 & 2.70E+116 & 1.07E+121 & 1.37E+52 \\
        \bottomrule
    \end{tabular}
    \caption{Zero-sum game with large learning rate: average relative error of NE solutions via the least squares method in each dimension.}
    \label{tab:zero_large_avg_rel_errors_ls}
\end{table}

\newpage
\noindent\textbf{Coordination with Least Squares}

\begin{table}[H]
    \centering
    % \resizebox{1\textwidth}{!}{%
    \begin{tabular}{c|cc|c|c}
        \toprule
        \textbf{Model: }& \multicolumn{2}{c|}{\textbf{Section \ref{sec:ModelForSimulation}}} & \textbf{Section} \ref{sec:EconomicModel} & \textbf{Section} \ref{sec:FlModel} \\
        \hline
        \textbf{Dimension} & \textbf{Avg Rel Err of $\boldsymbol{x^{*}}$} & \textbf{Avg Rel Err of $\boldsymbol{y^*}$} & \textbf{Avg Rel Err of $\boldsymbol{x^{*}}$} & \textbf{Avg Rel Err of $\boldsymbol{x^{*}}$} \\
        \hline
        \midrule        
        1  & 2.97E-14 & 8.82E-14 & 3.14E-11 & 2.39E-14 \\
        2  & 6.49E-11 & 5.24E-11 & 8.56E-09 & 7.74E-13 \\
        3  & 1.81E-02 & 1.20E-02 & 6.42E+00 & 1.35E-08 \\
        4  & 2.26E-04 & 2.44E-04 & 3.68E-02 & 4.41E-09 \\
        5  & 3.23E-01 & 2.90E-01 & 1.54E+00 & 2.73E-03 \\
        6  & 7.74E-01 & 1.03E+00 & 1.30E+00 & 2.65E-03 \\
        7  & 7.00E-01 & 7.83E-01 & 1.13E+00 & 5.25E-02 \\
        8  & 8.42E-01 & 9.28E-01 & 1.26E+00 & 3.88E-01 \\
        9  & 9.92E-01 & 8.68E-01 & 5.08E+00 & 2.08E+01 \\
        10 & 1.14E+00 & 8.90E-01 & 1.80E+00 & 1.26E+01 \\
        11 & 1.18E+00 & 9.62E-01 & 2.93E+00 & 5.99E+00 \\
        12 & 9.63E-01 & 9.47E-01 & 5.36E+00 & 1.29E+01 \\
        13 & 1.02E+00 & 9.41E-01 & 2.70E+00 & 8.99E+00 \\
        14 & 9.74E-01 & 9.44E-01 & 3.11E+00 & 3.95E+00 \\
        15 & 9.75E-01 & 9.91E-01 & 3.47E+00 & 4.93E+00 \\
        16 & 1.01E+00 & 9.51E-01 & 4.58E+00 & 9.18E-01 \\
        17 & 9.86E-01 & 9.70E-01 & 2.37E+00 & 1.13E+00 \\
        18 & 1.03E+00 & 1.02E+00 & 1.40E+01 & 2.89E+01 \\
        19 & 9.80E-01 & 9.74E-01 & 4.88E+00 & 3.15E+00 \\
        20 & 9.87E-01 & 9.80E-01 & 2.23E+00 & 1.11E+00 \\
        \bottomrule
    \end{tabular}
    \caption{Coordination game: average relative error of NE solutions via the least squares method in each dimension.}
    \label{tab:coord_avg_rel_errors_ls}
\end{table}

\noindent\textbf{Zero-sum Games with Small Learning Rates with Tikhonov Regularization}

\begin{table}[H]
    \centering
    % \resizebox{1\textwidth}{!}{%
    \begin{tabular}{c|cc|c|c|c}
        \toprule
        \textbf{Model:} & \multicolumn{2}{c}{\textbf{Section \ref{sec:ModelForSimulation}}} & \textbf{Section} \ref{sec:EconomicModel} & \textbf{Section} \ref{sec:FlModel} & \\
        \hline
        \textbf{Dimension} & \textbf{Avg Rel Err of $\boldsymbol{x^{*}}$} & \textbf{Avg Rel Err in $\boldsymbol{y^*}$} & \textbf{Avg Rel Err in $\boldsymbol{x^{*}}$} & \textbf{Avg Rel Err in $\boldsymbol{x^{*}}$} & $\boldsymbol{\textbf{det} (A_{system_1})}$ \\
        \hline
        \midrule
        1  & 5.07E-05 & 1.46E-04 & 2.00E-03 & 4.81E-04 & 7.32E-01 \\
        2  & 2.38E-01 & 2.21E-01 & 3.92E-01 & 1.37E-01 & 1.28E-02 \\
        3  & 8.09E-01 & 7.58E-01 & 7.38E-01 & 5.52E-01 & 2.10E-07 \\
        4  & 7.38E-01 & 7.48E-01 & 8.00E-01 & 7.16E-01 & 3.06E-15 \\
        5  & 8.61E-01 & 8.40E-01 & 8.79E-01 & 8.43E-01 & 3.61E-31 \\
        6  & 8.77E-01 & 9.10E-01 & 9.00E-01 & 8.58E-01 & 1.50E-47 \\
        7  & 8.47E-01 & 8.35E-01 & 8.83E-01 & 8.65E-01 & 7.87E-63 \\
        8  & 9.41E-01 & 9.27E-01 & 9.34E-01 & 9.17E-01 & 8.90E-90 \\
        9  & 9.43E-01 & 9.36E-01 & 9.65E-01 & 9.59E-01 & 2.48E-106 \\
        10 & 9.45E-01 & 9.69E-01 & 9.50E-01 & 9.60E-01 & 2.15E-136 \\
        11 & 9.73E-01 & 9.40E-01 & 9.61E-01 & 9.66E-01 & 3.39E-156 \\
        12 & 9.48E-01 & 9.53E-01 & 9.56E-01 & 9.59E-01 & 5.51E-185 \\
        13 & 9.43E-01 & 9.67E-01 & 9.71E-01 & 9.73E-01 & 6.94E-209 \\
        14 & 9.43E-01 & 9.26E-01 & 9.56E-01 & 9.55E-01 & 2.46E-235 \\
        15 & 9.64E-01 & 9.69E-01 & 9.71E-01 & 9.78E-01 & 3.39E-259 \\
        16 & 9.32E-01 & 9.56E-01 & 9.43E-01 & 9.57E-01 & 1.62E-288 \\
        17 & 9.46E-01 & 9.66E-01 & 9.63E-01 & 9.66E-01 & 1.71E-302 \\
        18 & 9.73E-01 & 9.60E-01 & 9.86E-01 & 9.84E-01 & $\approx$ 0 \\
        19 & 9.68E-01 & 9.63E-01 & 9.80E-01 & 9.81E-01 & $\approx$ 0 \\
        20 & 9.75E-01 & 9.56E-01 & 9.69E-01 & 9.76E-01 & $\approx$ 0 \\
        \bottomrule
    \end{tabular}
    \caption{Zero-sum game with small learning rates: average relative error of NE solutions and determinant of linear system matrices in each dimension derived via Tikhonov regularization method.}
    \label{tab:zero_avg_rel_errors_non_violation_ridge}
\end{table}

\newpage

\noindent\textbf{Zero-sum Games with Large Learning Rates with Tikhonov Regularization}

\begin{table}[!ht]
    \centering
    \resizebox{1\textwidth}{!}{%
    \begin{tabular}{c|cc|c|c|c|c}
        \toprule
        \textbf{Model}: & \multicolumn{2}{c|}{\textbf{Section \ref{sec:ModelForSimulation}}} & \textbf{Section} \ref{sec:EconomicModel} & \textbf{Section} \ref{sec:FlModel} & \\
        \hline 
        \textbf{Dimension} & \textbf{Avg Rel Err of $\boldsymbol{x^{*}}$} & \textbf{Avg Rel Err in $\boldsymbol{y^*}$} & \textbf{Avg Rel Err in $\boldsymbol{x^{*}}$} & \textbf{Avg Rel Err in $\boldsymbol{x^{*}}$} & $\boldsymbol{\textbf{det} (A_{system_1}})$ & $\boldsymbol{\textbf{cond} (A_{system_1}})$ \\
        \hline
        \midrule
        1  & 1.38E$-$02 & 1.52E$-$02 & 8.85E$-$02 & 6.60E$-$06 & 7.92E+02  & 2.72E+04 \\
        2  & 1.77E$-$01 & 4.48E$-$01 & 5.20E+00   & 1.18E$-$03 & 2.62E+09  & 4.64E+09 \\
        3  & 6.75E+00   & 2.09E+01   & 1.12E+03   & 2.91E$-$02 & 1.78E+20  & 1.67E+18 \\
        4  & 9.27E+04   & 9.61E+05   & 4.78E+06   & 8.86E$-$01 & 3.37E+36  & 1.26E+23 \\
        5  & 8.04E+22   & 2.46E+25   & 8.01E+16   & 5.90E+01   & 1.11E+78  & 5.21E+27 \\
        6  & 1.46E+16   & 4.11E+17   & 2.01E+19   & 2.14E+02   & 1.32E+112 & 2.03E+31 \\
        7  & 9.34E+33   & 2.13E+34   & 3.11E+25   & 6.30E+03   & 4.73E+164 & 1.14E+35 \\
        8  & 2.76E+30   & 7.70E+31   & 4.58E+34   & 5.61E+07   & 6.39E+244 & 2.46E+39 \\
        9  & 1.18E+35   & 8.28E+34   & 6.02E+43   & 6.34E+11   & $\infty$  & 9.91E+40 \\
        10 & 1.05E+41   & 3.26E+41   & 5.62E+44   & 2.81E+12   & $\infty$  & 5.05E+44 \\
        11 & 5.05E+49   & 1.45E+50   & 8.94E+54   & 5.14E+17   & $\infty$  & 5.84E+47 \\
        12 & 1.95E+59   & 3.11E+59   & 6.69E+62   & 1.25E+17   & $\infty$  & 1.90E+51 \\
        13 & 1.72E+65   & 8.01E+66   & 4.63E+68   & 1.94E+23   & $\infty$  & 1.09E+51 \\
        14 & 1.14E+70   & 1.28E+71   & 3.80E+73   & 5.42E+26   & $\infty$  & 7.51E+53 \\
        15 & 2.60E+83   & 8.97E+84   & 1.07E+86   & 2.18E+31   & $\infty$  & 2.93E+58 \\
        16 & 1.69E+87   & 2.77E+88   & 2.01E+90   & 2.58E+48   & $\infty$  & 9.63E+58 \\
        17 & 1.25E+91   & 9.14E+91   & $\infty$   & 3.03E+37   & $\infty$  & 7.80E+62 \\
        18 & $\infty$   & 1.19E+97   & $\infty$   & 3.38E+43   & $\infty$  & 1.23E+64 \\
        19 & 2.36E+92   & 1.25E+93   & 1.57E+92   & 1.97E+45   & $\infty$  & 2.24E+65 \\
        20 & 4.64E+90   & 2.50E+91   & 8.51E+90   & 1.35E+50   & $\infty$  & 1.68E+67 \\
        \bottomrule
    \end{tabular}}
    \caption{Zero-sum game with large learning rates: average relative error of NE solutions with determinant and condition number of linear system matrices in each dimension via Tikhonov regularization method.}
    \label{tab:zero_avg_rel_errors_tikhonov_violation}
\end{table}

\noindent\textbf{Coordination Games with Small Learning Rates with Tikhonov Regularization}

\begin{table}[H]
    \centering
    \resizebox{1\textwidth}{!}{%
    \begin{tabular}{c|cc|c|c|c}
        \toprule
        \textbf{Model:} & \multicolumn{2}{c}{\textbf{Section \ref{sec:ModelForSimulation}}} & \textbf{Section} \ref{sec:EconomicModel} & \textbf{Section} \ref{sec:FlModel} & \\
        \hline
        \textbf{Dimension} & \textbf{Avg Rel Err of $\boldsymbol{x^{*}}$} & \textbf{Avg Rel Err of $\boldsymbol{y^*}$} & \textbf{Avg Rel Err of $\boldsymbol{x^{*}}$} & \textbf{Avg Rel Err of $\boldsymbol{x^{*}}$} & \textbf{det($\boldsymbol{A_{system_1}}$)} \\
        \hline
        \midrule
        1  & 1.22E-02 & 7.35E-03 & 1.30E-01 & 3.81E-03 & 7.61E-01 \\
        2  & 4.64E-01 & 4.52E-01 & 5.88E-01 & 1.40E-01 & 5.17E-03 \\
        3  & 8.12E-01 & 8.16E-01 & 8.53E-01 & 5.66E-01 & 1.83E-07 \\
        4  & 8.14E-01 & 8.08E-01 & 7.60E-01 & 6.25E-01 & 2.96E-18 \\
        5  & 8.98E-01 & 9.36E-01 & 8.31E-01 & 7.03E-01 & 1.19E-30 \\
        6  & 9.03E-01 & 1.02E+00 & 9.38E-01 & 8.33E-01 & 3.94E-49 \\
        7  & 9.34E-01 & 9.72E-01 & 9.55E-01 & 8.81E-01 & 5.78E-72 \\
        8  & 9.67E-01 & 1.00E+00 & 9.65E-01 & 9.09E-01 & 2.89E-89 \\
        9  & 9.86E-01 & 9.48E-01 & 1.01E+00 & 8.91E-01 & 1.32E-106 \\
        10 & 9.88E-01 & 9.83E-01 & 9.74E-01 & 9.37E-01 & 5.93E-131 \\
        11 & 9.88E-01 & 9.83E-01 & 9.91E-01 & 9.55E-01 & 2.51E-151 \\
        12 & 9.79E-01 & 9.80E-01 & 9.95E-01 & 9.76E-01 & 2.18E-170 \\
        13 & 9.85E-01 & 9.73E-01 & 9.92E-01 & 9.61E-01 & 8.53E-196 \\
        14 & 9.92E-01 & 9.75E-01 & 9.92E-01 & 9.72E-01 & 5.32E-222 \\
        15 & 9.91E-01 & 1.00E+00 & 9.96E-01 & 9.65E-01 & 1.42E-249 \\
        16 & 9.99E-01 & 9.82E-01 & 1.01E+00 & 9.86E-01 & 2.16E-262 \\
        17 & 9.91E-01 & 9.90E-01 & 9.93E-01 & 9.73E-01 & 1.41E-284 \\
        18 & 1.03E+00 & 9.69E-01 & 1.01E+00 & 9.87E-01 & 1.46E-299 \\
        19 & 9.87E-01 & 9.83E-01 & 9.88E-01 & 9.80E-01 & $\approx$ 0 \\
        20 & 9.91E-01 & 9.89E-01 & 9.95E-01 & 9.82E-01 & $\approx$ 0 \\
        \bottomrule
    \end{tabular}
    }
    \caption{Coordination game: average relative error of NE solutions and determinant of linear system matrices in each dimension via Tikhonov regularization method.}
    \label{tab:coord_avg_rel_errors_ridge}
\end{table}

\newpage

\section{Additional Tables for Parallelized Experiments}\label{app:tables2}

\noindent\textbf{Zero-sum Games with Small Learning Rates with Parallelization}

\begin{table}[H]
    \centering
    % \resizebox{1\textwidth}{!}{%
    \begin{tabular}{c|cc|c|c|c}
        \toprule
        \textbf{Model:} & \multicolumn{2}{c}{\textbf{Section \ref{sec:ModelForSimulation}}} & \textbf{Section} \ref{sec:EconomicModel} & \textbf{Section} \ref{sec:FlModel} & \\
        \hline
        \textbf{Dimension} & \textbf{Avg Rel Err of $\boldsymbol{x^{*}}$} & \textbf{Avg Rel Err in $\boldsymbol{y^*}$} & \textbf{Avg Rel Err in $\boldsymbol{x^{*}}$} & \textbf{Avg Rel Err in $\boldsymbol{x^{*}}$} & $\boldsymbol{\textbf{det} (A_{par}})$ \\
        \hline
        \midrule
        1  & 2.76E-14 & 5.21E-15 & 2.76E-14 & 2.76E-14 & 8.20E-01 \\
        2  & 6.00E-14 & 5.77E-14 & 6.00E-14 & 6.00E-14 & 4.03E-01 \\
        3  & 1.81E-13 & 3.04E-13 & 1.81E-13 & 1.81E-13 & 5.23E-01 \\
        4  & 3.10E-13 & 2.34E-13 & 3.10E-13 & 3.10E-13 & 3.30E+00 \\
        5  & 3.56E-13 & 3.60E-13 & 3.56E-13 & 3.56E-13 & 4.71E+01 \\
        6  & 6.22E-13 & 7.69E-13 & 6.22E-13 & 6.22E-13 & 1.59E+02 \\
        7  & 1.20E-12 & 1.27E-12 & 1.20E-12 & 1.20E-12 & 1.39E+03 \\
        8  & 2.68E-12 & 2.84E-12 & 2.68E-12 & 2.68E-12 & 2.73E+04 \\
        9  & 3.26E-12 & 2.56E-12 & 3.26E-12 & 3.26E-12 & 8.58E+06 \\
        10 & 1.68E-12 & 1.78E-12 & 1.68E-12 & 1.68E-12 & 3.33E+06 \\
        11 & 7.89E-12 & 7.39E-12 & 7.89E-12 & 7.89E-12 & 1.03E+08 \\
        12 & 4.30E-12 & 3.28E-12 & 4.30E-12 & 4.30E-12 & 2.66E+09 \\
        13 & 2.08E-11 & 2.03E-11 & 2.08E-11 & 2.08E-11 & 3.56E+11 \\
        14 & 1.84E-12 & 5.05E-12 & 1.84E-12 & 1.84E-12 & 1.14E+14 \\
        15 & 7.44E-11 & 1.36E-10 & 7.44E-11 & 7.44E-11 & 7.62E+14 \\
        16 & 7.86E-11 & 1.27E-10 & 7.86E-11 & 7.86E-11 & 6.80E+16 \\
        17 & 4.27E-12 & 3.16E-12 & 4.27E-12 & 4.27E-12 & 6.45E+18 \\
        18 & 5.28E-12 & 5.51E-12 & 5.28E-12 & 5.28E-12 & 1.12E+20 \\
        19 & 8.66E-11 & 4.91E-11 & 8.66E-11 & 8.66E-11 & 1.03E+22 \\
        20 & 3.02E-11 & 2.58E-11 & 3.02E-11 & 3.02E-11 & 7.02E+23 \\
        \bottomrule
    \end{tabular}
    \caption{Zero-sum games with small learning rates: relative errors of NE solutions and average determinants of linear system matrices in each dimension, captured by solving $A_{system}x = b$ in zero-sum games via a parallelization method.}
    % \label{tab:zero_par_nonviolation_avg_rel_errors_rt}
\end{table}

\begin{table}[H]
\centering
\resizebox{1\textwidth}{!}{%
\begin{tabular}{c|*{5}{>{\raggedleft\arraybackslash}p{2.5cm}}|c}
    \toprule
    & \multicolumn{5}{c|}{\textbf{Tolerance Level for Approximation in Time-Average Convergence (\%)}} & \\ % Centered header
    \textbf{Instance} & \textbf{10} & \textbf{5} & \textbf{1} & \textbf{0.5} & \textbf{0.1} & \textbf{Parallelization} \\
    \hline
    \midrule
    1  & 3073 & 3241 & 6842 & 10208 & 34148 & 14 \\
    2  & 454  & 497  & 1786 & 3067  & 10535 & 14 \\
    3  & 1633 & 1678 & 6967 & 8713  & 24466 & 14 \\
    4  & 425  & 569  & 2365 & 3683  & 10042 & 14 \\
    5  & 133  & 548  & 1390 & 2233  & 5164  & 14 \\
    6  & 330  & 357  & 2451 & 3162  & 9516  & 14 \\
    7  & 3085 & 3184 & 3286 & 3301  & 3315  & 14 \\
    8  & 1125 & 1177 & 1223 & 2458  & 3703  & 14 \\
    9  & 607  & 1174 & 1214 & 3669  & 9793  & 14 \\
    10 & 530  & 1883 & 4794 & 4801  & 25972 & 14 \\
    11 & 678  & 695  & 2190 & 3156  & 11430 & 14 \\
    12 & 153  & 163  & 824  & 828   & 3819  & 14 \\
    13 & 730  & 776  & 2425 & 4911  & 20442 & 14 \\
    14 & 447  & 465  & 1681 & 2172  & 6544  & 14 \\
    15 & 11056 & 11145 & 11254 & 56580 & 243743 & 14 \\
    16 & 414  & 865  & 1302 & 3489  & 8309  & 14 \\
    17 & 1405 & 1442 & 2960 & 4459  & 8949  & 14 \\
    18 & 7506 & 7653 & 7960 & 16000 & 24113 & 14 \\
    19 & 664  & 693  & 1448 & 5084  & 7282  & 14 \\
    20 & 2227 & 2304 & 2392 & 2407  & 4827  & 14 \\
    21 & 4586 & 4719 & 4862 & 4884  & 4904  & 14 \\
    22 & 470  & 486  & 1005 & 1508  & 4036  & 14 \\
    23 & 479  & 996  & 1577 & 2616  & 11033 & 14 \\
    24 & 270  & 768  & 1337 & 2386  & 8776  & 14 \\
    25 & 189  & 260  & 1372 & 1638  & 9325  & 14 \\
    26 & 1167 & 1250 & 3842 & 8920  & 46128 & 14 \\
    27 & 717  & 748  & 3100 & 3899  & 12513 & 14 \\
    28 & 379  & 838  & 1252 & 1257  & 2524  & 14 \\
    29 & 10229 & 10537 & 10861 & 10911 & 10954 & 14 \\
    30 & 509  & 532  & 802  & 2418  & 8055  & 14 \\
    \hline
    \midrule
    % \rowcolor{blue!20}
    Average & 1856 & 2055 & 3225 & 6161 & 19812 & 14 \\
    % \rowcolor{red!20}
    Max     & 11056 & 11145 & 11254 & 56580 & 243743 & 14 \\
    Min     & 133 & 163 & 802 & 828 & 2524 & 14 \\
    \bottomrule
\end{tabular}}
\caption{Time-average convergence vs. model in Section \ref{sec:ModelForSimulation} via parallelization method: the number of iterations needed to approximate NE in dimension 7.}
\label{tab:zero_nonviolation_num_iters_updated_comparison_time_avg}
\end{table}

\begin{table}[H]
\centering
% \resizebox{1\textwidth}{!}{%
\begin{tabular}{c|*{4}{>{\raggedleft\arraybackslash}p{2cm}}}
    \toprule
    & \multicolumn{4}{c}{\textbf{Rel Err of $x^*$ via Time-Average Convergence}} \\ % Centered header
    \textbf{Instance} & \textbf{1 minute} & \textbf{3 minutes} & \textbf{5 minutes} & \textbf{10 minutes}\\
    \hline
    \midrule
    1  & 2.32E-04 & 8.30E-05 & 2.58E-06 & 3.88E-06 \\
    2  & 1.38E-06 & 4.83E-07 & 3.28E-07 & 1.10E-07 \\
    3  & 1.81E-05 &  1.56E-06 &  5.31E-06 & 2.12E-06 \\
    4  & 3.59E-06 & 1.69E-06 & 9.02E-07 & 3.10E-07 \\
    5  & 3.11E-06 & 7.11E-07 & 3.99E-07 & 1.20E-07 \\
    6  & 5.10E-06 &  4.95E-07 &  1.41E-06 & 6.47E-07 \\
    7  & 7.89E-05 & 1.94E-05 & 6.72E-06 & 6.54E-06 \\
    8  & 1.46E-05 & 3.34E-06 &  4.15E-07 &  1.90E-06 \\
    9  & 4.68E-06 &  5.57E-07 &  8.03E-07 & 3.94E-07 \\
    10 & 1.79E-05 & 8.16E-06 & 4.44E-06 & 9.33E-07 \\
    11 & 2.91E-06 & 8.18E-07 & 4.99E-07 & 3.03E-07 \\
    12 & 3.53E-06 & 1.56E-06 & 4.68E-07 & 4.51E-07 \\
    13 & 2.17E-06 & 1.28E-05 & 7.82E-07 & 8.85E-07 \\
    14 & 1.68E-06 & 5.86E-07 & 1.02E-07 & 1.49E-07 \\
    15 & 2.30E-05 & 2.20E-05 & 9.40E-06 & 8.15E-07 \\
    16 & 9.83E-06 & 6.13E-07 &  3.28E-07 &  1.07E-06 \\
    17 & 7.77E-06 & 3.36E-06 & 2.55E-05 & 1.82E-06 \\
    18 & 1.54E-03 & 5.79E-04 &  4.19E-05 &  1.63E-04 \\
    19 & 2.21E-05 & 2.41E-06 & 1.29E-06 & 5.74E-07 \\
    20 & 1.15E-04 & 1.55E-05 & 4.05E-06 & 1.26E-06 \\
    21 & 1.09E-05 & 1.14E-05 &  4.22E-06 &  9.98E-06 \\
    22 & 1.40E-06 & 7.12E-06 & 2.10E-07 & 1.92E-07 \\
    23 & 6.31E-06 & 4.75E-06 & 1.94E-06 & 1.19E-06 \\
    24 & 1.11E-05 & 7.30E-07 & 1.84E-06 & 8.96E-07 \\
    25 & 2.39E-06 & 5.60E-07 & 6.75E-07 & 3.30E-07 \\
    26 & 6.01E-05 & 3.42E-06 &  2.22E-06 &  5.81E-06 \\
    27 & 6.21E-05 & 1.07E-05 & 3.96E-07 & 6.25E-06 \\
    28 & 7.35E-06 & 1.00E-06 & 1.07E-06 & 7.74E-07 \\
    29 & 1.46E-04 & 8.03E-05 & 2.10E-05 & 3.49E-06 \\
    30 & 5.73E-06 & 2.07E-06 &  4.82E-07 &  6.38E-07 \\ 
    \hline
    \midrule
    Average & 8.08E-05 & 2.93E-05 & 4.72E-06 & 7.23E-06 \\ 
    Max     & 1.38E-06 & 4.83E-07 & 1.02E-07 & 1.10E-07 \\ 
    Min     & 1.54E-03 & 5.79E-04	& 4.19E-05 & 1.63E-04 \\ 
    \bottomrule
\end{tabular}
\caption{Zero-sum games with small learning rates: average relative error of time-averaged NE solutions in dimension 7 via \ref{eqn:AltGD} for 1, 3, 5, and 10 minutes.}
\label{tab:time_avg_convergence_dim_7_total}
\end{table}

\newpage

\noindent\textbf{Zero-sum Games with Large Learning Rates with Parallelization}

\begin{table}[!ht]
    \centering
    \resizebox{1\textwidth}{!}{%
    \begin{tabular}{c|cc|c|c|c|c}
        \toprule
        \textbf{Model}: & \multicolumn{2}{c|}{\textbf{Section \ref{sec:ModelForSimulation}}} & \textbf{Section} \ref{sec:EconomicModel} & \textbf{Section} \ref{sec:FlModel} & \\
        \hline 
        \textbf{Dimension} & \textbf{Avg Rel Err of $\boldsymbol{x^{*}}$} & \textbf{Avg Rel Err in $\boldsymbol{y^*}$} & \textbf{Avg Rel Err in $\boldsymbol{x^{*}}$} & \textbf{Avg Rel Err in $\boldsymbol{x^{*}}$} & $\boldsymbol{\textbf{det} (A_{system_1}})$ & $\boldsymbol{\textbf{cond} (A_{system_1}})$ \\
        \hline
        \midrule
        1  & 1.85E$-$14 & 1.89E$-$14 & 1.85E$-$14 & 1.85E$-$14 & 1.87E+03 & 2.85E+01 \\
        2  & 2.84E$-$13 & 7.88E$-$13 & 2.84E$-$13 & 2.84E$-$13 & 2.45E+06 & 1.08E+03 \\
        3  & 3.94E$-$12 & 2.14E$-$12 & 3.94E$-$12 & 3.94E$-$12 & 4.31E+10 & 6.12E+03 \\
        4  & 8.91E$-$12 & 7.14E$-$12 & 8.91E$-$12 & 8.91E$-$12 & 3.01E+15 & 1.69E+04 \\
        5  & 1.14E$-$11 & 9.96E$-$12 & 1.14E$-$11 & 1.14E$-$11 & 9.34E+18 & 1.52E+05 \\
        6  & 1.41E$-$11 & 2.39E$-$11 & 1.41E$-$11 & 1.41E$-$11 & 6.68E+23 & 5.39E+04 \\
        7  & 7.81E$-$11 & 3.18E$-$11 & 7.81E$-$11 & 7.81E$-$11 & 1.04E+29 & 1.36E+05 \\
        8  & 1.05E$-$10 & 1.10E$-$10 & 1.05E$-$10 & 1.05E$-$10 & 1.11E+34 & 4.46E+05 \\
        9  & 1.06E$-$10 & 1.15E$-$10 & 1.06E$-$10 & 1.06E$-$10 & 8.76E+39 & 5.59E+05 \\
        10 & 8.10E$-$11 & 6.18E$-$11 & 8.10E$-$11 & 8.10E$-$11 & 1.70E+46 & 1.19E+06 \\
        11 & 5.19E$-$10 & 5.43E$-$10 & 5.19E$-$10 & 5.19E$-$10 & 6.40E+51 & 3.05E+06 \\
        12 & 3.32E$-$09 & 4.67E$-$09 & 3.32E$-$09 & 3.32E$-$09 & 1.14E+58 & 9.11E+06 \\
        13 & 2.90E$-$10 & 1.61E$-$10 & 2.90E$-$10 & 2.90E$-$10 & 2.23E+63 & 6.72E+05 \\
        14 & 2.92E$-$10 & 3.79E$-$10 & 2.92E$-$10 & 2.92E$-$10 & 9.19E+70 & 5.38E+05 \\
        15 & 9.14E$-$10 & 1.05E$-$09 & 9.14E$-$10 & 9.14E$-$10 & 1.12E+76 & 3.44E+06 \\
        16 & 1.21E$-$09 & 4.16E$-$09 & 1.21E$-$09 & 1.21E$-$09 & 2.44E+82 & 6.55E+06 \\
        17 & 1.72E$-$09 & 5.20E$-$10 & 1.72E$-$09 & 1.72E$-$09 & 1.37E+89 & 1.40E+06 \\
        18 & 5.57E$-$10 & 6.13E$-$10 & 5.57E$-$10 & 5.57E$-$10 & 4.38E+96 & 2.17E+06 \\
        19 & 2.19E$-$08 & 7.16E$-$09 & 2.19E$-$08 & 2.19E$-$08 & 9.72E+101 & 1.03E+07 \\
        20 & 8.33E$-$10 & 1.38E$-$09 & 8.33E$-$10 & 8.33E$-$10 & 6.33E+108 & 4.34E+06 \\
        \bottomrule
    \end{tabular}}
    \caption{Zero-sum game with large learning rates: average relative error of NE solutions with determinant and condition number of linear system matrices in each dimension via parallelization method.}
    \label{tab:zero_avg_rel_errors_parallel_violation}
\end{table}

\begin{table}[H] % Use [H] to force the table placement
    \centering
    % \resizebox{1\textwidth}{!}{%
    \begin{tabular}{c|cc|c|c}
        \toprule
        \textbf{Model:} & \multicolumn{2}{c}{\textbf{Section \ref{sec:ModelForSimulation}}} & \textbf{Section} \ref{sec:EconomicModel} & \textbf{Section} \ref{sec:FlModel} \\
        \hline
        \textbf{Instance} & \textbf{Avg Rel Err of $\boldsymbol{x^{*}}$} & \textbf{Avg Rel Err in $\boldsymbol{y^*}$} & \textbf{Avg Rel Err in $\boldsymbol{x^{*}}$} & \textbf{Avg Rel Err in $\boldsymbol{x^{*}}$} \\
        \hline
        \midrule
        1  & 1.65E$-$09 & 4.31E$-$09 & 1.65E$-$09 & 1.65E$-$09 \\
        2  & 1.18E$-$09 & 4.45E$-$10 & 1.18E$-$09 & 1.18E$-$09 \\
        3  & 2.06E$-$10 & 8.35E$-$10 & 2.06E$-$10 & 2.06E$-$10 \\
        4  & 7.74E$-$10 & 4.43E$-$10 & 7.74E$-$10 & 7.74E$-$10 \\
        5  & 3.99E$-$10 & 3.23E$-$10 & 3.99E$-$10 & 3.99E$-$10 \\
        6  & 3.53E$-$10 & 3.73E$-$10 & 3.53E$-$10 & 3.53E$-$10 \\
        7  & 3.32E$-$10 & 4.04E$-$10 & 3.32E$-$10 & 3.32E$-$10 \\
        8  & 2.14E$-$10 & 6.73E$-$10 & 2.14E$-$10 & 2.14E$-$10 \\
        9  & 1.47E$-$09 & 1.16E$-$10 & 1.47E$-$09 & 1.47E$-$09 \\
        10 & 4.04E$-$10 & 2.57E$-$10 & 4.04E$-$10 & 4.04E$-$10 \\
        11 & 4.84E$-$10 & 2.25E$-$11 & 4.84E$-$10 & 4.84E$-$10 \\
        12 & 1.12E$-$09 & 1.13E$-$09 & 1.12E$-$09 & 1.12E$-$09 \\
        13 & 2.55E$-$10 & 7.58E$-$10 & 2.55E$-$10 & 2.55E$-$10 \\
        14 & 7.09E$-$10 & 1.11E$-$09 & 7.09E$-$10 & 7.09E$-$10 \\
        15 & 1.16E$-$10 & 1.27E$-$11 & 1.16E$-$10 & 1.16E$-$10 \\
        16 & 2.77E$-$09 & 9.15E$-$10 & 2.77E$-$09 & 2.77E$-$09 \\
        17 & 2.31E$-$10 & 2.87E$-$10 & 2.31E$-$10 & 2.31E$-$10 \\
        18 & 9.02E$-$10 & 2.35E$-$10 & 9.02E$-$10 & 9.02E$-$10 \\
        19 & 3.50E$-$11 & 6.55E$-$11 & 3.50E$-$11 & 3.50E$-$11 \\
        20 & 5.28E$-$10 & 4.80E$-$10 & 5.28E$-$10 & 5.28E$-$10 \\
        21 & 4.33E$-$11 & 4.86E$-$11 & 4.33E$-$11 & 4.33E$-$11 \\
        22 & 8.17E$-$10 & 2.41E$-$09 & 8.17E$-$10 & 8.17E$-$10 \\
        23 & 1.11E$-$10 & 7.28E$-$11 & 1.11E$-$10 & 1.11E$-$10 \\
        24 & 1.04E$-$10 & 8.21E$-$11 & 1.04E$-$10 & 1.04E$-$10 \\
        25 & 4.12E$-$10 & 1.75E$-$09 & 4.12E$-$10 & 4.12E$-$10 \\
        26 & 2.79E$-$10 & 1.23E$-$10 & 2.79E$-$10 & 2.79E$-$10 \\
        27 & 4.60E$-$11 & 7.48E$-$11 & 4.60E$-$11 & 4.60E$-$11 \\
        28 & 1.77E$-$10 & 1.14E$-$10 & 1.77E$-$10 & 1.77E$-$10 \\
        29 & 1.72E$-$10 & 3.65E$-$10 & 1.72E$-$10 & 1.72E$-$10 \\
        30 & 4.30E$-$10 & 1.57E$-$10 & 4.30E$-$10 & 4.30E$-$10 \\
        \bottomrule
    \end{tabular}
    \caption{Zero-sum games with large learning rates: relative errors of NE solutions in dimension 18 across 30 instances, approximated via parallelization method.}
    \label{tab:zero_parallel_violated_rel_errors_dim18}
\end{table}

\begin{table}[H]
    \centering
    \resizebox{1\textwidth}{!}{%
    \begin{tabular}{c|cc|c|c|c}
        \toprule
        \textbf{Model:} & \multicolumn{2}{c}{\textbf{Section \ref{sec:ModelForSimulation}}} & \textbf{Section} \ref{sec:EconomicModel} & \textbf{Section} \ref{sec:FlModel} & \\
        \hline
        \textbf{Dimension} & \textbf{Avg Rel Err of $\boldsymbol{x^{*}}$} & \textbf{Avg Rel Err in $\boldsymbol{y^*}$} & \textbf{Avg Rel Err in $\boldsymbol{x^{*}}$} & \textbf{Avg Rel Err in $\boldsymbol{x^{*}}$} & $\boldsymbol{\textbf{det} (A_{par}})$ \\
        \hline
        \midrule
        1  & 1.49E-14 & 1.72E-14 & 1.49E-14 & 1.27E-14 & 6.83E-01 \\
        2  & 3.80E-11 & 5.25E-11 & 3.80E-11 & 3.96E-13 & 7.67E-01 \\
        3  & 2.50E-12 & 2.60E-12 & 2.50E-12 & 3.28E-12 & 1.79E+00 \\
        4  & 9.22E-14 & 8.66E-14 & 9.22E-14 & 1.65E-11 & 2.86E+00 \\
        5  & 3.58E-13 & 2.69E-13 & 3.58E-13 & 1.92E-09 & 2.29E+01 \\
        6  & 5.13E-12 & 4.22E-12 & 5.13E-12 & 4.08E-11 & 1.87E+02 \\
        7  & 3.14E-13 & 4.02E-13 & 3.14E-13 & 1.60E-10 & 6.22E+02 \\
        8  & 5.44E-11 & 2.73E-11 & 5.44E-11 & 6.04E-10 & 2.07E+04 \\
        9  & 2.80E-12 & 2.09E-12 & 2.80E-12 & 1.08E-09 & 5.05E+05 \\
        10 & 1.62E-12 & 1.22E-12 & 1.62E-12 & 1.27E-08 & 1.00E+07 \\
        11 & 3.74E-12 & 4.29E-12 & 3.74E-12 & 2.29E-09 & 2.30E+08 \\
        12 & 3.45E-11 & 4.17E-11 & 3.45E-11 & 7.20E-09 & 5.59E+09 \\
        13 & 1.08E-11 & 8.62E-12 & 1.08E-11 & 1.70E-09 & 2.65E+11 \\
        14 & 7.19E-12 & 5.06E-12 & 7.19E-12 & 1.02E-08 & 3.01E+12 \\
        15 & 6.78E-12 & 7.27E-12 & 6.78E-12 & 1.02E-09 & 1.02E+15 \\
        16 & 5.61E-12 & 6.56E-12 & 5.61E-12 & 7.20E-07 & 1.37E+17 \\
        17 & 2.00E-11 & 2.11E-11 & 2.00E-11 & 6.45E-09 & 9.65E+17 \\
        18 & 1.03E-11 & 1.88E-11 & 1.03E-11 & 4.73E-08 & 1.05E+20 \\
        19 & 1.32E-11 & 1.93E-11 & 1.32E-11 & 5.22E-08 & 4.68E+22 \\
        20 & 2.18E-11 & 2.12E-11 & 2.18E-11 & 1.19E-08 & 3.82E+23 \\
        \bottomrule
    \end{tabular}}
    \caption{Coordination game: average relative error of NE solution and determinant of linear system matrices in each dimension derived via parallelization method.}
    \label{tab:coord_parallel_avg_rel_errors}
\end{table}

\begin{table}[H] % Use [H] to force the table placement
    \centering
    % \resizebox{1\textwidth}{!}{%
    \begin{tabular}{c|cc|c|c}
        \toprule
        \textbf{Model:} & \multicolumn{2}{c}{\textbf{Section \ref{sec:ModelForSimulation}}} & \textbf{Section} \ref{sec:EconomicModel} & \textbf{Section} \ref{sec:FlModel} \\
        \hline
        \textbf{Instance} & \textbf{Avg Rel Err of $\boldsymbol{x^{*}}$} & \textbf{Avg Rel Err in $\boldsymbol{y^*}$} & \textbf{Avg Rel Err in $\boldsymbol{x^{*}}$} & \textbf{Avg Rel Err in $\boldsymbol{x^{*}}$} \\
        \hline
        \midrule
        1  & 8.70E-12 & 8.42E-12 & 8.70E-12 & 8.70E-12 \\
        2  & 5.58E-13 & 4.51E-13 & 5.58E-13 & 5.58E-13 \\
        3  & 3.16E-12 & 3.06E-12 & 3.16E-12 & 3.16E-12 \\
        4  & 4.08E-11 & 2.00E-11 & 4.08E-11 & 4.08E-11 \\
        5  & 4.07E-13 & 5.72E-13 & 4.07E-13 & 4.07E-13 \\
        6  & 4.32E-13 & 1.29E-13 & 4.32E-13 & 4.32E-13 \\
        7  & 2.13E-12 & 2.07E-12 & 2.13E-12 & 2.13E-12 \\
        8  & 6.67E-13 & 6.66E-13 & 6.67E-13 & 6.67E-13 \\
        9  & 4.94E-12 & 1.45E-12 & 4.94E-12 & 4.94E-12 \\
        10 & 2.03E-10 & 3.82E-11 & 2.03E-10 & 2.03E-10 \\
        11 & 2.35E-12 & 1.60E-11 & 2.35E-12 & 2.35E-12 \\
        12 & 1.31E-12 & 6.45E-13 & 1.31E-12 & 1.31E-12 \\
        13 & 7.20E-13 & 9.59E-13 & 7.20E-13 & 7.20E-13 \\
        14 & 5.18E-12 & 2.46E-12 & 5.18E-12 & 5.18E-12 \\
        15 & 4.04E-13 & 2.41E-12 & 4.04E-13 & 4.04E-13 \\
        16 & 8.67E-12 & 2.70E-12 & 8.67E-12 & 8.67E-12 \\
        17 & 3.18E-13 & 2.78E-13 & 3.18E-13 & 3.18E-13 \\
        18 & 1.12E-10 & 3.43E-11 & 1.12E-10 & 1.12E-10 \\
        19 & 2.02E-12 & 2.23E-12 & 2.02E-12 & 2.02E-12 \\
        20 & 2.90E-12 & 7.51E-12 & 2.90E-12 & 2.90E-12 \\
        21 & 4.70E-13 & 4.70E-13 & 4.70E-13 & 4.70E-13 \\
        22 & 2.78E-12 & 1.16E-12 & 2.78E-12 & 2.78E-12 \\
        23 & 2.35E-11 & 1.15E-11 & 2.35E-11 & 2.35E-11 \\
        24 & 3.75E-12 & 1.95E-12 & 3.75E-12 & 3.75E-12 \\
        25 & 1.99E-13 & 2.96E-13 & 1.99E-13 & 1.99E-13 \\
        26 & 4.60E-10 & 5.94E-10 & 4.60E-10 & 4.60E-10 \\
        27 & 7.38E-12 & 9.23E-12 & 7.38E-12 & 7.38E-12 \\
        28 & 1.84E-12 & 3.22E-12 & 1.84E-12 & 1.84E-12 \\
        29 & 2.12E-12 & 5.32E-12 & 2.12E-12 & 2.12E-12 \\
        30 & 2.11E-12 & 2.26E-12 & 2.11E-12 & 2.11E-12 \\
        \bottomrule
    \end{tabular}
    \caption{Coordination games: relative errors of NE solutions in dimension 20 across 30 instances, approximated via parallelization method.}
    \label{tab:coord_parallel_rel_errors_dim20}
\end{table}

\end{document}